\renewcommand\vec[1]{\overline{#1}}
\begin{document}

\author{Pierre Pradic\rsuper{{a,b}}}
\address{\lsuper{a}Univ Lyon, EnsL, UCBL, CNRS,  LIP, F-69342, LYON Cedex 07, France}
\email{\{pierre.pradic,colin.riba\}@ens-lyon.fr}
\address{\lsuper{b}University of Warsaw, Faculty of Mathematics, Informatics and Mechanics}

\author{Colin Riba\rsuper{a}}

\title[A Curry-Howard Approach to Church's Synthesis]
{A Curry-Howard Approach to Church's Synthesis\rsuper*}
\titlecomment{\lsuper*This work was partially supported by
the ANR-14-CE25-0007 - RAPIDO and the ANR-BLANC-SIMI-2-2011 - RECRÉ. 
This paper is an extended version of the conference article~\cite{pr17fscd}.}

\maketitle

\begin{abstract}
Church's synthesis problem asks whether there exists a
finite-state stream transducer satisfying a given input-output specification.
For specifications written in Monadic Second-Order Logic ($\MSO$) over infinite words,
Church's synthesis can theoretically be solved algorithmically
using automata and games.
We revisit Church's synthesis via the Curry-Howard correspondence by
introducing $\SMSO$,
an intuitionistic variant of $\MSO$ over infinite words,
which is shown to be sound and complete \wrt\@ synthesis
thanks to an automata-based realizability model.
\end{abstract}

\section{Introduction}%
\label{sec:intro}

\noindent
A stream function $F : \Sigma^\omega \to \Gamma^\omega$
is \emph{synchronous} (or \emph{causal}) if it can produce
a prefix of length $n$ of its output from a prefix of length $n$
of its input:
\[
\tag{for $\seq,\seqbis \in \Sigma^\omega$} 
\seq(0). \cdots .\seq(n-1) = \seqbis(0). \cdots .\seqbis(n-1) 
\quad\imp\quad
F(\seq)(n) = F(\seqbis)(n)
\]

\noindent
A synchronous function is \emph{finite-state} if it is induced by
a deterministic letter-to-letter
stream transducer
(or \emph{deterministic Mealy machine}, DMM).
Church's synthesis~\cite{church57ssisl} consists in the automatic extraction
of DMMs from input-output specifications,
typically presented as
closed formulae
of the form
\begin{equation}
\label{eq:intro:spec}
\forall {X \in \Sigma^\omega}\, \exists {Y \in \Gamma^\omega}\, \varphi(X;Y)
\end{equation}
where $\varphi$ is a formula of
some subsystem of \emph{Monadic Second-Order Logic} ($\MSO$) over $\omega$-words.
A specification as in (\ref{eq:intro:spec})
is realized in the sense of Church by a (finite-state)
synchronous $F : \Sigma^\omega \to \Gamma^\omega$ when
$\varphi(\seq,F(\seq))$ holds for all $\seq \in \Sigma^\omega$.

$\MSO$ over $\omega$-words is a decidable logic
(B{\"u}chi's Theorem~\cite{buchi62lmps})
which subsumes
logics used in verification such as $\LTL$
(see \eg~\cite{thomas97handbook,pp04book,vw08chapter}).
Traditional approaches to synthesis (see \eg~\cite{thomas08npgi,thomas09fossacs})
are based, via McNaughton's Theorem~\cite{mcnaughton66ic},
on the translation of $\MSO$-formulae to \emph{deterministic}
automata on $\omega$-words
(such as \emph{Muller} or \emph{parity} automata).\footnote{A solution is also
possible via tree automata~\cite{rabin72} (see also~\cite{kpv06cav,thomas09fossacs}).}
These automata are then turned into infinite two-player sequential games
on finite graphs, in which the \emph{Opponent}
($\forall$bélard) plays input letters to which the \emph{Proponent}
($\exists$loïse) replies with output letters.
Solutions to Church's synthesis are then given by
the Büchi-Landweber Theorem~\cite{bl69tams},
which states that in such games, (exactly) one of the two players
has a finite-state winning strategy
(\ie\@ a strategy which only uses a finite memory).

Fully automatic approaches to synthesis suffer from
prohibitively high computational costs, essentially for the following two reasons.
First, the translation of $\MSO$-formulae
to automata is non-elementary (see \eg~\cite{gtw02alig}),
and McNaughton's Theorem involves
a non-trivial powerset construction
(such as \emph{Safra construction},
see~\eg~\cite{thomas97handbook,gtw02alig,pp04book,vw08chapter}).
Second, similarly as with other automatic verification techniques based on
model checking, the solution of parity games ultimately relies on exhaustive
state exploration.
While they have had (and still have) considerable success for verifying concurrency
properties, such techniques
hardly managed up to now
to give
practical algorithms for the synthesis of large scale systems
(even for fragments of $\LTL$, see~\eg~\cite{bjpps12jcss}).

In this work, we propose a Curry-Howard approach to Church's synthesis.
The Curry-Howard correspondence asserts that, given a suitable
proof system, any proof therein can be interpreted as a program.
This interpretation of proofs as programs
(as well as the soundness of many type systems)
can be formalized using the technique of \emph{realizability},
which tells
how to read a formula from the logic as a specification for a program.
More precisely, realizability can be seen as a relation
between programs (the realizers) and formulae,
usually defined by induction on the latter (see \eg~\cite{su06book,kohlenbach08book}).
Typical clauses state \eg\@ that realizers of conjunctions
$\varphi_1 \land \varphi_2$ are pairs $\pair{\run_1,\run_2}$
consisting of a realizer $\run_1$ of $\varphi_1$
and a realizer $\run_2$ of $\varphi_2$,
and that realizers of existential formulae $\ex X \varphi(X)$
are pairs $\pair{\seq,\run}$ consisting of a witness $\seq$ for the $\exists X$
and a realizer $\run$
of $\varphi(\seq)$.

Our starting point is the fact that $\MSO$
on $\omega$-words can be completely axiomatized
as a subsystem of second-order Peano arithmetic~\cite{siefkes70lnm}
(see also~\cite{riba12ifip}).
From the classical axiomatization of $\MSO$,
we derive an intuitionistic
variant $\SMSO$
(for \emph{Synchronous} $\MSO$).
$\SMSO$ comes equipped with an extraction procedure 
which is sound and complete \wrt\@ 
Church's synthesis:
proofs in $\SMSO$ of formulae of the form $\ex {\vec Y} \varphi(\vec X;\vec Y)$
(with only $\vec X$ free)
can be translated to DMMs
and such proofs exist for all solvable instances of Church's synthesis.
Our approach is \emph{Safraless} in the sense that while
we do rely on McNaughton's Theorem for the \emph{correctness}
of the extraction procedure
(\ie\@ for the ``\emph{Adequacy Lemma}'' of realizability),
we never have to actually use McNaughton's Theorem
when extracting DMMs from $\SMSO$-proofs
(so that the extracted DMMs never involve determinization of automata on $\omega$-words).%
\footnote{On the other hand, usual \emph{Safraless} approaches to synthesis use
McNaughton Theorem essentially to bound the search space for potential finite-state
realizers, see \eg~\cite{kv05focs,kpv06cav,fjr11fmsd}.}

The paper is organized as follows.
We first recall in~\S\ref{sec:prelim} some background on $\MSO$ and Church's synthesis.
Our intuitionistic system $\SMSO$ is then presented in~\S\ref{sec:smso}.
We provide in \S\ref{sec:synch} some technical material
as well as detailed examples on the representation
of
DMMs
in $\MSO$,
and~\S\ref{sec:real} presents our realizability model.
Finally, in~\S\ref{sec:fib} we rephrase the realizability model
in terms of \emph{indexed categories} (see \eg~\cite{jacobs01book}),
an essential step for further generalizations.

We also have included three appendices.
They give detailed arguments and constructions that we
wished not
to put in the body of the paper, either because they are
necessary but unsurprising technicalities
(App.\ \ref{sec:app:prelim} and~\ref{sec:app:real:atom}),
or because they concern important but side results,
proved with different techniques than those emphasized in this paper
(App.\ \ref{sec:app:synch:synchbounded}).

\section{Church's Synthesis and \texorpdfstring{$\MSO$}{MSO} on Infinite Words}%
\label{sec:prelim}

\subsection{Notations}%
\label{sec:prelim:not}
Alphabets (denoted $\Sigma,\Gamma,\etc$) are finite non-empty sets.
Concatenation of words $s,t$ is denoted $s.t$,
and $\es$ is the empty word.
We use the vectorial notation both for words and finite sequences,
so that
\eg\@ $\vec\seq$ denotes a finite sequence $\seq_1,\dots,\seq_n$
and $\vec{\al a}$ denotes a word $\al a_1.\cdots.\al a_n \in \Seq\Sigma$. 
Given an $\omega$-word (or stream) $\seq \in \Sigma^\omega$ and $n \in \NN$
we write $\seq\restr n$ for the finite word $\seq(0).\cdots.\seq(n-1) \in \Seq\Sigma$. 
For each $k \in \NN$,
we still write $k$ for the function from
$\NN$ to $\two = \{0,1\}$ which takes $n$ to $1$ iff $n = k$.

\subsection{Church's Synthesis and Synchronous Functions}
Church's synthesis consists in the automatic extraction
of deterministic letter-to-letter
stream transducers (or \emph{deterministic Mealy machines})
from input-output specifications
(see \eg~\cite{thomas08npgi}).

\begin{exa}%
\label{ex:prelim:spec}
As a typical specification,
consider,
for a machine which outputs streams $\seqbis \in \two^\omega$
from input streams $\seq \in \two^\omega$,
the 
behavior
(from~\cite{thomas08npgi})
expressed by
\[
\Phi(\seq,\seqbis)
\quad\stackrel{\text{def.}}\iff\quad
\left\{
\begin{array}{l !{\quad\imp\quad} l !{\qquad} l}
  \forall n ( \seq(n) = 1
& \seqbis(n)=1 )
& \text{and}
\\
  \forall n ( \seqbis(n) = 0
& \seqbis(n+1)= 1 )
& \text{and}
\\
\multicolumn{3}{l}{
  ( \exists^\infty n~ \seq(n) = 0 ) \quad\imp\quad
  ( \exists^\infty n~ \seqbis(n) = 0 ) }
\end{array}
\right.
\]

\noindent
In words, the relation
$\Phi(\seq,\seqbis)$ imposes
$\seqbis(n) \in \two$ to be $1$ whenever $\seq(n) \in \two$ is $1$,
$\seqbis$ not be $0$ at two consecutive positions,
and moreover $\seqbis$ to be infinitely often $0$
whenever $\seq$ is infinitely often $0$.
\end{exa}

We are interested in the realization of such specifications
by finite-state deterministic letter-to-letter stream transducers or
(deterministic) Mealy machines.

\begin{defi}[Deterministic Mealy Machine]%
\label{def:mealy}
A \emph{deterministic Mealy machine} (DMM) $\At M$
with input alphabet $\Sigma$ and output alphabet $\Gamma$
(notation $\At M : \Sigma \to \Gamma$)
is given by a finite set of states $Q_{\At M}$
with a distinguished initial state $\init q_{\At M} \in Q_{\At M}$,
and a transition function
$\trans_{\At M} : Q_{\At M} \times \Sigma \to Q_{\At M} \times \Gamma$.

We write
$\trans_{\At M}^o$ for
$\pi_2 \comp \trans_{\At M} : Q_{\At M} \times\Sigma \to \Gamma$
and $\Seq\trans_{\At M}$ for the map $\Seq\Sigma \to Q_{\At M}$
obtained by iterating $\trans_{\At M}$ from the initial state:
$\Seq\trans_{\At M}(\es) \deq \init q_{\At M}$
and
$\Seq\trans_{\At M}(\vec{\al a}.{\al a}) \deq
\pi_1(\trans_{\At M}(\Seq\trans_{\At M}(\vec{\al a}),\al a))$.
\end{defi}

A DMM $\At M : \Sigma \to \Gamma$ induces
a function $F : \Sigma^\omega \to \Gamma^\omega$
obtained by iterating $\trans_{\At M}^o$ along the input:
$F(\seq)(n) = \trans_{\At M}^o(\Seq\trans_{\At M}(\seq\restr n),\seq(n))$.
Hence $F$ can produce a prefix of length $n$ of its output from a
prefix of length $n$ of its input.
These functions are called \emph{synchronous} (or \emph{causal}).

\begin{defi}[Synchronous Function]
A function $F : \Sigma^\omega \to \Gamma^\omega$ is
\emph{synchronous} if for all $n \in \NN$
and all $\seq,\seqbis \in \Sigma^\omega$
we have $F(\seq)\restr n = F(\seqbis) \restr n$
whenever $\seq \restr n = \seqbis \restr n$.
We say that a synchronous function $F$ is \emph{finite-state}
if it is induced by a DMM\@.

We write $F : \Sigma \to_{\Synch} \Gamma$ when $F$ is a synchronous
function $\Sigma^\omega \to \Gamma^\omega$,
and $F : \Sigma \to_{\Mealy} \Gamma$ when $F$ is finite-state
synchronous.
\end{defi}

\begin{exas}%
\label{ex:prelim:mealy}
\hfill
\begin{enumerate}
\item\label{ex:prelim:mealy:id}
The identity function $\Sigma^\omega \to \Sigma^\omega$
is finite-state synchronous as being
induced by the DMM with state set $\one = \{\unit\}$
and identity transition function
$\trans : (\unit,\al a) \longmapsto (\unit, \al a)$.

\item\label{ex:prelim:mealy:succ}
The DMM depicted in Fig.\ \ref{fig:prelim:mealy}
(left)
induces a synchronous function $F : \two^\omega \to \two^\omega$
such that $F(\seq)(n+1) = 1$ iff $\seq(n) = 1$.

\item\label{ex:prelim:mealy:spec}
The DMM depicted in Fig.\ \ref{fig:prelim:mealy} (right),
taken from~\cite{thomas08npgi},
induces a synchronous function
$F : \two \to_\Mealy \two$
such that $\Phi(\seq,F(\seq))$ holds for all $\seq \in \two^\omega$,
where $\Phi$ is the relation of Ex.~\ref{ex:prelim:spec}.

\item\label{ex:prelim:mealy:pred}
Synchronous functions are obviously continuous
(taking the product topology on $\Sigma^\omega$ and $\Gamma^\omega$,
with $\Sigma,\Gamma$ discrete), but there are continuous functions
which are not synchronous,
for instance the function $P : \two^\omega \to \two^\omega$
such that $P(\seq)(n) = 1$ iff $\seq(n+1) = 1$.
\end{enumerate}
\end{exas}

\begin{figure}[t!]
\begin{center}
\begin{tabular}{c !{\qquad\quad} c} 
%

\begin{tikzpicture}[->,node distance=3.5cm,every state/.style={inner sep=2}]
  \node[state,initial,initial above,initial text=] (A) {$0$} ;
  \node[state] (B) [right of=A] {$1$};

\path (A) edge [loop left] node[left, style={font=\footnotesize}]
  {$0|0$} (A) ;

\path
  (A) edge [bend left] node[below, style={font=\footnotesize}] {$1|0$} (B) ;

\path (B) edge [bend left] node[above,style={font=\footnotesize}] {$0|1$} (A) ;

\path (B) edge [loop right] node[right, style={font=\footnotesize}]
  {$1|1$} (B) ;
\end{tikzpicture}

&

\begin{tikzpicture}[->,node distance=3.5cm]
  \node[state,initial,initial above,initial text=] (A) {$0$} ;
  \node[state] (B) [right of=A] {$1$};

\path
  (A) edge [bend left] node[below, style={font=\footnotesize}] {$0|1 \,,\, 1|1$} (B) ;

\path (B) edge [loop right] node[right,style={font=\footnotesize}] {$1|1$} (A) ;
\path (B) edge [bend left] node[above,style={font=\footnotesize}] {$0|0$} (A) ;
\end{tikzpicture}
\end{tabular}
\end{center}
\caption{Examples of DMMs
(where a transition $\al a|\al b$ outputs $\al b$ from input $\al a$).%
\label{fig:prelim:mealy}}
\end{figure}

\noindent
For the definition and adequacy of our realizability interpretation
(\S\ref{sec:real}),
it is convenient to note that alphabets and (finite-state)
synchronous functions form a category.

\begin{prop}%
\label{prop:prelim:synch:cat}
Synchronous functions form a category $\Synch$ whose objects
are alphabets and
whose morphisms from $\Sigma$ to $\Gamma$ are the synchronous functions
$\Sigma \to_\Synch \Gamma$.
The identity on $\Sigma$ is the synchronous function of
Ex.\ \ref{ex:prelim:mealy}.(\ref{ex:prelim:mealy:id}), 
and composition is usual function composition.
Moreover, if the $\Synch$-maps
$F : \Sigma \to_\Synch \Gamma$
and
$G : \Gamma \to_\Synch \Delta$
are finite-state,
then so is $G \comp F$.
\end{prop}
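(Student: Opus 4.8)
The plan is to verify the three claims in turn: that $\Synch$ is a category, that the identity of Example~\ref{ex:prelim:mealy}.(\ref{ex:prelim:mealy:id}) is the identity morphism, and that finite-state synchronous functions are closed under composition. First I would check that synchronous functions are closed under composition, which is the only non-immediate part of the category structure. Given $F : \Sigma \to_\Synch \Gamma$ and $G : \Gamma \to_\Synch \Delta$, I must show $G \comp F$ is synchronous, i.e.\ that $\seq \restr n = \seqbis \restr n$ implies $(G \comp F)(\seq) \restr n = (G \comp F)(\seqbis) \restr n$. By synchronicity of $F$, $\seq \restr n = \seqbis \restr n$ gives $F(\seq) \restr n = F(\seqbis) \restr n$, and then synchronicity of $G$ applied to the streams $F(\seq), F(\seqbis)$ gives $G(F(\seq)) \restr n = G(F(\seqbis)) \restr n$, as desired. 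Associativity and the identity laws then hold because composition is ordinary function composition and the function of Example~\ref{ex:prelim:mealy}.(\ref{ex:prelim:mealy:id}) is literally the set-theoretic identity on $\Sigma^\omega$; one just has to note in passing that it is indeed synchronous, which is trivial since $\seq \restr n = \seqbis \restr n$ already says the outputs agree on the first $n$ positions.

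Next I would treat the preservation of finite-state-ness, which is the real content of the proposition. Suppose $F$ is induced by a DMM $\At M : \Sigma \to \Gamma$ and $G$ by a DMM $\At N : \Gamma \to \Delta$. The natural candidate for a DMM inducing $G \comp F$ is the ``cascade product'' (sequential composition) $\At N \comp \At M$, with state set $Q_{\At M} \times Q_{\At N}$, initial state $(\init q_{\At M}, \init q_{\At N})$, and transition function that, on a pair $((q,q'),\al a)$, first runs $\At M$ on $\al a$ to get $(p,\al b) = \trans_{\At M}(q,\al a)$, then runs $\At N$ on the produced letter $\al b$ to get $(p',\al c) = \trans_{\At N}(q',\al b)$, and returns $((p,p'),\al c)$. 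This is a well-defined DMM with finitely many states. It remains to check that the synchronous function it induces is exactly $G \comp F$.

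The verification of correctness is a routine induction on $n$. The key lemma is that for every finite word $\vec{\al a} \in \Seq\Sigma$, writing the $\Seq\trans$-iteration of the product machine componentwise, one has $\pi_1(\Seq\trans_{\At N \comp \At M}(\vec{\al a})) = \Seq\trans_{\At M}(\vec{\al a})$ and $\pi_2(\Seq\trans_{\At N \comp \At M}(\vec{\al a})) = \Seq\trans_{\At N}(\vec{\al b})$, where $\vec{\al b}$ is the length-$|\vec{\al a}|$ output word produced by running $\At M$ on $\vec{\al a}$ — that is, $\vec{\al b}(k) = \trans_{\At M}^o(\Seq\trans_{\At M}(\vec{\al a}\restr k), \vec{\al a}(k))$, so that $\vec{\al b}(k) = F(\seq)(k)$ whenever $\vec{\al a} = \seq \restr (k{+}1)$. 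Feeding this into the output map of the product machine gives, for an input stream $\seq$, the value $\trans_{\At N}^o(\Seq\trans_{\At N}(F(\seq)\restr n), F(\seq)(n)) = G(F(\seq))(n)$, which is precisely $(G \comp F)(\seq)(n)$. No step here is genuinely hard; the only thing to be careful about is the bookkeeping of which prefix of the $\At M$-output feeds into the $\At N$-component, i.e.\ making sure the synchronicity of $F$ is used to see that the first $n$ outputs of $\At M$ depend only on $\seq \restr n$. I expect the indexing in this cascade-product computation — keeping the two machines' clocks aligned — to be the only place where care is needed, though it is entirely mechanical.
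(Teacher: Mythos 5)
Your proposal is correct and follows essentially the same route as the paper: the same two-step argument that composition preserves synchronicity (apply synchronicity of $F$, then of $G$), and the same cascade-product DMM on $Q_{\At M} \times Q_{\At N}$ for the finite-state part. The only difference is that you spell out the inductive verification that the product machine induces $G \comp F$, which the paper leaves implicit; your sketch of that verification is accurate.
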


\begin{proof}
Since sets and functions form a category, $\Synch$ is a category
as soon as composition of functions preserves synchronicity.
Consider synchronous $G : \Gamma^\omega \to \Delta^\omega$
and $F : \Sigma^\omega \to \Gamma^\omega$.
Assume $\seq,\seqbis \in \Sigma^\omega$ and $n \in \NN$
such that $\seq\restr n = \seqbis \restr n$.
Then since
$F$ is synchronous it follows that $F(\seq)\restr n = F(\seqbis)\restr n$,
and since $G$ is synchronous we deduce
$G(F(\seq))\restr n = G(F(\seqbis))\restr n$,
that is
\[
(G \comp F)(\seq)\restr n ~~=~~ (G \comp F)(\seqbis)\restr n
\]


For the second part of the statement,
assume that $G$ and $F$ are induced respectively by
$\At N : \Gamma \to \Delta$ and $\At M : \Sigma \to \Gamma$.
Then $G \comp F$ is induced by the DMM
\[
(\At K : \Sigma \to \Delta) \quad\deq\quad
\big(
Q_{\At M} \times Q_{\At N} ~,~
(\init q_{\At M} , \init q_{\At N}) ~,~
\trans_{\At K}
\big)
\]

\noindent
whose transition function
\[
  \trans_{\At K}
\quad:\quad
(Q_{\At M} \times Q_{\At N}) \times \Sigma
\quad\longto\quad
(Q_{\At M} \times Q_{\At N}) \times \Delta
\]

\noindent
takes
$((q_{\At M},q_{\At N}) \,,\, \al a)$
to
$((q'_{\At M},q'_{\At N}) \,,\, \al d)$
with
\begin{align*}
  (q'_{\At N} \,,\, \al d)
&\hspace{2ex}:=\hspace{2ex} \trans_{\At N}(q_{\At N},\al b) \\[-1mm]
  (q'_{\At M} \,,\, \al b)
&\hspace{2ex}:=\hspace{2ex} \trans_{\At M}(q_{\At M},\al a)
    \tag*{\qedhere}
\end{align*}
\end{proof}

Proposition~\ref{prop:prelim:synch:cat}
implies that $\Synch$ has a wide subcategory consisting
of \emph{finite-state} functions.

\begin{defi}[The Category $\Mealy$]%
\label{def:prelim:synchcat}
Let $\Mealy$ be the category whose objects are alphabets
and whose morphisms from $\Sigma$ to $\Gamma$
are finite-state synchronous \emph{functions}
$\Sigma \to_\Mealy \Gamma$.
\end{defi}

Note that for $\Mealy$ to be a category
(namely for the associativity and identity laws of composition)
it is essential that $\Mealy$-maps consist of \emph{functions}
rather than \emph{machines}.
The following obvious fact
is useful for our realizability model (\S\ref{sec:real}).

\begin{rem}%
\label{rem:prelim:lift}
Functions $\al f : \Sigma \to \Gamma$ induce
$\Mealy$-maps $\lift{\al f} : \Sigma \to_\Mealy \Gamma$.
\end{rem}

It is also worth noticing that the category
$\Mealy$ has finite products.

\begin{prop}%
\label{prop:prelim:synchcart}
The category $\Mealy$ has finite products.
The product of $\Sigma_1,\dots,\Sigma_n$ (for $n \geq 0$)
is given by the $\Set$-product
$\Sigma_1 \times \cdots \times \Sigma_n$
(so that $\one$ is terminal in $\Mealy$).
\end{prop}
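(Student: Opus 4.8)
The plan is to reduce to the binary case together with the terminal object, and then invoke the standard fact that any category with a terminal object and binary products has all finite products. For the terminal object, note that since $\one = \{\unit\}$ the set $\one^\omega$ is a singleton, so for every alphabet $\Gamma$ there is exactly one function $\Gamma^\omega \to \one^\omega$. It is trivially synchronous and is induced by the one-state DMM with transition $(\unit,\al a) \longmapsto (\unit,\unit)$, hence is a $\Mealy$-map; and it is the \emph{unique} such $\Mealy$-map since, as noted after Definition~\ref{def:prelim:synchcat}, $\Mealy$-maps are functions. So $\one$ is terminal in $\Mealy$.

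For binary products, given alphabets $\Sigma_1,\Sigma_2$ I would equip the $\Set$-product $\Sigma_1 \times \Sigma_2$ with the projections $\pi_i : \Sigma_1 \times \Sigma_2 \to_\Mealy \Sigma_i$ obtained by lifting the set-theoretic projections via Remark~\ref{rem:prelim:lift}; concretely $\pi_i$ acts componentwise on streams. Given $\Mealy$-maps $F_i : \Gamma \to_\Mealy \Sigma_i$ for $i = 1,2$, I would define the candidate pairing $\langle F_1,F_2\rangle : \Gamma^\omega \to (\Sigma_1 \times \Sigma_2)^\omega$ by $\langle F_1,F_2\rangle(\seq)(n) \deq (F_1(\seq)(n),F_2(\seq)(n))$, and check the four required properties in turn. (i) Synchronicity: if $\seq\restr n = \seqbis\restr n$ then $F_i(\seq)\restr n = F_i(\seqbis)\restr n$ for $i = 1,2$, whence $\langle F_1,F_2\rangle(\seq)\restr n = \langle F_1,F_2\rangle(\seqbis)\restr n$. (ii) Finite-stateness: if $F_i$ is induced by a DMM $\At M_i$, then $\langle F_1,F_2\rangle$ is induced by the ``parallel product'' DMM with state set $Q_{\At M_1}\times Q_{\At M_2}$, initial state $(\init q_{\At M_1},\init q_{\At M_2})$, and transition sending $((q_1,q_2),\al a)$ to $((q_1',q_2'),(\al b_1,\al b_2))$ where $(q_i',\al b_i) \deq \trans_{\At M_i}(q_i,\al a)$; verifying that this DMM induces exactly $\langle F_1,F_2\rangle$ is a routine induction on prefixes, entirely parallel to the computation in the proof of Proposition~\ref{prop:prelim:synch:cat}. (iii) The equations $\pi_i \comp \langle F_1,F_2\rangle = F_i$ hold pointwise by construction. (iv) Uniqueness: if $G : \Gamma \to_\Mealy \Sigma_1 \times \Sigma_2$ satisfies $\pi_i \comp G = F_i$ for $i = 1,2$, then for all $\seq$ and $n$ one has $G(\seq)(n) = \big((\pi_1 \comp G)(\seq)(n),(\pi_2 \comp G)(\seq)(n)\big) = (F_1(\seq)(n),F_2(\seq)(n)) = \langle F_1,F_2\rangle(\seq)(n)$, so $G$ and $\langle F_1,F_2\rangle$ coincide as functions, hence as $\Mealy$-maps.

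Finally the general case $n \geq 0$ follows: iterating the binary product yields an $n$-fold product, and since the binary product constructed above is literally the $\Set$-product, its iterate is — up to the evident associativity equality — the $n$-fold $\Set$-product $\Sigma_1 \times \cdots \times \Sigma_n$, with $\one$ the empty product; alternatively one defines directly the $n$-fold parallel product DMM and reruns the argument above verbatim. I do not anticipate any real obstacle here: the only steps requiring genuine care are the induction on prefixes in (ii), showing that the parallel product DMM computes the pairing, and the uniqueness argument (iv), which goes through precisely because morphisms of $\Mealy$ are functions rather than machines.
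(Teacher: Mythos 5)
Your proposal is correct. The paper in fact states Proposition~\ref{prop:prelim:synchcart} without proof, treating it as routine, and your verification is exactly the argument one would expect to fill in: the componentwise pairing is synchronous, it is induced by the parallel-product DMM (the evident analogue of the sequential-composition machine in the proof of Proposition~\ref{prop:prelim:synch:cat}), the projections are lifted set-functions as in Remark~\ref{rem:prelim:lift}, and uniqueness of the pairing holds precisely because $\Mealy$-morphisms are functions rather than machines. Nothing is missing.
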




\begin{figure}[t!]
\[
\begin{array}{c}
\textbf{Atoms:}\qquad
\alpha
\quad\bnf\quad x \Eq y \gs x \Leq y \gs \Succ(x,y) \gs \Zero(x) \gs x \In X 
  \gs \True \gs \False
\\[0.5em]
\begin{array}{r !{\quad} r !{\quad\bnf\quad} l}
  \textbf{Deterministic formulae:}
& \delta,\delta'
& \alpha \gs \delta \land \delta' \gs \lnot \varphi
\\[0.5em]
  \textbf{$\MSO$ formulae:}
& \varphi,\psi
& \delta
  \gs \varphi \land \psi
  \gs \ex x \varphi \gs \ex X \varphi
\end{array}
\end{array}
\]
\caption{The Formulae of $\MSO$ and $\SMSO$.%
\label{fig:mso:form}}
\end{figure}

\subsection{Monadic Second-Order Logic (\texorpdfstring{$\MSO$}{MSO}) on Infinite Words}
We consider a formulation of 
$\MSO$ based on a purely relational two-sorted language,
with a specific choice of atomic formulae.
There is a sort of \emph{individuals}, with variables $x,y,z,\etc$,
and a sort of \emph{(monadic) predicates}, with variables $X,Y,Z,\etc$
Our formulae for $\MSO$, denoted $\varphi,\psi,\etc$,
are given in Fig.\ \ref{fig:mso:form}.
They are defined by mutual induction with the \emph{deterministic formulae}
(denoted $\delta,\delta',\etc$) from atomic formulae ranged over by $\alpha$.

$\MSO$ formulae are interpreted in the standard model $\Std$ of $\omega$-words
as usual.
Individual variables range over natural numbers $n,m,\ldots \in \NN$
and predicate variables range over sets of natural numbers
$\seq,\seqbis,\ldots \in \Po(\NN) \iso \two^\omega$.
The atomic predicates are interpreted as expected: $\Eq$ is equality,
$\In$ is membership,
$\Leq$ is the relation $\leq$ on $\NN$, $\Succ$ is the successor relation,
and $\Zero$ holds on $n$ iff $n=0$.
We write $\Std \models \varphi$ when the closed formula $\varphi$
holds under this interpretation.

We often write $X(x)$ or even $X x$ for $x \In X$.
We also use the following abbreviations.

\begin{nota}%
\label{not:prelim:der}
Given formulae $\varphi$ and $\psi$, we let
\[
\begin{array}{r c l !{\qquad\quad} r c l}
  \varphi \lor \psi
& \deq
& \lnot(\lnot \varphi \land \lnot \psi)

& \all x \varphi
& \deq
& \lnot \ex x \lnot \varphi
\\

  \varphi \limp \psi
& \deq
& \lnot(\varphi \land \lnot \psi)

& \all X \varphi
& \deq
& \lnot \ex X \lnot \varphi
\\

  \varphi \liff \psi
& \deq
& (\varphi \limp \psi) \land (\psi \limp \varphi)

& x \Lt y
& \deq
& \big( x \Leq y ~\land~ \lnot(x \Eq y) \big)
\\

  \exists {x \Leq y}~ \varphi
& \deq
& \ex x(x \Leq y ~\land~ \varphi)

& \forall {x \Leq y}~ \varphi
& \deq
& \all x(x \Leq y ~\limp~ \varphi)
\\

  \exists {x \Lt y}~ \varphi
& \deq
& \ex x(x \Lt y ~\land~ \varphi)

& \forall {x \Lt y}~ \varphi
& \deq
& \all x(x \Lt y ~\limp~ \varphi)
\end{array}
\]

\noindent
We moreover let, for $z$ not free in $\varphi$:
\[
\begin{array}{r c l !{\qquad\quad} r c l}
  \exists^\infty x~ \varphi
& \deq
& \all z \ex x (z \Leq x ~\land~ \varphi)

& \forall^\infty x~ \varphi
& \deq
& \ex z \all x(z \Leq x ~\limp~ \varphi)
\end{array}
\]
\end{nota}

\noindent
$\MSO$ on $\omega$-words is known to be decidable by
Büchi's Theorem~\cite{buchi62lmps}.

\begin{thm}[Büchi~\cite{buchi62lmps}]%
\label{thm:mso:dec}
$\MSO$ over $\Std$ is decidable.
\end{thm}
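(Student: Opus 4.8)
The plan is to prove Büchi's Theorem by establishing the classical translation between $\MSO$-formulae over $\omega$-words and Büchi automata, and then invoking the decidability of the emptiness problem for Büchi automata. Concretely, I would proceed by structural induction on formulae, showing that for every $\MSO$-formula $\varphi(\vec X)$ (with free predicate variables among $\vec X = X_1, \dots, X_k$) there is a Büchi automaton $\At A_\varphi$ over the alphabet $\two^k$ such that for all $\seq_1, \dots, \seq_k \in \two^\omega$, the automaton $\At A_\varphi$ accepts the convolution $\langle \seq_1, \dots, \seq_k \rangle \in (\two^k)^\omega$ if and only if $\Std \models \varphi(\seq_1, \dots, \seq_k)$. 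The base cases handle the atomic formulae $x \Eq y$, $x \Leq y$, $\Succ(x,y)$, $\Zero(x)$, $x \In X$, $\True$, $\False$; here one uses the standard encoding of an individual variable $x$ by a singleton predicate (a stream in $\two^\omega$ with exactly one occurrence of $1$), and each atomic relation is recognised by a small fixed automaton with a trivial Büchi condition. Since our formulae are built from atoms using only $\land$, $\lnot$, $\ex x$ and $\ex X$ (Fig.\ \ref{fig:mso:form}), only these four cases remain in the inductive step.

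For the inductive step, conjunction $\varphi \land \psi$ is handled by a product construction on Büchi automata; existential quantification $\ex X \varphi$ is handled by projection, i.e.\ by relabelling the transitions of $\At A_\varphi$ so as to erase the $X$-component of the alphabet, which yields a \emph{nondeterministic} Büchi automaton recognising the projected language --- this is exactly why nondeterministic Büchi automata are the right model, as they are closed under projection; and $\ex x \varphi$ is treated similarly, additionally intersecting with the automaton checking that the erased component is a genuine singleton. Negation $\lnot \varphi$ is handled by complementation: the class of $\omega$-languages recognised by nondeterministic Büchi automata is effectively closed under complement. Once the translation is in place, for a closed formula $\varphi$ we obtain a Büchi automaton $\At A_\varphi$ over a one-letter alphabet with $L(\At A_\varphi) \neq \emptyset$ iff $\Std \models \varphi$, and emptiness of Büchi automata is decidable (it reduces to checking for a reachable cycle through an accepting state), so $\MSO$ over $\Std$ is decidable.

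\textbf{The main obstacle is the negation/complementation step.} Closure of nondeterministic Büchi automata under complement is the non-trivial combinatorial heart of the argument: the naive subset construction fails because nondeterministic Büchi automata are strictly more expressive than deterministic ones, so one must use either Büchi's original proof via a congruence on finite words (partitioning $\Sigma^\omega$ into finitely many $\omega$-regular classes of the form $U \cdot V^\omega$ and arguing on the saturation of $L(\At A_\varphi)$ by this partition) or, alternatively, McNaughton's determinisation into Muller/parity automata followed by the easy complementation of those. Since this is a well-known classical result, in the paper one would simply cite Büchi's original work~\cite{buchi62lmps} (and perhaps the standard references already listed, e.g.\ \cite{thomas97handbook,pp04book}) rather than reproving it; the remaining cases (atoms, product, projection, emptiness) are routine and can be stated briefly or also delegated to the literature.
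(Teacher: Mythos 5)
Your outline matches what the paper does: Theorem~\ref{thm:mso:dec} is cited from Büchi's original work, and the paper's accompanying discussion describes exactly the method you give — a recursive translation of formulae to nondeterministic Büchi automata, with effective closure under complementation identified as the crux. The proposal is correct and takes essentially the same (standard) approach, correctly delegating the complementation step to the literature.
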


\noindent
Following~\cite{buchi62lmps} (but see also \eg~\cite{pp04book,vw08chapter}),
the (non-deterministic) automata method for deciding $\MSO$
proceeds by a recursive translation of $\MSO$-formulae
to \emph{non-deterministic Büchi automata} (NBAs).
An NBA is an NFA running on $\omega$-words and for which
an (infinite) run is accepting if it has infinitely many occurrences
of final states.


The crux of Büchi's Theorem is the effective
closure of
NBAs
under complement.
Let us recall a few known facts
on the complementation of
NBAs (see \eg~\cite{thomas97handbook,gtw02alig}).
First,
the translation of $\MSO$-formulae to automata is non-elementary.
Second, its is known that \emph{deterministic} Büchi automata (DBAs) are strictly
less expressive than NBAs. 
Finally, it is known that complementation of
NBAs
is algorithmically hard:
there is a family of languages ${(\Lang_n)}_{n >0}$
such that each $\Lang_n$ can be recognized by
an NBA
with $n+2$ states,
but such that the complement of $\Lang_n$ cannot be recognized
by
an NBA
with
less than $n!$ states. 

\subsection{Church's Synthesis for \texorpdfstring{$\MSO$}{MSO}}
Church's synthesis problem for $\MSO$ is the following.
Given as input an $\MSO$-formula $\varphi(\vec X;\vec Y)$
(with only free variables $\vec X = X_1,\dots,X_p$ and $\vec Y = Y_1,\dots,Y_q$),
\begin{enumerate}
\item\label{item:church:decide}
decide whether there exist finite-state synchronous functions
$\vec F = F_1,\dots,F_q$
(with each $F_i : \two^p \to_\Mealy \two$,
so that $\vec F : \two^p \to_\Mealy \two^q$ collectively)
such that $\Std \models \varphi(\vec\seq;\vec F(\vec\seq))$
for all $\vec\seq \in {(\two^\omega)}^p \iso {(\two^p)}^\omega$, and

\item\label{item:church:dmm}
construct such $\vec F$ whenever they exist.
\end{enumerate}

Given a formula $\varphi(\vec X,\vec Y)$
as above,
we say that $\vec F : \two^p \to_\Mealy \two^q$ realizes
$\varphi(\vec X;\vec Y)$ in the sense of Church
(or \emph{Church-realizes} $\varphi(\vec X;\vec Y)$)
when
$\varphi(\vec\seq,\vec F(\vec\seq))$ holds for all $\vec\seq$.

\begin{exa}%
\label{ex:mso:spec}
The specification $\Phi$ of Ex.\ \ref{ex:prelim:spec}
can be officially written in the language of $\MSO$
as the following formula $\phi(X;Y)$:
\[
\all t \big( X t ~\limp~ Y t \big)
~~\land~~
\all t \forall {t'}
\big( \Succ(t,t') ~\limp~ \lnot Y t ~\limp~ Y t' \big)
~~\land~~
\big( (\exists^\infty t~ \lnot X t) ~\limp~ (\exists^\infty t~ \lnot Y t) \big)
\]
\end{exa}

\noindent
Church's synthesis has been shown to be solvable by Büchi \& Landweber~\cite{bl69tams},
using automata on $\omega$-words and infinite two-player games
(a solution is also possible via tree automata~\cite{rabin72}):
there is an algorithm which, on input $\varphi(\vec X;\vec Y)$,
(\ref{item:church:decide})
decides whether
a finite-state
synchronous Church-realizer of $\varphi(\vec X;\vec Y)$ exists,
and if yes (\ref{item:church:dmm}) 
provides
a DMM 
implementing it.

The standard algorithm solving Church's synthesis for $\MSO$
(see \eg~\cite{thomas08npgi})
proceeds via
McNaughton's Theorem
(\cite{mcnaughton66ic}, see also \eg~\cite{pp04book,thomas97handbook}),
which states that Büchi automata can be translated to equivalent \emph{deterministic}
finite-state automata,
but equipped
with stronger acceptance conditions than Büchi automata.
There are different variants of such conditions:
\emph{Muller}, \emph{Rabin}, \emph{Streett} or \emph{parity} conditions
(see \eg~\cite{thomas97handbook,gtw02alig,pp04book}).
All of them can specify states which \emph{must not}
occur infinitely often in an accepting run.
For the purpose of this paper, we only need to consider the simplest of them,
the Muller conditions.
A \emph{Muller condition} is given by a family of set of states $\MF$,
and a run is accepting when the set of states occurring infinitely often in
it belongs to the family $\MF$.


\begin{thm}[McNaughton~\cite{mcnaughton66ic}]%
\label{thm:mso:mcnaughton}
Each
NBA
is equivalent to a deterministic Muller
automaton.
\end{thm}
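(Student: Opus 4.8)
The plan is to prove the theorem via \emph{Safra's determinization construction}, which turns a nondeterministic B\"uchi automaton directly into an equivalent deterministic \emph{Rabin} automaton; since every deterministic Rabin automaton is a deterministic Muller automaton (take for the family $\MF$ the state-sets $S$ that meet some Rabin pair $(E,G)$, i.e.\ with $S \cap E = \emptyset$ and $S \cap G \neq \emptyset$), this suffices. So fix an NBA $\mathcal{A} = (Q,\Sigma,\Delta,q_0,F)$ with $|Q| = n$, and let me describe the target automaton.

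Its states will be \emph{Safra trees} over $Q$: finite ordered trees whose nodes carry pairwise distinct \emph{names} drawn from a fixed set of size $2n$ and \emph{labels} which are nonempty subsets of $Q$, such that the label of each node contains the union of its children's labels and distinct siblings have disjoint labels. A short invariant argument bounds the number of nodes by $2n$ --- the leaves carry pairwise disjoint nonempty labels, and if one moreover maintains that each internal node's label strictly contains the union of its children's labels, then the internal nodes inject into $Q$ --- so there are finitely many trees and names never run out. The transition on a letter $a \in \Sigma$ is the following deterministic pipeline on the current tree: (1) replace each label $S$ by $\Delta(S,a)$; (2) to every node whose new label $S$ meets $F$, append a new \emph{youngest} child with a fresh name and label $S \cap F$; (3) horizontal merging: remove each state from all but the oldest sibling-label containing it; (4) delete every empty-labelled node, releasing its name; (5) vertical merging: for every node whose label equals the union of its children's labels, delete all its proper descendants (releasing their names) and flag the node ``green''. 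The initial state is the one-node tree labelled $\{q_0\}$, flagged green iff $q_0 \in F$; the Rabin pair for a name $v$ has $E$ = all trees in which $v$ is absent and $G$ = all trees in which $v$ occurs and is green.

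It then remains to check that this deterministic automaton recognizes $L(\mathcal{A})$. For the inclusion ``$\supseteq$'', suppose a run satisfies the acceptance condition through a name $v$ that is eventually always present and infinitely often green; the labels successively attached to $v$ organize, via K\"onig's lemma (each state under $v$ at step $k{+}1$ being a $\Delta$-successor of one at step $k$), into an infinite path of $\mathcal{A}$, and each green event at $v$ certifies that every state then below $v$ has visited $F$ since the previous green event, so that path is an accepting run. For ``$\subseteq$'', fix an accepting run $\rho$ of $\mathcal{A}$ and follow, step by step, the unique tree-node whose label contains $\rho(k)$: this node only finitely often moves to an older sibling or a strict ancestor, and is ``rebased'' onto a freshly-named descendant each time $\rho$ enters $F$; a pigeonhole argument over the bounded set of names then produces a name that is present cofinitely and green infinitely often.

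I expect the correctness proof to be the crux --- specifically, isolating the invariants that keep horizontal and vertical merging compatible with tracking a single run of $\mathcal{A}$, and carrying out the K\"onig/pigeonhole arguments in both directions; the transition pipeline and the Rabin-to-Muller step are then routine bookkeeping. (McNaughton's original argument is quite different, proceeding by an intricate induction on the automaton; I would favour Safra's construction as both conceptually cleaner and essentially optimal in the number of states of the resulting automaton, up to the $2n$ bound above.)
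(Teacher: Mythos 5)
The paper does not actually prove this statement: McNaughton's Theorem is invoked as a black-box classical result with a citation, and the surrounding text merely records that Safra's construction (or a variant) is the standard way to obtain it with $2^{O(n\log(n))}$ states. Your proposal is therefore not a different route from the paper so much as a fleshing-out of exactly the route the paper names. The outline is sound: the Safra-tree state space, the five-stage transition pipeline, the per-name Rabin pairs, and the reduction of Rabin to Muller acceptance (put $S$ in the family iff $S\cap E=\emptyset$ and $S\cap G\neq\emptyset$ for some pair $(E,G)$) are all correct, and your $2n$ bound on the number of nodes (disjoint nonempty leaf labels, plus a witness state for each internal node not covered by its children) is valid, even though the sharper standard accounting is at most $n$ nodes with a pool of $2n$ names so that fresh names survive the intermediate stages of a transition.

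That said, what you have is an outline, and you say so yourself: the two correctness inclusions are where essentially all of the work lives. For ``$\supseteq$'' the invariant you need is that if a name $v$ persists and is green at times $k_1<k_2$, then every state in $v$'s label at time $k_2$ is reachable from some state in its label at time $k_1$ by a path of $\mathcal{A}$ that visits $F$; K\"onig's lemma then applies to the resulting finitely-branching tree of partial runs, and one must check this invariant survives both merging steps. For ``$\subseteq$'', note that ``the unique tree-node whose label contains $\rho(k)$'' is not unique --- labels are nested along root-to-leaf paths, so every ancestor of the hosting node also contains $\rho(k)$; you must track the \emph{deepest} such node and run the ranking/pigeonhole argument on the lexicographic position of its name along the path from the root, showing this position eventually stabilizes at a node that is then forced green infinitely often by the $F$-visits of $\rho$. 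These steps are standard but genuinely delicate, so a complete proof would need them written out in full. For the purposes of this paper, though, only the statement is used, and your sketch correctly identifies both the construction and the shape of its verification.
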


\noindent
There is a lower bound in $2^{\Omega(n \log(n))}$
on the number of states
of a deterministic Muller automaton equivalent to
an NBA
with $n$ states~\cite{yan08lmcs}.
The best known constructions for McNaughton's Theorem
(such as \emph{Safra's construction} or its variants)
give deterministic Muller automata with $2^{O(n\log(n))}$ states from
NBAs
with $n$ states.



The standard solution
to Church's synthesis for $\MSO$
starts by translating $\varphi(\vec X;\vec Y)$ to a deterministic
Muller automaton,
and then turns this deterministic automaton into a two-player sequential game,
in which the Opponent $\forall$bélard plays input
bit sequences
in $\two^p$ while the Proponent $\exists$loïse replies
with output
bit sequences
in $\two^q$,
so that Proponent's strategies correspond to synchronous functions
$\two^p \to_{\Synch} \two^q$.
The game is equipped with an $\omega$-regular winning condition
(induced by the acceptance condition of the Muller automaton).
The solution is then provided by Büchi-Landweber Theorem~\cite{bl69tams},
which states that $\omega$-regular games on finite graphs are effectively
determined, and moreover that the winner always has a finite-state winning strategy.

\begin{exa}
Consider the last conjunct
$\phi_2[X,Y] \deq (\exists^\infty t~ \lnot X t) \limp (\exists^\infty t~ \lnot Y t)$
of the formula $\phi(X;Y)$ of Ex.\ \ref{ex:mso:spec}.
When translating $\phi_2$ to a finite-state automaton,
the positive occurrence of $(\exists^\infty t~ \lnot Y t)$ can be translated
to a DBA\@. 
However, the negative occurrence of $(\exists^\infty t~ \lnot X t)$
corresponds to $(\forall^\infty t~ X t)$
and cannot be translated to a
\emph{deterministic} Büchi automaton.
Even if  
a very simple two-states Muller automaton exists
for $(\forall^\infty t~ X t)$,
McNaughton's Theorem~\ref{thm:mso:mcnaughton}
is in general required for
Boolean combinations of $\exists^\infty t\,(-)$'s.
\end{exa}

\begin{figure}
\[
\begin{array}{c}
\dfrac{}{\vec\varphi,\varphi \thesis \varphi}
\qquad\quad

\dfrac{\vec\varphi \thesis \psi \qquad \vec\varphi,\psi\thesis \varphi}
  {\vec\varphi \thesis \varphi}
\qquad\quad

\dfrac{\vec\varphi \thesis \varphi \qquad \vec\varphi \thesis \lnot \varphi}
  {\vec\varphi \thesis \bot}
\qquad\quad

\dfrac{\vec\varphi,\varphi \thesis \bot}{\vec\varphi \thesis \lnot \varphi}
\\\\

\dfrac{\vec\varphi \thesis \varphi \qquad \vec\varphi \thesis \psi}
  {\vec\varphi \thesis \varphi \land \psi}
\qquad\quad
\dfrac{\vec\varphi \thesis \varphi \land \psi}{\vec\varphi \thesis \varphi}
\qquad\quad
\dfrac{\vec\varphi \thesis \varphi \land \psi}{\vec\varphi \thesis \psi}
\qquad\quad
\dfrac{\vec\varphi \thesis \varphi[y/x]}{\vec\varphi \thesis \ex x \varphi}
\qquad\quad
\dfrac{\vec\varphi \thesis \varphi[Y/X]}{\vec\varphi \thesis \ex X \varphi}
\\\\
\dfrac{\vec\varphi \thesis \ex x \varphi
  \qquad
  \vec\varphi, \varphi \thesis \psi}
  {\vec\varphi \thesis \psi}
~\text{($x$ not free in $\vec\varphi,\psi$)}

\hfill
\dfrac{\vec\varphi \thesis \ex X \varphi
  \qquad
  \vec\varphi, \varphi \thesis \psi}
  {\vec\varphi \thesis \psi}
~\text{($X$ not free in $\vec\varphi,\psi$)}
\end{array}
\]
\caption{Logical Rules of $\MSO$ and $\SMSO$.
\label{fig:mso:ded}}
\end{figure}

\begin{figure}
\[
\begin{array}{c}
\dfrac{\vec\varphi \thesis \varphi}
  {\vec\varphi,\psi \thesis \varphi}
\qquad\quad

\dfrac{\vec\varphi \thesis \bot}{\vec\varphi \thesis \varphi}
\qquad\quad

\dfrac{\vec\varphi,\varphi \thesis \psi}
  {\vec\varphi \thesis \varphi \limp \psi}
\qquad\quad

\dfrac{\vec\varphi \thesis \varphi \limp \psi
  \qquad
  \vec\varphi \thesis \varphi}
  {\vec\varphi \thesis \psi}
\qquad\quad

\\\\
\dfrac
  {\vec\varphi \thesis \varphi}
  {\vec\varphi \thesis \varphi \lor \psi}
\qquad\quad

\dfrac
  {\vec\varphi \thesis \psi}
  {\vec\varphi \thesis \varphi \lor \psi}
\qquad\quad

\dfrac{\vec\varphi \thesis \varphi \lor \psi
  \qquad
  \vec\varphi,\varphi \thesis \theta
  \qquad
  \vec\varphi,\psi \thesis \theta}
  {\vec\varphi \thesis \theta}

\\\\

\dfrac{\vec\varphi \thesis \varphi}{\vec\varphi \thesis \all x \varphi}
~\text{\small ($x$ not free in $\vec\varphi$)}

\qquad

\dfrac{\vec\varphi \thesis \all x \varphi}
  {\vec\varphi \thesis \psi[y/x]}

\qquad

\dfrac{\vec\varphi \thesis \varphi}{\vec\varphi \thesis \all X \varphi}
~\text{\small ($X$ not free in $\vec\varphi$)}

\qquad

\dfrac{\vec\varphi \thesis \all X \varphi}
  {\vec\varphi \thesis \psi[Y/X]}

\end{array}
\]
\caption{Admissible Rules of $\MSO$.}\label{fig:mso:der}
\end{figure}

\begin{figure}
\begin{description}
\setlength{\itemsep}{1em}
\item[Equality Rules]
\[
\dfrac{}{\vec\varphi \thesis x \Eq x}
\qquad\qquad
\dfrac{\vec\varphi \thesis \varphi[y/x] \qquad \vec\varphi \thesis y \Eq z}
  {\vec\varphi \thesis \varphi[z/x]}
\]

\item[Partial Order Rules]
\[
\dfrac{}{\vec\varphi \thesis x \Leq x}
\qquad\qquad
\dfrac{\vec\varphi \thesis x \Leq y \qquad \vec\varphi \thesis y \Leq z}
  {\vec\varphi \thesis x \Leq z}
\qquad\qquad
\dfrac{\vec\varphi \thesis x \Leq y \qquad \vec\varphi \thesis y \Leq x}
  {\vec\varphi \thesis x \Eq y}
\]

\item[Basic $\Zero$ and $\Succ$ Rules (total injective relations)]
\[
\begin{array}{c}
\dfrac{}{\vec\varphi \thesis \ex y \Zero(y)}

\qquad\qquad

\dfrac{\vec\varphi  \thesis \Zero(x)
  \qquad \vec\varphi \thesis \Zero(y)}
  {\vec\varphi \thesis x \Eq y}

\\\\


\dfrac{}{\vec\varphi \thesis \ex y \Succ(x,y)}

\qquad\qquad

\dfrac{\vec\varphi \thesis \Succ(y,x)
  \qquad
  \vec\varphi \thesis \Succ(z, x)}
  {\vec\varphi \thesis y \Eq z}

\qquad\quad

\dfrac{\vec\varphi \thesis \Succ(x,y)
  \qquad \vec\varphi \thesis \Succ(x,z)}
  {\vec\varphi \thesis y \Eq z}
\end{array}
\]

\item[Arithmetic Rules]
\[
\dfrac{\vec\varphi \thesis \Succ(x,y)
  \quad~~
  \vec\varphi \thesis \Zero(y)}
  {\vec\varphi \thesis \False}
\qquad~~
\dfrac{\vec\varphi \thesis \Succ(x, y)}{\vec\varphi \thesis x \Leq y}
\qquad~~
\dfrac{\vec\varphi \thesis \Succ(y,y')
  \quad~~
  \vec\varphi\thesis x \Leq y'
  \quad~~
  \vec\varphi \thesis\lnot(x \Eq y')}
  {\vec\varphi\thesis x \Leq y}
\]
\end{description}
\caption{Arithmetic Rules of $\MSO$ and $\SMSO$.%
\label{fig:mso:arith}}
\end{figure}

\subsection{An Axiomatization of \texorpdfstring{$\MSO$}{MSO}}%
\label{sec:prelim:ax}
Our approach to Church's synthesis relies on the fact that the $\MSO$-theory
of $\Std$ can be completely axiomatized as a subsystem of second-order
Peano arithmetic~\cite{siefkes70lnm} (see also~\cite{riba12ifip}).
For the purpose of this paper, it is convenient
to axiomatize $\MSO$ with
the non-logical rules of Fig.\ \ref{fig:mso:arith}
together with
the following
\emph{comprehension} and \emph{induction} rules:
\begin{equation}
\label{eq:mso:ax}
\dfrac{\vec\varphi \thesis \varphi[\psi[y]/X]}
   {\vec\varphi \thesis \ex X \varphi}
\qquad\qquad
\dfrac{\vec\varphi,\Zero(z) \thesis \varphi[z/x]
\qquad \vec\varphi,\Succ(y,z),\varphi[y/x] \thesis \varphi[z/x]}
   {\vec\varphi \thesis \varphi}
\end{equation}
where $z$ and $y$ do not occur free in $\vec\varphi,\varphi$,
and where $\varphi[\psi[y]/X]$
is the usual formula substitution, 
which commutes over all connectives
(avoiding the capture of free variables),
and with $(x \In X)[\psi[y]/X] = \psi[x/y]$.
As for the logical rules of $\MSO$,
we consider the presentation of two-sorted classical logic
consisting of
the rules
of Fig.\ \ref{fig:mso:ded} together with the following rule
of \emph{double negation elimination}:
\begin{equation}
\label{eq:mso:dne}
\dfrac{\vec\varphi \thesis \lnot\lnot\varphi}
  {\vec\varphi \thesis \varphi}
\end{equation}

\begin{defi}[Deduction for $\MSO$]
The deduction system of $\MSO$ is given by the rules of
two-sorted classical logic
(Fig.\ \ref{fig:mso:ded} and~(\ref{eq:mso:dne}))
together with the rules of
Fig.\ \ref{fig:mso:arith} and~(\ref{eq:mso:ax}).
\end{defi}

\noindent
We write $\vec\varphi \thesis_\MSO \varphi$ if $\vec\varphi \thesis \varphi$
is provable in $\MSO$. We also write $\MSO \thesis \varphi$ for $\thesis_\MSO\varphi$.

\begin{rem}%
\label{rem:mso:der}
As usual with classical logic, the rules of Fig.\ \ref{fig:mso:der}
(where $\limp,\lor,\forall$ are the defined connectives of
Notation~\ref{not:prelim:der})
are admissible in $\MSO$.
\end{rem}

As announced, deduction for $\MSO$ 
is complete \wrt\@ the standard model $\Std$.

\begin{thm}[Siefkes~\cite{siefkes70lnm}]%
\label{thm:mso:compl}
For every closed formula $\varphi$, we have $\Std \models \varphi$
if and only if
$\MSO \thesis \varphi$.
\end{thm}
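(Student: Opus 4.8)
The plan is to split the biconditional into the easy \emph{soundness} direction and the substantial \emph{completeness} direction. For soundness ($\MSO \thesis \varphi$ implies $\Std \models \varphi$) I would simply check, by induction on derivations, that every rule of Figs.~\ref{fig:mso:ded} and~\ref{fig:mso:arith}, the double negation rule~(\ref{eq:mso:dne}), and the comprehension and induction rules of~(\ref{eq:mso:ax}) are valid in $\Std$; this is routine inspection.

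For completeness I would not argue model-theoretically but instead establish that $\MSO$ is \emph{syntactically complete}: for every closed $\varphi$, either $\MSO \thesis \varphi$ or $\MSO \thesis \lnot\varphi$. Since $\Std$ is a model of $\MSO$ (by soundness), syntactic completeness immediately gives $\Std \models \varphi \Rightarrow \MSO \thesis \varphi$. The heart of the argument is to \emph{internalize B{\"u}chi's automata translation inside the proof system}: to each formula $\varphi(\vec X)$ one effectively assigns a non-deterministic B{\"u}chi automaton $\mathcal{A}_\varphi$ on the alphabet of the free variables $\vec X$, together with an $\MSO$-formula $\mathrm{Acc}_\varphi(\vec X)$ stating ``$\mathcal{A}_\varphi$ has an accepting run on $\vec X$'', and one proves by induction on $\varphi$ that $\MSO \thesis \varphi(\vec X) \liff \mathrm{Acc}_\varphi(\vec X)$. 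The atomic cases use small explicit automata; the conjunction case uses the product automaton; the existential cases use projection of automata, where the delicate direction --- extracting a witness from an accepting run of the projected automaton --- is exactly what \emph{comprehension} is for; and the negation case uses B{\"u}chi's complementation theorem, formalized in $\MSO$.

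Once this is in place, I would finish as follows. For \emph{closed} $\varphi$ the automaton $\mathcal{A}_\varphi$ has a one-letter alphabet, so $\mathrm{Acc}_\varphi$ amounts to the non-emptiness of $\mathcal{A}_\varphi$, which is a decidable finitary condition. If $\mathcal{A}_\varphi$ is non-empty it has a lasso-shaped accepting run; I would write down that run explicitly, introduce the corresponding sets by \emph{comprehension}, and use the \emph{induction} rule to verify inside $\MSO$ that it is accepting, obtaining $\MSO \thesis \mathrm{Acc}_\varphi$ and hence $\MSO \thesis \varphi$. If $\mathcal{A}_\varphi$ is empty I would instead formalize the correctness of the emptiness test --- \eg\@ proving by \emph{induction} that the set of states from which some accepting continuation exists is contained in the computed set of productive states, which does not contain the initial state --- so that $\MSO \thesis \lnot \mathrm{Acc}_\varphi$, \ie\@ $\MSO \thesis \lnot\varphi$. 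Either way $\MSO$ decides $\varphi$, and soundness selects the side matching $\Std$.

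The hard part will be the complementation step: carrying out B{\"u}chi's complementation construction --- which rests on a Ramsey-type saturation of finite congruences --- entirely within the weak metatheory $\MSO$, whose only genuine inductive resource is the arithmetic induction rule of~(\ref{eq:mso:ax}). One must also be careful that all the automata operations are presented by $\MSO$-formulae uniformly in the free parameters, with exactly the provable equivalences needed for the induction on $\varphi$ to go through. The detailed constructions are due to Siefkes~\cite{siefkes70lnm}; a modern account is in~\cite{riba12ifip}.
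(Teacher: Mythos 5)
Your outline is a correct sketch of how this theorem is actually proved \emph{from scratch} — and it is essentially the proof of Siefkes~\cite{siefkes70lnm} and of~\cite{riba12ifip}: internalize the formula-to-Büchi-automaton translation, prove $\MSO \thesis \varphi \liff \mathrm{Acc}_\varphi$ by induction on $\varphi$ (with comprehension handling projection and a formalized Ramsey/complementation argument handling negation), derive syntactic completeness for closed formulae from decidable non-emptiness and explicit lasso runs, and let soundness pick the side agreeing with $\Std$. The paper, however, does not redo any of this: its ``proof'' of Thm.~\ref{thm:mso:compl} is a \emph{reduction} to the completeness result of~\cite{riba12ifip}, and the only work it actually carries out is the bridging between the two axiomatizations — deriving the usual comprehension axiom~(\ref{eq:mso:ca}) from the comprehension rule, deriving the $\Lt$-order, successor, predecessor and unboundedness axioms of~\cite{riba12ifip} from the rules of Fig.~\ref{fig:mso:arith} (Lem.~\ref{lem:prelim:mso:lem}, Fig.~\ref{fig:prelim:mso:lem}), and deriving the strong induction axiom~(\ref{eq:mso:ind}) from the induction rule (Lem.~\ref{lem:prelim:mso:ind}); see App.~\ref{sec:app:prelim}. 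Your route buys self-containedness at the cost of the genuinely heavy step you correctly identify (formalizing Büchi complementation in the weak metatheory), while the paper's route buys a short, checkable argument given the external reference. The one thing to flag is that the part you defer to the literature with ``the detailed constructions are due to Siefkes'' silently assumes that the paper's particular primitives ($\Leq$, $\Succ$, $\Zero$ and the rules of Fig.~\ref{fig:mso:arith}) prove all the order-theoretic and induction facts those constructions rely on — that is precisely the non-deferrable content of the paper's proof, and your write-up should either include that derivation or make the dependence explicit.
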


Actually obtaining Thm.\ \ref{thm:mso:compl}
from~\cite{siefkes70lnm} or~\cite{riba12ifip} requires some easy but tedious work.
We discuss here the latter option.
The difference between~\cite{riba12ifip} and
the present system is that
the axiomatization of~\cite{riba12ifip} is expressed in terms
of the strict part of $\Leq$
(written $\Lt$, see Notation~\ref{not:prelim:der})
and that comprehension is formulated
with the following usual axiom scheme (where $X$ is not free in $\varphi$):
\begin{equation}
\label{eq:mso:ca}
\ex X \all x \big( X(x) ~~\longliff~~ \varphi[x/y] \big)
\end{equation}

We state here the properties required to bridge the gap between~\cite{riba12ifip}
and the present axiomatization of $\MSO$.
Missing details are provided in App.\ \ref{sec:app:prelim}.
First,
the comprehension scheme of the present version of $\MSO$ directly
implies~(\ref{eq:mso:ca}),
since using
\[
\forall x \big( \varphi[x/y] ~~\longliff~~ \varphi[x/y] \big)
\qquad=\qquad
\forall x \big( X(x) ~~\longliff~~ \varphi[x/y] \big)
  [\varphi[y]/X]
\]
we have
\[
\text{
  \AXC{}
  \UIC{$\vec\varphi \thesis \forall x \big( \varphi[x/y] ~~\longliff~~ \varphi[x/y] \big)$}
  \UIC{$\vec\varphi \thesis \exists X\forall x \big( X(x) ~~\longliff~~ \varphi[x/y] \big)$}
\DisplayProof}
\]

\noindent
In order to deal with the $\Lt$-axioms of~\cite{riba12ifip},
we rely on a series of arithmetical lemmas of $\MSO$
displayed in Fig.\ \ref{fig:prelim:mso:lem}.

\begin{lem}%
\label{lem:prelim:mso:lem}
$\MSO$ proves all the sequents of Fig.\ \ref{fig:prelim:mso:lem}.
\end{lem}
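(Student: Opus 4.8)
The plan is to prove each sequent of Fig.~\ref{fig:prelim:mso:lem} separately, grouping them according to whether they are purely first-order consequences of the non-logical rules of Fig.~\ref{fig:mso:arith} or whether they genuinely require the induction rule of~(\ref{eq:mso:ax}). For the first group — the basic facts relating $\Lt$, $\Leq$, $\Succ$ and $\Zero$, such as irreflexivity and transitivity of $\Lt$, the implications $\Succ(x,y) \thesis x \Lt y$, $x \Lt y \thesis \lnot(y \Leq x)$, and the discreteness statement that nothing lies strictly between $x$ and its $\Succ$-image — I would simply unfold the abbreviations of Notation~\ref{not:prelim:der} and chain the equality, partial-order and arithmetic rules of Fig.~\ref{fig:mso:arith}, using the admissible classical rules of Fig.~\ref{fig:mso:der} (Remark~\ref{rem:mso:der}) for the propositional and quantifier bookkeeping. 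The recurring subtlety here is that $\Zero$ and $\Succ$ are \emph{relations} rather than function symbols, so whenever a derivation needs ``the'' zero or ``the'' successor of a term, I would first invoke the totality rules $\vec\varphi \thesis \ex y\, \Zero(y)$ and $\vec\varphi \thesis \ex y\, \Succ(x,y)$ to introduce a witness via $\exists$-elimination, and then use the injectivity/functionality rules to identify it with any other such witness occurring in the goal.

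For the second group — typically the linearity of $\Leq$ (i.e.\ $\thesis x \Leq y \lor y \Leq x$), the fact that zero is least ($\Zero(x) \thesis x \Leq y$), the dichotomy $\thesis \Zero(x) \lor \ex y\, \Succ(y,x)$, and the existence of predecessors for non-zero elements — I would apply the induction rule of~(\ref{eq:mso:ax}) with the induction formula obtained by universally closing the statement over the ``other'' free variable. Concretely, for ``zero is least'' one inducts on $x$ with induction formula $\all y\,(\Zero(x) \limp x \Leq y)$, using reflexivity and functionality of $\Zero$ in the base case and rule~(2) of the arithmetic rules together with transitivity in the successor case; linearity of $\Leq$ is then obtained by induction on $y$ with formula $\all x\,(x \Leq y \lor y \Leq x)$, the base case appealing to ``zero is least'' and the successor case combining $\Succ(y,y') \thesis y \Leq y'$, transitivity, and a case split via the last arithmetic rule of Fig.~\ref{fig:mso:arith} (which turns $x \Leq y'$ with $\lnot(x \Eq y')$ into $x \Leq y$). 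The predecessor and dichotomy statements are handled by a direct induction on the relevant variable, with only a case analysis on $\Zero$ versus $\Succ$ in the step.

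I expect the main obstacle to be bureaucratic rather than conceptual: coercing the base and step cases of each induction into exactly the shape demanded by the rule of~(\ref{eq:mso:ax}) (with the auxiliary variables $z,y$ suitably fresh), and carrying out the repeated case splits — on $\Zero(x) \lor \ex y\, \Succ(y,x)$ and on $x \Eq y' \lor \lnot(x \Eq y')$ — without losing track of hypotheses, this last split being precisely what makes the successor step of the linearity proof go through. Since none of these derivations uses comprehension, and each sequent is plainly valid in $\Std$, there is no danger that a statement is false; the work is entirely in assembling the derivations, and the missing details are deferred to App.~\ref{sec:app:prelim}.
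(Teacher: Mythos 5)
Your overall strategy --- proving the thirteen sequents one by one, separating those that follow by chaining the non-logical rules from those that need the induction rule of~(\ref{eq:mso:ax}), handling $\Zero$ and $\Succ$ as total injective relations via existential witnesses, and closing the successor steps with excluded-middle case splits such as $x \Eq y' \lor \lnot(x \Eq y')$ --- is exactly what the paper does in App.~\ref{sec:app:prelim}. There is, however, one concrete misclassification that would make the plan fail if followed literally: $\Succ(x,y) \thesis x \Lt y$ is \emph{not} a purely first-order consequence of the rules of Fig.~\ref{fig:mso:arith}. The $x \Leq y$ half is an axiom, but the $\lnot(x \Eq y)$ half is precisely sequent~(3) of Fig.~\ref{fig:prelim:mso:lem}, and it genuinely requires induction: the structure obtained from $\NN$ by adjoining one extra point $a$ with $\Succ(a,a)$, $a \Leq a$ and $a$ incomparable to $\NN$ satisfies every equality, partial-order, $\Zero$/$\Succ$ and arithmetic rule (and comprehension), yet has $\Succ(a,a) \land a \Eq a$. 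The paper accordingly proves~(3) by induction on $y$ with the formula $\forall x(\Succ(x,y) \limp \lnot(x \Eq y))$, the base case using the rule $\Succ(x,y), \Zero(y) \thesis \False$. Several other items you place in the ``direct'' group (e.g.\ both directions of~(13), and item~(4)) silently depend on~(3), so this induction cannot be avoided.

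Two smaller points. For ``zero is least'' (sequent~(6)) you say you induct on $x$, but the base and step cases you describe (functionality of $\Zero$ plus reflexivity; $\Succ(y,y') \thesis y \Leq y'$ plus transitivity) are those of an induction on $y$ with $x$ as a parameter --- inducting on $x$ is circular, since its base case $\Zero(x) \thesis \forall y (x \Leq y)$ is the statement itself. For trichotomy~(12), your single induction on $y$ can be made to work, but the successor case is not closed by ``$y \Leq y'$, transitivity and the last arithmetic rule'' alone: in the subcase $y \Lt x$ you need $y' \Leq x$, which is sequent~(9) and is itself proved by a separate induction (using exactly the excluded-middle-plus-last-arithmetic-rule pattern you mention); the paper instead runs a double induction for~(12) and routes the step through~(10). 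Since (9) and (10) are on your list anyway and precede (12), this is an ordering and bookkeeping issue rather than a wrong idea, but the dependency should be made explicit.
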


\begin{figure}[t!]
\[
\begin{array}{r !{~~} l}
(1) &
\thesis \lnot(x \Lt x)
\\[0.5em]

(2) &
x \Lt y,\, y \Lt z \thesis x \Lt z
\\[0.5em]

(3) &
\Succ(x,y),\, x \Eq y \thesis \False
\\[0.5em]

(4) &
\thesis \forall x\exists y (x\Lt y)
\\[0.5em]

(5) &
\Succ(y,y'),\, x \Leq y,\, x \Eq y' \thesis \False
\\[0.5em]

(6) &
\Zero(x) \thesis x \Leq y
\\[0.5em]

(7) &
x \Leq y,\, \Zero(y) \thesis \Zero(x)
\\[0.5em]

(8) &
\forall y(x \Leq y) \thesis \Zero(x)
\\[0.5em]

(9) &
x \Lt y,\, \Succ(x,x') \thesis x' \Leq y
\\[0.5em]

(10) &
x \Leq y,\, \Succ(x,x'),\, \Succ(y,y') \thesis x' \Leq y'
\\[0.5em]

(11) &
\thesis
\forall x \forall y
\Big[
y \Lt x ~~\longliff~~ \exists z \big( y \Leq z ~\land~ \Succ(z,x) \big)
\Big]
\\[0.5em]

(12) &
\thesis x \Lt y \lor x \Eq y \lor y \Lt x
\\[0.5em]

(13) &
\thesis
\forall x \forall y
\Big[
\Succ(x,y)
~~\longliff~~
\big( x \Lt y ~\land~
\lnot \exists z(x \Lt z \Lt y) \big)
\Big]
\end{array}
\]
\caption{Some Arithmetic Lemmas of $\MSO$\label{fig:prelim:mso:lem}.}
\end{figure}

\noindent
Finally, the induction axiom of~\cite{riba12ifip} is
the usual \emph{strong induction} axiom:
\begin{equation}
\label{eq:mso:ind}
\forall X \Big[
\forall x \big( \forall y(y \Lt x \limp X y) ~\longlimp~ X x \big)
~~\longlimp~~ \all x X x
\Big]
\end{equation}

\begin{lem}%
\label{lem:prelim:mso:ind}
$\MSO$ proves the strong induction axiom~(\ref{eq:mso:ind}).
\end{lem}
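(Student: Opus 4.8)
The plan is to derive the strong induction axiom~(\ref{eq:mso:ind}) from the primitive induction rule on the right of~(\ref{eq:mso:ax}), using the standard trick of inducting on the "cumulative" predicate rather than on the original one. Fix a predicate variable $X$ and assume the progress hypothesis $\forall x (\forall y (y \Lt x \limp X y) \limp X x)$; the goal is $\forall x\, X x$. First I would introduce, by the comprehension rule of~(\ref{eq:mso:ax}), the predicate $Z$ defined by $Z(x) \liff \forall y (y \Leq x \limp X y)$, i.e. I instantiate comprehension with $\psi[x] \deq \forall y(y \Leq x \limp X y)$. The main work is then to prove $\forall x\, Z x$ by the primitive induction rule applied to $Z$, after which $\forall x\, X x$ follows: from $Z x$ we get $X x$ by taking $y \deq x$ together with the reflexivity rule $\thesis x \Leq x$ of Fig.~\ref{fig:mso:arith}.

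For the base case of the primitive induction I must show $\Zero(z) \thesis Z[z/x]$, i.e. $\Zero(z) \thesis \forall y(y \Leq z \limp X y)$. Given $y \Leq z$ and $\Zero(z)$, lemma~(7) of Fig.~\ref{fig:prelim:mso:lem} yields $\Zero(y)$, hence $y \Eq z$ by the basic $\Zero$ rule, so it suffices to establish $X z$; and $X z$ follows from the progress hypothesis instantiated at $z$, whose premise $\forall y(y \Lt z \limp X y)$ is vacuously true since $\Zero(z) \thesis \lnot(y \Lt z)$ (by lemma~(6), $\Zero(z)\thesis z \Leq y$, combined with $y \Lt z$ this gives $y \Eq z$ by antisymmetry, contradicting $\lnot(y\Eq z)$). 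For the induction step I must show $\Succ(u,z),\, Z[u/x] \thesis Z[z/x]$, i.e. from $\Succ(u,z)$ and $\forall y(y \Leq u \limp X y)$ derive $\forall y(y \Leq z \limp X y)$. Given $y \Leq z$, by lemma~(12) (trichotomy) either $y \Lt z$, or $y \Eq z$, or $z \Lt y$; the last is impossible with $y \Leq z$ by antisymmetry and lemma~(1). If $y \Lt z$ then by lemma~(13) applied to $\Succ(u,z)$ — or more directly by lemmas~(9)/(11) relating $\Lt z$ to $\Leq u$ via $\Succ(u,z)$ — we obtain $y \Leq u$, hence $X y$ from the induction hypothesis $Z[u/x]$. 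If $y \Eq z$ it suffices to prove $X z$; this comes from the progress hypothesis at $z$, whose premise $\forall w(w \Lt z \limp X w)$ holds because any $w \Lt z$ satisfies $w \Leq u$ (again via lemmas~(9)/(11) and $\Succ(u,z)$), so $X w$ follows from $Z[u/x]$.

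I expect the main obstacle to be purely bureaucratic rather than conceptual: carefully managing the interface between the $\Succ$ relation and the order $\Leq$, which in this axiomatization is mediated entirely through the arithmetic rules of Fig.~\ref{fig:mso:arith} and the derived lemmas of Fig.~\ref{fig:prelim:mso:lem}. In particular the equivalence $y \Lt z \liff \exists v(y \Leq v \land \Succ(v,z))$ of lemma~(11), together with the functionality of $\Succ$, is what makes "$y \Lt z$ under $\Succ(u,z)$" collapse to "$y \Leq u$", and this is the step that has to be spelled out with some care since it is used in both branches of the induction step. Everything else is routine intuitionistic (indeed minimal) reasoning with the defined connectives, whose introduction and elimination rules are available as the admissible rules of Fig.~\ref{fig:mso:der} by Remark~\ref{rem:mso:der}.
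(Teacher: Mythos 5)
Your proof is correct and follows essentially the same route as the paper: induction on the cumulative formula $\forall y(y \Leq x \limp X y)$, with the base case handled by the vacuity of the $\Lt$-premise at zero and the induction step by trichotomy (Lemma~(12)) together with the $\Succ$/$\Leq$ interface of Lemmas~(9)/(11). The only (harmless) detour is introducing $Z$ via comprehension: the induction rule of~(\ref{eq:mso:ax}) applies directly to arbitrary formulae, so you can induct on $\forall y(y \Leq x \limp X y)$ itself, as the paper does.
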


\noindent
The detailed proofs of Lem.\ \ref{lem:prelim:mso:lem} and
Lem.\ \ref{lem:prelim:mso:ind}
are deferred to App.\ \ref{sec:app:prelim}.

\section{\texorpdfstring{$\SMSO$}{SMSO}: A Synchronous Intuitionistic Variant of \texorpdfstring{$\MSO$}{MSO}}%
\label{sec:smso}

\noindent
We now introduce $\SMSO$, an intuitionistic variant of $\MSO$
equipped with an extraction procedure,
which is sound and complete \wrt\@ Church's synthesis:
proofs of existential statements can be
translated to
finite-state synchronous Church-realizers,
and such proofs exist for each solvable instance of Church's synthesis
(Thm.\ \ref{thm:smso:main}, \S\ref{sec:smso:main}).

As it is common with intuitionistic versions of classical systems,
$\SMSO$ has the same language as $\MSO$, and its deduction rules
are based on intuitionistic predicate calculus (Fig.\ \ref{fig:mso:ded}).
As expected,
$\SMSO$ contains $\MSO$
via negative translation.
Actually, our limited vocabulary without primitive universal quantifications
allows for a Glivenko Theorem,
in the sense that
$\SMSO$ proves $\lnot\lnot\varphi$ iff $\MSO$ proves~$\varphi$
(Thm.~\ref{thm:smso:glivenko}, \S\ref{sec:smso:glivenko}).
%
%
%
In order for $\SMSO$ to contain a negative
translation of $\MSO$ while admitting a computational
interpretation in the sense of~\S\ref{sec:real},
one has to devise appropriate counterparts
to the comprehension and induction rules of $\MSO$ (\ref{eq:mso:ax}):
\[
\dfrac{\vec\varphi \thesis \varphi[\psi[y]/X]}
   {\vec\varphi \thesis \ex X \varphi}
\qquad\qquad
\dfrac{\vec\varphi,\Zero(z) \thesis \varphi[z/x]
\qquad \vec\varphi,\Succ(y,z),\varphi[y/x] \thesis \varphi[z/x]}
   {\vec\varphi \thesis \varphi}
\]
(where $z$, $y$ do not occur free in $\vec\varphi,\varphi$).
%
First, $\SMSO$ cannot have the comprehension rule of $\MSO$.
The reason is that monadic variables
are computational objects in the realizability interpretation of $\SMSO$
(\S\ref{sec:real}),
while
the comprehension rule of $\MSO$ has instances
in which the existential monadic quantification
cannot be witnessed by computable functions from the parameters of
$\vec\varphi$, $\psi$ and $\varphi$.
The situation is similar to that of
higher-type intuitionistic (Heyting) arithmetic,
in which predicates, represented as characteristic functions,
are computational objects
(see \eg~\cite{kohlenbach08book}).\footnote{This contrasts with second-order logic
based on Girard's System F~\cite{girard72phd}
(see also~\cite{glt89book}),
in which second-order variables have no computational content.}
The usual solution in that setting
is to only admit negative translations of comprehension.
We take a similar approach for $\SMSO$.
In view of Glivenko's Theorem~\ref{thm:smso:glivenko},
this amounts to equip $\SMSO$ with the
\emph{negative comprehension} rule:
\begin{equation}
\label{eq:smso:negca}
\dfrac{\vec\varphi \thesis \varphi[\psi[y]/X]}
   {\vec\varphi \thesis \lnot \lnot \exists X \varphi}
\end{equation}

Second, for the extraction of \emph{finite-state synchronous} functions from proofs,
the induction scheme of $\MSO$ also has to be restricted.
Recall the \emph{deterministic formulae} of Fig.\ \ref{fig:mso:form}:
\[
  \delta,\delta'
\quad\bnf\quad
  \alpha \gs \delta \land \delta' \gs \lnot \varphi
\]

\noindent
Deterministic formulae are to be interpreted by deterministic
(not nec.\@ Büchi) automata, and thus
have trivial realizers in the sense of~\S\ref{sec:real}.
%
As a consequence, we can trivially realize
the following \emph{deterministic induction}
rule
(where $z, y$ do not occur free in $\vec\varphi,\delta$):
\begin{equation}
\label{eq:smso:ind}
\dfrac{\vec\varphi, \Zero(z) \thesis \delta[z/x]
   \qquad \vec\varphi, \Succ(y,z), \delta[y/x]\thesis \delta[z/x]}
   {\vec\varphi \thesis \delta}
\end{equation}

\noindent
In addition, since deterministic formulae have trivial realizers,
we can safely assume in $\SMSO$
the elimination of double negation on deterministic
formulae:
\begin{equation}
\label{eq:smso:dnedet}
\dfrac{\vec\varphi \thesis \lnot\lnot \delta}
  {\vec\varphi \thesis \delta}
\end{equation}

\noindent
Note that~(\ref{eq:smso:dnedet}) would follow,
using the rules of Fig.\ \ref{fig:mso:ded}, by simply assuming elimination of double
negation for atomic formulae.
Note also that~(\ref{eq:smso:dnedet})
would follow from induction in a setting
like Heyting arithmetic.

Furthermore, $\SMSO$ is equipped with a positive \emph{synchronous}
restriction of the comprehension rule of $\MSO$,
which gives Church-realizers for all solvable instances
of Church's synthesis.
This synchronous restriction of comprehension asks
the comprehension formula to be
\emph{uniformly bounded} in the following sense.

\begin{defi}[Relativized and Bounded Formulae]%
\label{def:smso:bound}
\hfill
\begin{enumerate}
\item Given formulae $\varphi$ and $\theta$ and a variable $y$,
the \emph{relativization of $\varphi$ to $\theta[y]$}
(notation $\varphi\restr\theta[y]$) is defined by induction
on $\varphi$ as usual:
\[
\begin{array}{c}
\alpha\restr\theta[y]
~\deq~
\alpha
\qquad
(\varphi \land \psi)\restr\theta[y]
~\deq~
\varphi\restr\theta[y] \land \psi\restr\theta[y]
\qquad
(\lnot \varphi)\restr\theta[y]
~\deq~
\lnot (\varphi\restr\theta[y])
\\[0.5em]

(\ex X \varphi)\restr\theta[y]
~\deq~
\ex X \varphi\restr\theta[y]
\qquad
(\ex x \varphi)\restr\theta[y]
~\deq~
\ex x (\theta[x/y] \land \varphi\restr\theta[y])
\end{array}
\]
where, in the clauses for $\exists$, the variables $x$ and $X$ are assumed not
to occur free in $\theta$.

\item
A formula $\hat\varphi$
is \emph{bounded by $x$}
if it is of the form $\psi\restr(y \Leq x)[y]$
(notation $\psi\restr[-\Leq x]$).
It is \emph{uniformly bounded}
if moreover $x$ is the only free individual variable of $\hat\varphi$.
\end{enumerate}
\end{defi}

\noindent
As we shall see in~\S\ref{sec:synch:char},
bounded formulae
correspond to the formulae of $\MSO$ over \emph{finite} words.
We are now ready to define the system $\SMSO$.

\begin{defi}[The Logic $\SMSO$]
The logic $\SMSO$ has the same language as $\MSO$.
Its deduction rules are those given in Fig.\ \ref{fig:mso:ded}
together with the rules of Fig.\ \ref{fig:mso:arith},
the rules (\ref{eq:smso:negca}), (\ref{eq:smso:ind}),
(\ref{eq:smso:dnedet}),
and
the following rule
of \emph{synchronous comprehension}
in which $\hat\varphi$ is uniformly bounded by $y$:
\[
\dfrac{\vec\varphi \thesis \psi[\hat\varphi[y]/X]}
  {\vec\varphi \thesis \ex X \psi}
\]
\end{defi}

\noindent
Similarly as with $\MSO$, we write $\vec\varphi \thesis_\SMSO \varphi$
if $\vec\varphi \thesis \varphi$ is provable in $\SMSO$,
and we write $\SMSO \thesis \varphi$ for $\thesis_\SMSO \varphi$.

\begin{rem}
As usual with natural deduction systems,
$\SMSO$ satisfies the substitution lemma,
which gives the admissibility of the cut rule.
We included that rule in $\SMSO$ because
it corresponds to the composition of realizers in the realizability model,
and thus
has a natural computational interpretation.
%
\end{rem}

\begin{nota}
In the following, we use a double dashed horizontal line
to denote admissible rules.
For instance, we freely use the \emph{weakening} rule
\[
\dfrac{\vec\varphi \thesis \varphi}
  {\vec\varphi,\psi \thesis \varphi}
\]
with the notation
\[
\text{
\AXC{$\vec\varphi \thesis \varphi$}
\doubleLine\dashedLine
\UIC{$\vec\varphi,\psi \thesis \varphi$}
\DisplayProof}
\]
\end{nota}

\begin{rem}
Note that $\SMSO$ has a limited set of connectives.
In contrast with $\MSO$, which is based on classical logic,
the derived connectives of Notation~\ref{not:prelim:der}
do not define the usual corresponding intuitionistic connectives.
For example,
with $\psi \limp \varphi = \lnot (\psi \land \lnot \varphi)$
as in Not.\ \ref{not:prelim:der},
while the usual $\limp$-introduction rule is admissible in $\SMSO$:
\[
\text{
\AXC{}
\UIC{$\vec\varphi,\psi\land \lnot\varphi \thesis \psi \land \lnot \varphi$}
\UIC{$\vec\varphi,\psi\land \lnot\varphi \thesis \psi$}
\AXC{$\vec\varphi,\psi \thesis \varphi$}
\doubleLine\dashedLine
\UIC{$\vec\varphi,\psi\land \lnot\varphi,\psi \thesis \varphi$}
\BIC{$\vec\varphi,\psi \land \lnot\varphi \thesis \varphi$}
\AXC{}
\UIC{$\vec\varphi,\psi \land \lnot\varphi \thesis \psi \land \lnot\varphi$}
\UIC{$\vec\varphi,\psi \land \lnot\varphi \thesis \lnot\varphi$}
\BIC{$\vec\varphi, \psi \land \lnot \varphi \thesis \False$}
\UIC{$\vec\varphi \thesis \psi \limp \varphi$}
\DisplayProof}
\]

\noindent
the elimination rule of $\limp$ is only admissible
for implications with \emph{deterministic} r.-h.s:
\[
\text{
\AXC{$\vec\varphi \thesis \psi$}
\doubleLine\dashedLine
\UIC{$\vec\varphi,\lnot\delta \thesis \psi$}
\AXC{}
\UIC{$\vec\varphi,\lnot \delta \thesis \lnot\delta$}
\BIC{$\vec\varphi,\lnot\delta \thesis \psi \land \lnot\delta$}
\AXC{$\vec\varphi \thesis \psi \limp \delta$}
\doubleLine\dashedLine
\UIC{$\vec\varphi,\lnot\delta \thesis \psi \limp \delta$}
\BIC{$\vec\varphi, \lnot\delta \thesis \False$}
\UIC{$\vec\varphi \thesis \lnot\lnot\delta$}
\UIC{$\vec\varphi \thesis \delta$}
\DisplayProof}
\]

\noindent
On the other hand, usual $\lnot$-rules are admissible in $\SMSO$
(even without using deterministic double negation elimination):
\begin{equation}
\label{eq:smso:neg}
\begin{array}{c}
\dfrac{\vec\varphi,\varphi \thesis \lnot\psi}
  {\vec\varphi,\psi \thesis \lnot\varphi}

\qquad\qquad

\dfrac{}{\vec\varphi, \varphi \thesis \lnot\lnot\varphi}

\qquad\qquad

\dfrac{\vec\varphi,\varphi \thesis \psi}
  {\vec\varphi,\lnot\psi \thesis \lnot\varphi}

\\\\

\dfrac{}{\vec\varphi,\lnot\lnot\lnot\varphi \thesis \lnot\varphi}

\qquad\qquad

\dfrac{\vec\varphi,\varphi \thesis \psi}
  {\vec\varphi,\lnot\lnot\varphi \thesis \lnot\lnot\psi}

\qquad\qquad

\dfrac{\vec\varphi,\psi \thesis \lnot\varphi}
  {\vec\varphi,\lnot\lnot\psi \thesis \lnot\varphi}
\end{array}
\end{equation}

\noindent
Indeed, the second rule of the first line follows from the first one,
and the third rule is obtained from the first two ones.
The rules of the second line all follow from the last two rules of the first
line.
Finally,
the first rule of the first line is obtained as usual:
\[
\text{
\AXC{$\vec\varphi,\varphi \thesis \lnot\psi$}
\doubleLine\dashedLine
\UIC{$\vec\varphi,\psi,\varphi \thesis \lnot\psi$}
\AXC{}
\UIC{$\vec\varphi,\psi,\varphi \thesis \psi$}
\BIC{$\vec\varphi,\psi,\varphi \thesis \False$}
\UIC{$\vec\varphi,\psi \thesis \lnot\varphi$}
\DisplayProof}
\]
\end{rem}

\subsection{A Glivenko Theorem for \texorpdfstring{$\MSO$}{MSO}}%
\label{sec:smso:glivenko}
The limited vocabulary of $\MSO$ without primitive universal quantifications
allows for a Glivenko Theorem,
in the sense that $\SMSO$ proves $\lnot\lnot\varphi$ iff $\MSO$ proves~$\varphi$.
While Glivenko's Theorem is often stated for propositional logic,
it also holds in presence of existential quantifications
(see \eg~\cite[Thm.\ 59.(b), \S81]{kleene52book} or~\cite[\S 10.1]{kohlenbach08book}),
but does not extend to universal quantifications
(see \eg~\cite[\S 11]{kohlenbach08book}).
In particular, we would have relied on
usual recursive negative translations if $\MSO$ 
had primitive universal quantifications.

\begin{thm}[Glivenko's Theorem for $\MSO$ and $\SMSO$]%
\label{thm:smso:glivenko}
If $\MSO \thesis \varphi$, then $\SMSO \thesis \lnot \lnot \varphi$.
\end{thm}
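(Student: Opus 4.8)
The plan is to prove Glivenko's theorem by induction on the derivation of $\vec\varphi \thesis_\MSO \varphi$, showing more generally that whenever $\varphi_1,\dots,\varphi_n \thesis_\MSO \varphi$ we have $\lnot\lnot\varphi_1,\dots,\lnot\lnot\varphi_n \thesis_\SMSO \lnot\lnot\varphi$ (so that the stated form, with empty context, is the special case). This generalized formulation is needed to make the induction go through, and it is the natural statement because $\lnot\lnot(-)$ behaves like a monad on formulae. The key auxiliary facts are exactly the admissible $\lnot$-rules of~(\ref{eq:smso:neg}), which the excerpt has already established: these give us that $\lnot\lnot$ is monotone (from $\varphi\thesis\psi$ infer $\lnot\lnot\varphi\thesis\lnot\lnot\psi$), that $\varphi\thesis\lnot\lnot\varphi$, and the ``Kleisli'' composition principle allowing us to move a hypothesis $\psi$ to $\lnot\lnot\psi$. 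Together these say precisely that the double-negation translation (which here is trivial on the syntax, since it only touches the turnstile) is sound.

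First I would handle the logical rules of Fig.~\ref{fig:mso:ded}. For the axiom rule, $\lnot\lnot\varphi\thesis_\SMSO\lnot\lnot\varphi$ is immediate. For $\land$-introduction and elimination one uses monotonicity of $\lnot\lnot$ plus the fact that $\lnot\lnot(\varphi\land\psi)$ and $\lnot\lnot\varphi\land\lnot\lnot\psi$ are interderivable in $\SMSO$ (provable from~(\ref{eq:smso:neg})). For the cut rule (second rule of Fig.~\ref{fig:mso:ded}) and the $\bot$-rules, one uses the last two rules of~(\ref{eq:smso:neg}), which are exactly the principles that let a double-negated hypothesis be discharged against a double-negated conclusion. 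For $\ex x$- and $\ex X$-introduction, from $\vec\varphi\thesis\varphi[y/x]$ one gets $\vec\varphi\thesis\ex x\varphi$ and then double-negates; under the translation this needs $\lnot\lnot\ex x\varphi$ to follow from $\lnot\lnot(\varphi[y/x])$, which again is an instance of monotonicity after $\varphi[y/x]\thesis\ex x\varphi$. The existential \emph{elimination} rules are the delicate logical cases: here one needs that from $\lnot\lnot\ex x\varphi$ and $\lnot\lnot\varphi\thesis\lnot\lnot\psi$ (with $x$ not free in the conclusion) one can derive $\lnot\lnot\psi$ — this is where the ``Glivenko-style'' commutation of $\lnot\lnot$ past $\ex$ is genuinely used, and it works precisely because the eliminated formula sits to the left of a $\lnot$, so no $\lnot\lnot\forall$ ever appears. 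The double-negation elimination rule~(\ref{eq:mso:dne}) of $\MSO$ becomes trivial under the translation, since $\lnot\lnot\lnot\lnot\varphi\thesis_\SMSO\lnot\lnot\varphi$ by the fourth rule of~(\ref{eq:smso:neg}).

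Next I would treat the non-logical rules. The arithmetic rules of Fig.~\ref{fig:mso:arith} all have deterministic (indeed atomic) conclusions and deterministic hypotheses, so their double-negated forms are derivable in $\SMSO$ using~(\ref{eq:smso:dnedet}) together with the ordinary $\SMSO$ rules — in fact once we know $\lnot\lnot$ of each such rule's premises, deterministic double-negation elimination collapses everything back to the atomic level where the rule is available outright. The induction rule of~(\ref{eq:mso:ax}) is the one case needing care: its conclusion $\varphi$ need not be deterministic, so we cannot appeal to~(\ref{eq:smso:dnedet}). The plan here is to observe that $\lnot\lnot\varphi$ is stable, in the sense that $\SMSO$ proves $\lnot\lnot\lnot\lnot\varphi\thesis\lnot\lnot\varphi$, and moreover $\lnot\lnot(\psi_1\land\psi_2)\thesis\lnot\lnot\psi_1\land\lnot\lnot\psi_2$, so running $\SMSO$'s \emph{ordinary} induction — the deterministic induction~(\ref{eq:smso:ind}) is too weak, but we should instead derive a ``$\lnot\lnot$-induction'' scheme: apply~(\ref{eq:smso:ind}) not to $\varphi$ but note that we only need to conclude $\lnot\lnot\varphi$, and whether general induction for $\lnot\lnot\varphi$ is available in $\SMSO$ is exactly the crux. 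I expect the main obstacle to be verifying that $\SMSO$ admits induction with a $\lnot\lnot$-formula as the induction predicate; the resolution is that one does \emph{not} need the conclusion to be deterministic because the translated premises give $\lnot\lnot(\varphi[z/x])$ at the base and $\lnot\lnot(\varphi[y/x])\thesis\lnot\lnot(\varphi[z/x])$ at the step, and these can be fed into an induction whose predicate is $\lnot\lnot\varphi$ — which is legitimate in $\SMSO$ because $\SMSO$'s comprehension is still available on \emph{uniformly bounded} relativizations, and one must check that a $\lnot\lnot$-formula can be relativized appropriately, or alternatively derive the needed $\lnot\lnot$-induction directly from~(\ref{eq:smso:ind}) by the standard trick of replacing $\varphi$ with $\lnot\lnot\varphi$ throughout and checking each resulting sequent. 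Finally, the comprehension rule of $\MSO$ is translated to the negative comprehension rule~(\ref{eq:smso:negca}), which $\SMSO$ has as a primitive rule — under the translation $\vec\varphi\thesis_\MSO\varphi[\psi[y]/X]$ becomes (by IH) $\lnot\lnot\vec\varphi\thesis_\SMSO\lnot\lnot(\varphi[\psi[y]/X])$, and then~(\ref{eq:smso:negca}) together with monotonicity of $\lnot\lnot$ yields $\lnot\lnot\vec\varphi\thesis_\SMSO\lnot\lnot\ex X\varphi$, which is what we want. This last correspondence is the whole point of having included~(\ref{eq:smso:negca}) as a rule of $\SMSO$ in the first place.
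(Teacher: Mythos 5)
Your treatment of the logical rules, double-negation elimination, and comprehension matches the intended argument, and your generalized induction hypothesis (double-negating the context as well) is a standard and workable variant of the one the paper uses (which leaves the context untouched and relies on the Kleisli-style rules of~(\ref{eq:smso:neg}) to move hypotheses under $\lnot\lnot$ as needed). However, there is a genuine gap at the induction case, which you yourself flag as ``the crux'' and then fail to resolve. You assert that the deterministic induction rule~(\ref{eq:smso:ind}) is ``too weak'' and go looking for a separate $\lnot\lnot$-induction scheme, at one point invoking synchronous comprehension on uniformly bounded relativizations --- which is irrelevant here, since induction in this system is a rule schema over formulae and needs no comprehension instance. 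The observation you are missing is purely syntactic: by the grammar of Fig.~\ref{fig:mso:form}, $\lnot\varphi$ is a \emph{deterministic} formula for \emph{arbitrary} $\varphi$, hence so is $\lnot\lnot\varphi = \lnot(\lnot\varphi)$. Deterministic induction therefore applies directly with $\delta := \lnot\lnot\varphi$: the translated base premise is $\vec\varphi,\Zero(z)\thesis\lnot\lnot\varphi[z/x]$, the translated step premise $\vec\varphi,\Succ(y,z),\varphi[y/x]\thesis\lnot\lnot\varphi[z/x]$ becomes $\vec\varphi,\Succ(y,z),\lnot\lnot\varphi[y/x]\thesis\lnot\lnot\varphi[z/x]$ by the last rule of~(\ref{eq:smso:neg}), and~(\ref{eq:smso:ind}) closes the case in one step. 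This is precisely why the system's induction rule was restricted to deterministic formulae rather than, say, atomic ones: it is calibrated to cover exactly the doubly-negated formulae that Glivenko's theorem requires.

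A smaller inaccuracy: your claim that all arithmetic rules of Fig.~\ref{fig:mso:arith} have deterministic premises and conclusions is false for the totality rules $\vec\varphi\thesis\ex y\,\Zero(y)$ and $\vec\varphi\thesis\ex y\,\Succ(x,y)$ (existential, hence non-deterministic, conclusions) and for equality elimination (whose premise and conclusion involve an arbitrary $\varphi$). The former are harmless --- they are axioms of $\SMSO$ as well, and one adds $\lnot\lnot$ by~(\ref{eq:smso:neg}) --- but the latter cannot be collapsed via~(\ref{eq:smso:dnedet}); it must be handled by deriving $\vec\varphi,\varphi[x/z],x\Eq y\thesis\varphi[y/z]$ in $\SMSO$ and then lifting both hypotheses under $\lnot\lnot$ with the rules of~(\ref{eq:smso:neg}) before cutting against the translated premises.
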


\begin{proof}
By induction on $\MSO$-derivations, we show that if
$\vec\varphi \vdash \varphi$ is derivable in $\MSO$,
then
$\vec\varphi \vdash \lnot\lnot \varphi$
is derivable in $\SMSO$.
This amounts to showing that for every $\MSO$-rule of the form
\[
\dfrac{{(\vec\varphi_i \thesis \varphi_i)}_{i \in I}}
  {\vec\psi \thesis \psi}
\]
the following rule is admissible in $\SMSO$:
\[
\dfrac{{(\vec\varphi_i \thesis \lnot\lnot\varphi_i)}_{i \in I}}
  {\vec\psi \thesis \lnot\lnot\psi}
\]

\noindent
The logical rules of $\MSO$
may be treated exactly as in the usual proof of Glivenko's Theorem.
%
It remains to deal with the non-logical rules of $\MSO$.
\begin{description}
\item[Comprehension]
%
We have to prove that the following is admissible in $\SMSO$:
\[
\dfrac{\vec \varphi \thesis \lnot\lnot\psi[\varphi[y]/X]}
  {\vec\varphi \thesis \lnot\lnot \ex X \psi}
\]

\noindent
But this directly follows from the negative comprehension scheme of $\SMSO$
together with the last rule of~(\ref{eq:smso:neg}):
\[
\text{
\AXC{$\vec\varphi \thesis \lnot\lnot \psi[\varphi[y]/X]$}
\AXC{}
\UIC{$\vec\varphi, \psi[\varphi[y]/X] \thesis \psi[\varphi[y]/X]$}
\UIC{$\vec\varphi, \psi[\varphi[y]/X] \thesis \lnot\lnot \ex X \psi$}
\doubleLine\dashedLine
\UIC{$\vec\varphi, \lnot\lnot \psi[\varphi[y]/X] \thesis \lnot\lnot \ex X \psi$}
\BIC{$\vec\varphi \thesis \lnot\lnot \ex X \psi$}
\DisplayProof}
\]

\item[Induction]
We need to show that the following is admissible in $\SMSO$:
\[
\tag{$z,y$ not free in $\vec\varphi,\varphi$}
\dfrac{\vec\varphi, \Zero(z) \thesis \lnot\lnot \varphi[z/x]
  \qquad \vec\varphi, \Succ(y,z), \varphi[y/x] \thesis \lnot\lnot\varphi[z/x]}
 {\vec\varphi \thesis \lnot\lnot\varphi}
\]

\noindent
But this follows from deterministic induction together
with the last rule of~(\ref{eq:smso:neg}):
\[
\text{
\AXC{$\vec\varphi, \Zero(z) \thesis \lnot\lnot \varphi[z/x]$}
\AXC{$\vec\varphi, \Succ(y,z), \varphi[y/x] \thesis \lnot\lnot\varphi[z/x]$}
\doubleLine\dashedLine
\UIC{$\vec\varphi, \Succ(y,z), \lnot\lnot \varphi[y/x] \thesis \lnot\lnot\varphi[z/x]$}
\BIC{$\vec\varphi \thesis \lnot\lnot \varphi$}
\DisplayProof}
\]

\item[Arithmetic Rules (Fig.\ \ref{fig:mso:arith})]
All these rules can be treated the same way.
We only detail the case of elimination of equality.
We have to show that the following rule is admissible in $\SMSO$:
\begin{equation}
\label{eq:smso:eq}
\dfrac{\vec\varphi \thesis \lnot\lnot\varphi[x/z]
  \qquad \vec\varphi \thesis \lnot\lnot(x \Eq y)}
  {\vec\varphi \thesis \lnot\lnot\varphi[y/z]}
\end{equation}

\noindent
First, note that elimination of equality in $\SMSO$ gives
\[
\dfrac{}
  {\vec\varphi, \varphi[x/z], x \Eq y \thesis \varphi[y/z]}
\]
from which (\ref{eq:smso:neg}) gives
\[
\dfrac{}
  {\vec\varphi, \lnot\lnot\varphi[x/z], \lnot\lnot(x \Eq y) \thesis
  \lnot\lnot\varphi[y/z]}
\]

\noindent
We then obtain the rule (\ref{eq:smso:eq}) by successively cutting $\lnot\lnot\varphi[x/z]$
and $\lnot\lnot(x \Eq y)$
with the corresponding premise of (\ref{eq:smso:eq}).
\qedhere
\end{description}
\end{proof}

\subsection{The Main Result}%
\label{sec:smso:main}
We are now ready to state the main result of this paper,
which says that $\SMSO$ is correct and complete (\wrt\@ its provable existentials)
for Church's synthesis.

\begin{thm}[Main Theorem]%
\label{thm:smso:main}
Consider a formula $\varphi(\vec X;\vec Y)$
with only $\vec X,\vec Y$ free.
\begin{enumerate}
\item\label{thm:smso:main:cor}
From a proof of $\exists {\vec Y} \varphi(\vec X;\vec Y)$
in $\SMSO$,
one can extract a finite-state synchronous Church-realizer of
$\varphi(\vec X;\vec Y)$.

\item\label{thm:smso:main:compl}
If $\varphi(\vec X;\vec Y)$ admits a (finite-state) synchronous Church-realizer,
then
$\ex {\vec Y} \nt\varphi(\vec X;\vec Y)$
is provable in $\SMSO$.
\end{enumerate}
\end{thm}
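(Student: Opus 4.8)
The plan is to prove the two halves of Theorem~\ref{thm:smso:main} separately, using the realizability model of~\S\ref{sec:real} for part~(\ref{thm:smso:main:cor}) and the Glivenko Theorem (Thm.~\ref{thm:smso:glivenko}) together with the completeness of $\MSO$ (Thm.~\ref{thm:mso:compl}) for part~(\ref{thm:smso:main:compl}).

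\medskip

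\noindent\textbf{Part~(\ref{thm:smso:main:cor}): Correctness.} The key tool is the automata-based realizability interpretation of $\SMSO$ to be developed in~\S\ref{sec:real}, whose realizers live in the category $\Mealy$ of finite-state synchronous functions. First I would establish the central \emph{Adequacy Lemma}: every $\SMSO$-derivation of $\vec\varphi \thesis \varphi$ can be interpreted as a $\Mealy$-morphism sending (tuples of) realizers of the hypotheses $\vec\varphi$ to a realizer of the conclusion $\varphi$. This is proved by induction on the derivation, checking each rule of Fig.~\ref{fig:mso:ded}, Fig.~\ref{fig:mso:arith}, and the rules~(\ref{eq:smso:negca}), (\ref{eq:smso:ind}), (\ref{eq:smso:dnedet}) and synchronous comprehension. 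The delicate cases are: synchronous comprehension, where the uniform boundedness of $\hat\varphi$ by $y$ is exactly what guarantees that the witnessing monadic predicate can be produced by a DMM reading the relevant parameters letter-by-letter (here I would invoke the correspondence, announced in~\S\ref{sec:synch:char}, between uniformly bounded formulae and $\MSO$ over finite words, so that the bounded witness is computed by a finite automaton); deterministic induction~(\ref{eq:smso:ind}) and deterministic double-negation elimination~(\ref{eq:smso:dnedet}), which are realized trivially because deterministic formulae are interpreted by deterministic (non-Büchi) automata and hence carry no genuine computational content; and negative comprehension~(\ref{eq:smso:negca}), realized trivially for the same reason that $\lnot\lnot(-)$ collapses computational content. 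Once Adequacy is in place, I apply it to a closed proof of $\exists \vec Y\, \varphi(\vec X;\vec Y)$: since $\vec X$ are free they are interpreted as the ``generic'' input realizers (the identity $\Mealy$-maps on $\two^p$), and the realizer extracted for the conclusion is, by the realizability clause for $\exists$, a pair consisting of a witness $\vec F : \two^p \to_\Mealy \two^q$ for $\vec Y$ together with a realizer of $\varphi(\vec X; \vec F(\vec X))$. The mere existence of a realizer of $\varphi(\vec\seq; \vec F(\vec\seq))$ forces $\Std \models \varphi(\vec\seq;\vec F(\vec\seq))$ for all $\vec\seq$ — this soundness of realizability over truth is where McNaughton's Theorem enters (to relate acceptance of the deterministic Muller automata used in the interpretation back to the standard semantics), as flagged in the introduction. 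Hence $\vec F$ is a finite-state synchronous Church-realizer of $\varphi$.

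\medskip

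\noindent\textbf{Part~(\ref{thm:smso:main:compl}): Completeness.} Suppose $\vec F : \two^p \to_\Mealy \two^q$ Church-realizes $\varphi(\vec X;\vec Y)$, i.e.\ $\Std \models \varphi(\vec\seq, \vec F(\vec\seq))$ for all $\vec\seq$. The strategy is to reflect $\vec F$ inside the logic. Using the representation of DMMs in $\MSO$ developed in~\S\ref{sec:synch}, there is an $\MSO$-formula $\mathrm{Graph}_{\vec F}(\vec X;\vec Y)$ expressing ``$\vec Y = \vec F(\vec X)$'', which moreover is \emph{functional} and, being the graph of a synchronous map, can be cast as (equivalent to) a uniformly bounded/deterministic description amenable to the synchronous comprehension rule of $\SMSO$. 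Concretely: (i)~since $\Std \models \forall \vec X\, \varphi(\vec X, \vec F(\vec X))$, Thm.~\ref{thm:mso:compl} gives an $\MSO$-proof of $\forall \vec X\,\exists \vec Y\,(\mathrm{Graph}_{\vec F}(\vec X;\vec Y) \land \varphi(\vec X;\vec Y))$, hence by Glivenko (Thm.~\ref{thm:smso:glivenko}) an $\SMSO$-proof of its double negation; (ii)~independently, in $\SMSO$ one proves $\exists \vec Y\, \mathrm{Graph}_{\vec F}(\vec X;\vec Y)$ \emph{positively}, by using synchronous comprehension to build each component $F_i$ as a uniformly bounded predicate tracking the state of the machine $\At M_i$ up to the current position — this is the one step where the finite-state synchronous nature of $\vec F$ is genuinely used, and it is the part I expect to be the main obstacle, since it requires the technical machinery of~\S\ref{sec:synch} on encoding iterated transition functions by bounded formulae and checking that the resulting witness provably satisfies $\mathrm{Graph}_{\vec F}$; (iii)~combining the positive witness for $\vec Y$ with the $\SMSO$-provable deterministic fact that $\mathrm{Graph}_{\vec F}(\vec X;\vec Y) \land \lnot\lnot \varphi(\vec X;\vec Y)$ entails $\lnot\lnot\varphi$ at that very witness — using that $\mathrm{Graph}_{\vec F}$ is deterministic so double negations over it can be eliminated via~(\ref{eq:smso:dnedet}) — yields $\exists \vec Y\, \lnot\lnot \varphi(\vec X;\vec Y)$, i.e.\ $\exists \vec Y\, \nt\varphi(\vec X;\vec Y)$, in $\SMSO$, as required. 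The routine verifications are the $\MSO$-provable functionality and totality of $\mathrm{Graph}_{\vec F}$ and the commutations of $\lnot\lnot$ with the connectives present; the substantive work is step~(ii), the in-logic construction of the bounded witness for the Mealy machine.
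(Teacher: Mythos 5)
Your proposal is correct and, for part~(\ref{thm:smso:main:cor}), follows essentially the paper's route: the Adequacy Theorem (Thm.~\ref{thm:real:adeq}) proved by induction on derivations, the characterization of realizers of projected automata (Prop.~\ref{prop:real:realproj}) to read off the pair $\pair{\vec F, G}$, and the soundness of the automata interpretation (Prop.~\ref{prop:real:cor}, which is where McNaughton's Theorem is hidden via the complementation construction) to conclude that the extracted $\vec F$ Church-realizes $\varphi$. For part~(\ref{thm:smso:main:compl}) you take a mildly different and more roundabout decomposition. The paper does not introduce a graph formula at all: it takes the uniformly bounded deterministic formulae $\vec\delta$ that $x$-represent the components $F_i$ (Prop.~\ref{prop:synch:repr}), observes \emph{semantically} that $\Std \models \forall\vec X\,\varphi[\vec{\delta[x]}/\vec Y]$, obtains $\MSO \thesis \varphi[\vec{\delta[x]}/\vec Y]$ by completeness (Thm.~\ref{thm:mso:compl}), hence $\SMSO \thesis \lnot\lnot\varphi[\vec{\delta[x]}/\vec Y]$ by Glivenko (Thm.~\ref{thm:smso:glivenko}), and then applies synchronous comprehension $q$ times --- one direct substitution, no intermediate relation. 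Your three-step route through $\mathrm{Graph}_{\vec F}$ does go through (with $\mathrm{Graph}_{\vec F}(\vec X;\vec Y) \deq \bigwedge_i \forall x(Y_i x \liff \delta_i(\vec X,x))$, your step~(ii) reduces to exactly the paper's use of Prop.~\ref{prop:synch:repr} plus completeness, Glivenko and deterministic double-negation elimination), but it carries redundancy: your step~(i) is unnecessary, since $\mathrm{Graph}_{\vec F}(\vec X;\vec Y) \thesis_\MSO \varphi(\vec X;\vec Y)$ already follows from completeness and Glivenko then yields $\mathrm{Graph}_{\vec F} \thesis_\SMSO \lnot\lnot\varphi$ directly, after which $\exists$-introduction and $\exists$-elimination finish; and the ``provable functionality'' of $\mathrm{Graph}_{\vec F}$ that you list as a verification is never actually needed. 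The paper's direct substitution buys a shorter proof with fewer in-logic obligations; your version isolates the reusable fact that every finite-state synchronous function is provably total as a relation, at the cost of extra bookkeeping.
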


\noindent
The correctness part~(\ref{thm:smso:main:cor}) of Thm.\ \ref{thm:smso:main}
is be proved in~\S\ref{sec:real} using a notion of realizability
for $\SMSO$ based on automata and synchronous finite-state
functions.
The completeness part~(\ref{thm:smso:main:compl}) is proved
in~\S\ref{sec:synch:repr}, relying on the completeness of the axiomatization
of $\MSO$ (Thm.\ \ref{thm:mso:compl}) together with the correctness
of the negative translation $\nt{(-)}$ (Thm.\ \ref{thm:smso:glivenko}).

\section{On the Representation of Deterministic Mealy Machines in \texorpdfstring{$\MSO$}{MSO}}%
\label{sec:synch}

\noindent
This section gathers several (possibly known)
results related to the representation of DMMs in $\MSO$.
We begin in~\S\ref{sec:synch:repr} with the completeness part of
Thm.\ \ref{thm:smso:main},
which follows usual representations of automata in $\MSO$
(see \eg~\cite[\S5.3]{thomas97handbook}).
In~\S\ref{sec:synch:rec},
we then recall from~\cite{siefkes70lnm,riba12ifip}
the \emph{Recursion Theorem}, 
which is
a convenient tool to reason on runs of deterministic automata in $\MSO$.
In~\S\ref{sec:synch:char} we state a 
Lemma for the correctness part of Thm.\ \ref{thm:smso:main},
which relies on the usual translation of $\MSO$-formulae over \emph{finite words}
to DFAs (see~\eg~\cite[\S3.1]{thomas97handbook}).
Finally, in~\S\ref{sec:synch:synchbounded} we give a possible strengthening of
the synchronous comprehension
rule of $\SMSO$, based on Büchi's Theorem~\ref{thm:mso:dec}.

We work with the following notion of representation.
Recall from~\S\ref{sec:prelim:not} that
for $k \in \NN$,
we still write $k$ for the function from
$\NN$ to $\two$ which takes $n$ to $1$ iff $n = k$.

\begin{defi}[Representation]%
\label{def:synch:repr}
Let $\varphi$ be a formula with free variables
among $z,x_1,\dots,x_\ell$, $X_1,\dots,X_p$. 
We say that $\varphi$ \emph{$z$-represents}
$F : \two^\ell \times \two^p \longto_\Mealy \two$ if
for all $n \in \NN$,
all $\vec\seq \in {(\two^\omega)}^p$,
and all $\vec k \in \NN^\ell$ such that $k_i \leq n$
for all $i \leq \ell$,
we have
\begin{equation}
\label{eq:synch:repr}
F(\vec k,\vec\seq)(n) = 1
\qquad\text{iff}\qquad
\Std \models \varphi[n/z,\vec k/\vec x,\vec\seq/\vec X]
\end{equation}
\end{defi}

\noindent
For $F : \two^\ell \times \two^p \to_\Mealy \two$
as in Def.~\ref{def:synch:repr},
we write
$F : \two^p \to_\Mealy \two$
(resp.\@ $F : \two^\ell \to_\Mealy \two$)
in case $\ell = 0$ (resp.\@ $p = 0$).

\subsection{Internalizing Deterministic Mealy Machines in \texorpdfstring{$\MSO$}{MSO}}%
\label{sec:synch:repr}
The completeness part~(\ref{thm:smso:main:compl}) of Thm.\ \ref{thm:smso:main}
relies on the following simple fact.

\begin{prop}%
\label{prop:synch:repr}
For every 
finite-state synchronous $F : \two^p \longto_\Mealy \two$,
one can build a deterministic uniformly bounded formula
$\delta(\vec X,x)$ 
which $x$-represents $F$.
\end{prop}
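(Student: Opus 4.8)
The plan is to internalize the DMM inducing $F$ directly in $\MSO$, encoding its run by monadic predicates bounded by $x$. Given $F : \two^p \to_\Mealy \two$, fix a DMM $\At M$ with $Q_{\At M} = \{q_0,\dots,q_{m-1}\}$ (initial state $\init q_{\At M}$) inducing $F$. First I would introduce, for each state $q_i$, a formula $\chi_i(\vec X, z)$ asserting ``the run of $\At M$ on $\vec\seq\restr z$ ends in state $q_i$''. The natural definition uses a comprehension witness: by the Recursion Theorem of~\S\ref{sec:synch:rec} (recalled from~\cite{siefkes70lnm,riba12ifip}), one can define, inside $\MSO$, predicates $R_0,\dots,R_{m-1}$ with $z \In R_i$ iff $\Seq\trans_{\At M}(\vec\seq\restr z) = q_i$, characterized by the base clause ($\Zero(z) \limp (z \In R_i \liff i = \texttt{index}(\init q_{\At M}))$) and the step clause driven by $\trans_{\At M}$ reading the letter $(\seq_1(z),\dots,\seq_p(z)) \in \two^p$ (which is a deterministic formula in the $X_j$'s and $z$). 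Crucially, since the update at step $z$ depends only on the state at step $z$ and on membership of $z$ in finitely many of the $X_j$, these clauses are expressible by a deterministic formula, and the run predicate is uniquely determined.

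Next I would write down $\delta(\vec X, x)$ as a finite disjunction over those states $q_i$ from which $\At M$ emits output $1$: concretely,
\[
\delta(\vec X, x)
\quad\deq\quad
\bigvee_{i \,:\, \text{output condition}}
\Big(\, \chi_i(\vec X, x) \ \land\ \eta_i(\vec X, x) \,\Big),
\]
where $\chi_i$ says the run is in $q_i$ just before reading position $x$, and $\eta_i$ reads the bits $X_1 x,\dots,X_p x$ to test which letter $\al a \in \two^p$ is read, picking out those pairs $(q_i,\al a)$ with $\trans_{\At M}^o(q_i,\al a) = 1$. This matches the definition $F(\vec\seq)(n) = \trans_{\At M}^o(\Seq\trans_{\At M}(\vec\seq\restr n), \vec\seq(n))$ exactly, so $\delta$ $x$-represents $F$ in the sense of~(\ref{eq:synch:repr}); this correctness claim is verified by a straightforward induction on $n$ using the recursive characterization of the $\chi_i$, together with Büchi/Siefkes completeness (Thm.~\ref{thm:mso:compl}) to transfer between ``$\Std\models$'' and provability as needed.

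The remaining point is that $\delta(\vec X,x)$ is \emph{deterministic} and \emph{uniformly bounded}. For determinism: the connectives used are conjunction, disjunction (which unfolds to $\lnot(\lnot-\land\lnot-)$), negation, and atoms — all within the $\delta$-grammar of Fig.~\ref{fig:mso:form}; the only subtlety is that the run predicates $\chi_i$ must themselves be presented as deterministic formulae rather than genuine second-order quantifications, so I would take them from the Recursion Theorem in the form of explicit deterministic formulae (or, equivalently, relativize the quantifiers defining them to $[-\Leq x]$, which preserves determinism since relativization commutes with $\land,\lnot$ and only inserts bounded guards on individual quantifiers). For uniform boundedness: I would replace $\delta$ by its relativization $\delta\restr[-\Leq x]$, which bounds every individual quantifier by $x$ and leaves $x$ as the unique free individual variable; since the representation condition only ever evaluates $\delta$ at arguments $\vec k$ with $k_i \leq n = x$, relativizing changes nothing semantically (this is the content of the observation in~\S\ref{sec:synch:char} that bounded formulae capture $\MSO$ over finite words).

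The main obstacle I expect is precisely the tension between wanting the run predicates $\chi_i$ available as genuine formulae and needing the overall formula to stay in the deterministic fragment (no existential monadic quantifiers). Handling this cleanly is what the Recursion Theorem is for: it provides a deterministic-formula presentation of ``the state after $z$ steps'', so that the whole construction stays syntactically deterministic; after that, determinism and uniform boundedness are routine syntactic checks, and correctness is a routine induction on the length of the read prefix.
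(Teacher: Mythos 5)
Your proposal is correct and follows essentially the same route as the paper: both internalize the (unique) run of the DMM via second-order quantification over candidate run predicates whose defining clauses have all individual quantifiers bounded by $x$, exploiting the fact that the defined universal quantifier is a negation and hence lands in the deterministic fragment. The only differences are presentational — the paper packages the run and the output into a single formula $\forall \vec Q\,\forall Y(\theta \limp Y x)$ rather than disjoining per-state predicates, and it does not need the Recursion Theorem or completeness here since the representation condition~(\ref{eq:synch:repr}) is purely semantic.
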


\begin{proof}
The proof is a simple adaptation of the usual pattern
(see \eg~\cite[\S5.3]{thomas97handbook}).
Let $F : \two^p \to_\Mealy \two$ be induced by
a DMM $\At M$.
W.l.o.g.\@ we can assume the state set of $\At M$
to be $\two^q$ for some $q \in \NN$.
The transition function $\trans$ of $\At M$ is thus of the form
\[
\trans ~~:~~
  \two^q \times \two^p
  ~~\longto~~
  \two^q \times \two
\]

\noindent
Let $\form I[s_1,\dots,s_q]$
be a propositional formula in the propositional variables
$s_1,\dots,s_q$ such that for $\vec{\al s} \in \two^q$,
$\form I[\vec{\al s}]$ holds iff $\vec{\al s}$ is the initial
state of $\At M$.
Further, let
\[
\form H[s_1,\dots,s_q \,,\, a_1,\dots,a_p \,,\, b \,,\, s'_1,\dots,s'_q]
\]
be a propositional formula in the propositional variables
$s_1,\dots,s_q, a_1,\dots,a_p, b, s'_1,\dots,s'_q$
such that for $\vec{\al s} \in \two^q$, $\vec{\al a} \in \two^p$,
$\al b \in \two$ and $\vec{\al s}' \in \two^q$,
we have
$\form H[\vec{\al s}, \vec{\al a}, \al b , \vec{\al s}']$
iff
$\trans(\vec{\al s} , \vec{\al a}) = (\vec{\al s}' , \al b)$.
Then $F$ is $x$-represented by the formula
\begin{equation}
\label{eq:synch:form:repr}
\delta(\vec X,x) ~~\deq~~
\forall {\vec Q,Y}
\left(
\left[
\begin{array}{l}
\forall t \Leq x ( \Zero(t) \limp \form I[\vec Q(t)] ) \quad\land
\\
\all {t,t' \Leq x}
  ( \Succ(t,t') \limp \form H[\vec Q(t),\vec X(t),Y(t),\vec Q(t')] )
\end{array}
\right]
\longlimp%
Y x
\right)
\end{equation}
where $\vec X = X_1,\dots,X_p$ codes sequences of inputs,
$Y$ codes sequences of outputs,
and where $\vec Q = Q_1,\dots,Q_q$
codes runs.
\end{proof}

\begin{rem}%
\label{rem:synch:repr}
In the proof of Prop.\ \ref{prop:synch:repr}, since $\At M$ is deterministic,
we can assume
the formula $\form I[\vec Q(t)]$ to be of the form
$\bigconj_{1 \leq i \leq q}[Q_i(t) \liff \form B_i]$ with 
$\form B_i \in \{\True,\False\}$,
and,
for some propositional formulae $\form O[-,-],\vec{\form D}[-,-]$,
the formula $\form H[\vec Q(t),\vec X(t),Y(t),\vec Q(t')]$
to be of the form
\[
\left( Y(t) \longliff \form O[\vec Q(t),\vec X(t)] \right)
~~\land~~
\bigconj_{1 \leq i \leq q}
\left(Q_i(t') \longliff \form D_i[\vec Q(t),\vec X(t)] \right)
\]
where $\form O$ codes the outputs of $\At M$ while the
$\form D_i$'s represent its transition relation on states.
\end{rem}

\begin{exa}%
\label{ex:synch:mealy}
The function induced by
the DMM
of Ex.\ \ref{ex:prelim:mealy}.(\ref{ex:prelim:mealy:spec}) 
(depicted in Fig.\ \ref{fig:prelim:mealy}, right),
is represented by a formula of the form~(\ref{eq:synch:form:repr})
with $\vec Q = Q$ (since the machine has state set $\two$),
$\vec X=X$,
and
where
$\form I[-] \deq [(-) \liff \False]$ (since state $0$ is initial)
and (following Rem.\ \ref{rem:synch:repr})
\begin{equation}
\label{eq:synch:thomas}
\form O[Q(t),X(t)]
\quad=\quad \form D[Q(t),X(t)]
\quad=\quad
(\lnot Q(t) ~\lor~ [Q(t) \land X(t)])
\end{equation}
\end{exa}

The completeness of our approach to Church's synthesis
is obtained as follows.
\begin{proof}[Proof of Thm.\ \ref{thm:smso:main}.(\ref{thm:smso:main:compl})] 
Assume that
$\varphi(\vec X ; \vec Y)$
admits a
realizer
$F : \two^p \longto_\Mealy \two^q$.
Using the Cartesian structure of $\Mealy$
(Prop.\ \ref{prop:prelim:synchcart}),
we write $F = \vec F = F_1,\dots,F_q$ with $F_i : \two^p \to_\Mealy \two$.
We thus have
$\Std \models \varphi[\vec\seq/\vec X, \vec F(\vec\seq)/\vec Y]$
for all $\vec\seq \in {(\two^\omega)}^p \iso {(\two^p)}^\omega$.
Now, by Prop.\ \ref{prop:synch:repr}
there are
uniformly bounded (deterministic) formulae
$\vec\delta = \delta_1,\dots,\delta_q$,
with free variables among $\vec X,x$,
and
such that~(\ref{eq:synch:repr}) holds for all $i=1,\dots,q$.
It thus follows that
$\Std \models \forall \vec X \varphi[\vec{\delta[x]}/\vec Y]$.
Then, by completeness (Thm.\ \ref{thm:mso:compl})
we know that
$\thesis \varphi[\vec\delta[x]/\vec Y]$
is provable in $\MSO$,
and by negative translation (Thm.\ \ref{thm:smso:glivenko})
we
get $\SMSO \thesis \lnot\lnot\varphi[\vec\delta[x]/\vec Y]$.
We can then apply ($q$ times) the synchronous comprehension scheme of $\SMSO$
and obtain
$\SMSO \thesis \exists {\vec Y} \lnot\lnot\varphi(\vec X ; \vec Y)$.
\end{proof}

\begin{exa}%
\label{ex:synch:spec}
Recall the specification of Ex.\ \ref{ex:prelim:spec} from~\cite{thomas08npgi},
represented in $\MSO$ by the formula $\phi(X;Y)$ of Ex.\ \ref{ex:mso:spec}.
Write $\phi(X;Y) = \phi_0(X,Y) \land \phi_1(X,Y) \land \phi_2(X,Y)$
where
\[
\begin{array}{r !{\quad\deq\quad} l}
  \phi_0(X,Y)
& \all t (X t ~\limp~ Y t)
\\
  \phi_1(X,Y)
& \all t \all{t'} \big( \Succ(t,t') ~\limp~ \lnot Y t ~\limp~ Y t' \big)
\\
  \phi_2(X,Y)
& (\exists^\infty t~ \lnot X t) ~\limp~ (\exists^\infty t~ \lnot Y t)
\end{array}
\]
Note that $\phi_0$ and $\phi_1$ are monotonic in $Y$,
while $\phi_2$ is anti-monotonic in $Y$.
The formula $\phi_0$ is trivially realized by the identity
function $\two \to_\Mealy \two$
(see Ex.\ \ref{ex:prelim:mealy}.(\ref{ex:prelim:mealy:id})), 
which is itself represented by the deterministic
uniformly bounded formula $\delta_0(X,x) \deq (x \In X)$.
%
For $\phi_1$ (which asks $Y$ not to have two consecutive occurrences of $0$),
consider
\[
\delta_1(X,x)
\quad\deq\quad
\delta_0(X,x) ~\lor~ \exists t \Leq x \big( \Succ(t,x) \land \lnot X(t) \big)
\]
We have
$\MSO \thesis \phi_0[X,\delta_1[x]/Y]$ since
$\delta_0 \thesis_\MSO \delta_1$
and moreover $\MSO \thesis \phi_1[X,\delta_1[x]/Y]$
since
\[
\Succ(t,t') \,,\,  \lnot X t \,,\, \lnot\exists u \big( \Succ(u,t) \land \lnot X u \big)
~\thesis_\MSO~
X t' ~\lor~ \exists u' \big( \Succ(u',t') \land \lnot X u' \big)
\]
\end{exa}

The case of $\phi_2$ in Ex.\ \ref{ex:synch:spec} is more complex.
The point is that $\phi_2[\delta_1[x]/Y]$ does not hold because if
$\forall^\infty t~ \lnot X t$
(that is if $X$ remains constantly $0$ from some time on),
then we have $\forall^\infty t~ \delta_1[x]$
(so that $Y$ stays constantly $1$ from some time on).
On the other hand, the machine of
Ex.\ \ref{ex:prelim:mealy}.(\ref{ex:prelim:mealy:spec}) 
involves internal states,
and can be represented using a fixpoint formula
of the form~(\ref{eq:synch:form:repr}).
Reasoning on such formulae is easier with more advanced tools on $\MSO$,
that we provide in~\S\ref{sec:synch:rec}.

\subsection{The Recursion Theorem}%
\label{sec:synch:rec}
Theorem~\ref{thm:smso:main}.(\ref{thm:smso:main:compl}) 
ensures that $\SMSO$ is able to handle all solvable instances of Church's synthesis,
but it gives no hint on how to actually produce proofs. 
When reasoning on fixpoint formulae
as those representing DMMs in Prop.\ \ref{prop:synch:repr},
a crucial role is played by the
\emph{Recursion Theorem} for $\MSO$~\cite{siefkes70lnm}
(see also~\cite{riba12ifip}).
The Recursion Theorem
makes it possible to define
predicates by well-founded induction \wrt\@ the relation
$\Lt$ (Notation~\ref{not:prelim:der}).
Given formulae
$\vec\psi = \psi_1,\dots,\psi_q$
and variables $x$ and $\vec X = X_1,\dots,X_q$,
we say that
\emph{$\vec\psi$ is $x$-recursive in $\vec X$}
when the following formula
$\Rec_{\vec X}^x(\vec\psi)$ holds:
\[
\forall z \forall \vec Z \forall \vec Z'
\left(
\bigconj_{1 \leq i \leq q} \forall y \Lt z \left(Z_i y \longliff Z'_i y \right)
~~\longlimp~~
\bigconj_{1 \leq i \leq q}
  \left(\psi_i[\vec Z/\vec X,z/x] \longliff \psi_i[\vec Z'/\vec X,z/x]\right)
\right)
\]

\noindent
(where $z,\vec Z,\vec Z'$ do not occur free in $\vec\psi$).
For
$\Vec{\psi(\vec X,x)}$ $x$-recursive in $\vec X$, the Recursion
Theorem says that, provably in $\MSO$,
there are unique $\vec X$ such that $\forall x(X_i x \longliff \psi_i(\vec X,x))$
holds for all $i = 1,\dots,q$.

\begin{thm}[Recursion Theorem~\cite{siefkes70lnm}]%
\label{thm:synch:rec}
$\MSO$ proves the following:
\[
\begin{array}{l}
\Rec_{\vec X}^x(\vec\psi),~
\bigconj_{1 \leq i \leq q}
\forall z
\left(
Z_i z \longliff
\forall \vec X
\left[
\bigconj_{1 \leq j \leq q} \forall x \Leq z(X_j x \liff \psi_j)
~~\limp~~
X_i z
\right]
\right)

~\thesis
\\

\multicolumn{1}{r}{
\bigconj_{1 \leq i \leq q}
\forall x\left(
Z_i x \longliff \psi_i[\vec Z/\vec X]
\right)}

\\\\

\Rec_{\vec X}^x(\vec\psi),~
\bigconj_{1 \leq i \leq q}
\forall x(Z_i x \liff \psi_i[\vec Z/\vec X])
\land
\forall x(Z'_i x \liff \psi_i[\vec Z'/\vec X])

~\thesis~

\bigconj_{1 \leq i \leq q}
\forall x\left( Z_i x \liff Z'_i x \right)

\end{array}
\]
\end{thm}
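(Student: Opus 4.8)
The plan is to prove the two sequents by a combined use of the deterministic-style induction principles already available in $\MSO$ (the comprehension and induction rules of~(\ref{eq:mso:ax}), or equivalently strong induction~(\ref{eq:mso:ind}) which is provable by Lem.~\ref{lem:prelim:mso:ind}). For readability I would first treat the case $q=1$ and write $\psi$ for $\psi_1$, $X$ for $X_1$, etc.; the general case is notationally heavier but identical in structure, so I would only indicate at the end how the simultaneous fixpoint over $\vec X = X_1,\dots,X_q$ is handled (all the conjuncts are defined by the same well-founded recursion on $\Lt$ and the argument goes through componentwise).

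For the \emph{existence} part (first sequent), the idea is that the witness $Z$ is \emph{explicitly given} in the hypothesis: $Z$ is defined as the predicate $\{\,z \mid \forall X(\,\forall x \Leq z(Xx \liff \psi) \limp X z\,)\,\}$, which exists by comprehension. One then has to show $\forall x(Zx \liff \psi[Z/X])$. First I would prove the auxiliary claim
\[
\forall z\big(\,\forall x \Leq z\,(Zx \liff \psi[Z/X])\,\big)
\]
by strong induction on $z$ along $\Lt$: assuming $\forall y \Lt z\,\forall x \Leq y(Zx \liff \psi[Z/X])$, hence in particular $\forall x \Lt z(Zx \liff \psi[Z/X])$, one uses $\Rec_X^x(\psi)$ to replace, inside $\psi$ evaluated at $z$, the predicate $Z$ by any $X$ agreeing with $Z$ below $z$; combined with the definition of $Z$ at $z$ this yields $Zz \liff \psi[Z/X,z/x]$, and together with the induction hypothesis one closes the induction step. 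From the auxiliary claim, instantiating $x := z$ and using $\thesis \forall x\exists y(x \Lt y)$ (Fig.~\ref{fig:prelim:mso:lem}, item (4)) to reach arbitrarily large $z$, one obtains $\forall x(Zx \liff \psi[Z/X])$ as required.

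For the \emph{uniqueness} part (second sequent), given $Z, Z'$ both satisfying $\forall x(Zx \liff \psi[Z/X])$ and $\forall x(Z'x \liff \psi[Z'/X])$, I would prove $\forall x(Zx \liff Z'x)$ again by strong induction on $x$ along $\Lt$: assuming $\forall y \Lt x(Zy \liff Z'y)$, the recursiveness hypothesis $\Rec_X^x(\psi)$ gives $\psi[Z/X,x/x] \liff \psi[Z'/X,x/x]$, and then the two fixpoint equations give $Zx \liff \psi[Z/X] \liff \psi[Z'/X] \liff Z'x$. Passing to the $q$-ary case, the single predicate $Z$ is replaced by the tuple $\vec Z$ and $\Rec_{\vec X}^x(\vec\psi)$ is applied to all coordinates at once; no new ideas are needed.

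The main obstacle is purely bookkeeping rather than conceptual: one must be careful that the substitution $\psi[Z/X]$ and the bounded quantifier $\forall x \Leq z$ interact correctly, i.e.\ that $\psi$ evaluated at $z$ only ever ``looks below $z$'' in the intended sense, which is exactly what $\Rec_X^x(\psi)$ encodes (agreement on $\{y \mid y \Lt z\}$ suffices to fix the truth value at $z$). A second minor point is that strong induction over $\Lt$ is not a primitive rule of the present formulation of $\MSO$ but is provable (Lem.~\ref{lem:prelim:mso:ind}), so the inductions above should formally be phrased via~(\ref{eq:mso:ind}); alternatively one can reduce to the primitive $\Succ$-induction rule of~(\ref{eq:mso:ax}) using items (11), (12), (13) of Fig.~\ref{fig:prelim:mso:lem} to translate between $\Succ$ and $\Lt$.
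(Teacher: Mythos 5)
The paper does not actually prove this theorem: it is imported from Siefkes~\cite{siefkes70lnm} (see also~\cite{riba12ifip}), so there is no in-paper argument to compare yours against line by line. Your overall strategy is the standard one, and the uniqueness half (strong $\Lt$-induction, then $\Rec_{\vec X}^x(\vec\psi)$ plus the two fixpoint equations) is complete as written.

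The existence half, however, has a genuine gap at the sentence ``combined with the definition of $Z$ at $z$ this yields $Zz \liff \psi[Z/X,z/x]$'': that is exactly where the content of the theorem lives, and the plan does not say how to do it. The hypothesis defines $Z_i z$ as a universally quantified implication over all $\vec X$ satisfying $\bigconj_j \forall x \Leq z(X_j x \liff \psi_j)$. For the direction $Z_i z \imp \psi_i[\vec Z/\vec X, z/x]$ this is useless unless you exhibit such an $\vec X$: if no $\vec X$ satisfies the antecedent, $Z_i z$ is vacuously true and yields no information. So the induction step must contain an explicit construction, by comprehension, of a partial solution up to $z$, e.g.\ $X^*_i$ defined by $X^*_i x' \liff (x' \Lt z \land Z_i x') \lor (x' \Eq z \land \psi_i[\vec Z/\vec X, z/x])$, together with a verification, using $\Rec_{\vec X}^x(\vec\psi)$ and the induction hypothesis $\forall x \Lt z(Z_i x \liff \psi_i[\vec Z/\vec X])$, that $\vec X^*$ satisfies the bounded equations up to $z$; instantiating the definition of $Z_i z$ at $\vec X^*$ then gives $X^*_i z$, i.e.\ $\psi_i[\vec Z/\vec X,z/x]$. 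Symmetrically, the converse direction needs a nested ``bounded uniqueness'' induction showing that \emph{every} $\vec X$ satisfying the antecedent agrees with $\vec Z$ strictly below $z$, before $\Rec_{\vec X}^x(\vec\psi)$ can transfer $\psi_i$ at $z$ from $\vec X$ to $\vec Z$. Both pieces are routine in $\MSO$, but they are the crux of the argument and should not be elided. (A minor remark: the final appeal to unboundedness is unnecessary; from $\forall z\,\forall x \Leq z(\dots)$ you obtain $\forall x(\dots)$ simply by taking $z := x$, since $x \Leq x$.)
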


The following examples
give instances of application of
the Recursion Theorem 
in formal
reasoning on Mealy machines in $\MSO$.
The corresponding proofs in $\SMSO$ are then obtained by Thm.~\ref{thm:smso:glivenko}.

\begin{exas}%
\label{ex:synch:rec}
\hfill
\begin{enumerate}
\item\label{ex:synch:rec:mealy}
W.r.t.\@ the representation used in Prop.\ \ref{prop:synch:repr},
let $\theta(\vec X,\vec Q,Y,x)$ be 
\[
\forall t\Leq x \big( \Zero(t) \longlimp \form I[\vec Q(t)] \big)
~~\land~~
\all {t,t' \Leq x}
  \big( \Succ(t,t') \longlimp  \form H[\vec Q(t),\vec X(t),Y(t),\vec Q(t')] \big)
\]
so that
$\delta(\vec X,x) =
\forall \vec Q \forall Y ( \theta(\vec X,\vec Q,Y,x) \limp Y x )$.
The Recursion Theorem 
implies that, provably in $\MSO$, for all $\vec X$
there are unique predicates
$\vec Q,Y$ s.t.\@ $\all x \theta(\vec X,\vec Q,Y,x)$.

Indeed, assuming $\form I$ and $\form H$ are as in
Rem.\ \ref{rem:synch:repr},
we have that $\theta(\vec X,\vec Q,Y,x)$ is equivalent to
$\theta^o(\vec Q,\vec X,Y,x)
\land
\bigconj_{1 \leq i \leq q}
\theta_i(\vec Q,\vec X,Y,x)$,
where
\[
\begin{array}{r !{\quad} r !{\quad\deq\quad} l}
&
\theta^o(\vec X,\vec Q,Y,x)
&
\all {t \Leq x}
\big( Y(t) ~\longliff~ \form O[\vec Q(t),\vec X(t)] \big)
\\

&
\theta_i(\vec X,\vec Q,Y,x)
&
\all {t \Leq x}
\big( Q_i(t) ~\longliff~ \eta_i(\vec Q,\vec X,t) \big)
\\
\text{with}
&
\eta_i(\vec X,\vec Q,t)
&
( \Zero(t) \land \form B_i )
~\lor~
\exists u \Leq t
\big( \Succ(u,t) \land \form D_i[\vec Q(u),\vec X(u)] \big)
\end{array}
\]

\noindent
Now, apply Thm.\ \ref{thm:synch:rec}
to $\vec\eta$ (resp.\@ to $\form O[\vec Q(t),\vec X(t)]$)
which is $t$-recursive in $\vec Q$ (resp.\@ in $Y$).


\item
The machine of Ex.\ \ref{ex:prelim:mealy}.(\ref{ex:prelim:mealy:spec}) 
is represented as in item~(\ref{ex:synch:rec:mealy})
with $\form O$ and $\form D$ given by~(\ref{eq:synch:thomas})
(see Ex.\ \ref{ex:synch:mealy},
recalling that the machine as only two states).
Hence $\MSO$ proves that for all $X$ there are unique $Q$, $Y$
such that $\all x \theta(X,Q,Y,x)$.
Continuing now Ex.\ \ref{ex:synch:spec}, let
\[
\delta_2(X,x)
\quad\deq\quad
\all Q \forall Y
\big( \theta(X,Q,Y,x) ~\limp~ Y x \big)
\]

\noindent
It is not difficult to derive
$\MSO \thesis \phi_0[\delta_2[x]/Y] \land \phi_1[\delta_2[x]/Y]$.
The case of $\phi_2[\delta_2[y]/Y]$ amounts to showing~$\exists^\infty t\, (\lnot X t)
~\thesis_\MSO~
\exists^\infty t\,
\ex Q \exists Y (\theta(X,Q,Y,t) ~\land~ \lnot Y t)$.
Thanks to Thm.~\ref{thm:synch:rec}, this follows from
~$\all x \theta(X,Q,Y,x)
\,,\,
\exists^\infty t\, (\lnot X t)
~\thesis_\MSO~\exists^\infty t\, (\lnot Y t)$~which itself can be derived using induction.

%
%
\end{enumerate}
\end{exas}

\subsection{From Bounded Formulae to Deterministic Mealy Machines}%
\label{sec:synch:char}
We now turn to the extraction of finite-state synchronous functions
from bounded formulae.
This provides realizers of synchronous comprehension
for Thm.\ \ref{thm:smso:main}.(\ref{thm:smso:main:cor}). 
We rely on the standard translation of $\MSO$-formulae
\emph{over finite words} to DFAs (see~\eg~\cite[\S3.1]{thomas97handbook}).

\begin{lem}%
\label{lem:synch:char}
Let $\hat\varphi$ be a 
formula with free variables
among $z,x_1,\dots,x_\ell,X_1,\dots,X_p$,
and which is bounded by $z$.
Then $\hat\varphi$ $z$-represents
a finite-state synchronous 
$F: \two^\ell \times \two^p \to_\Mealy \two$
induced by a DMM computable from $\hat\varphi$.
\end{lem}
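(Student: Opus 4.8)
The plan is to translate the bounded formula $\hat\varphi$ into a DFA over the alphabet $\two^\ell \times \two^p$, and then read off a DMM from that DFA. First I would note that a formula bounded by $z$, say $\hat\varphi = \psi\restr(y \Leq z)[y]$, behaves exactly like an $\MSO$-formula over \emph{finite} words: relativizing every individual quantifier to $y \Leq z$ restricts all quantification to the prefix of length $z+1$, so the truth of $\hat\varphi[n/z,\vec k/\vec x,\vec\seq/\vec X]$ (for $k_i \leq n$) depends only on the finite word $(\vec k, \vec\seq)\restr(n{+}1)$ over $\two^\ell \times \two^p$. Here I would use the standard encoding of the values $\vec k \in \NN^\ell$ with $k_i \leq n$ as the singleton-style bit-sequences described in~\S\ref{sec:prelim:not} (the function sending $m$ to $1$ iff $m = k_i$), so that the free individual variables become extra monadic input tracks. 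Thus $\hat\varphi$ defines a language $L_{\hat\varphi} \subseteq (\two^{\ell+p})^+$ of finite words, and by the standard translation of $\MSO$ over finite words to DFAs (\cite[\S3.1]{thomas97handbook}) there is a DFA $\At A$, computable from $\hat\varphi$, recognizing $L_{\hat\varphi}$.

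Next I would turn the DFA $\At A$ into a DMM $\At M : \two^\ell \times \two^p \to \two$. Take $Q_{\At M}$ to be the state set of $\At A$ and $\init q_{\At M}$ its initial state; on reading a letter $(\vec a, \vec b) \in \two^\ell \times \two^p$ from state $q$, let $\At M$ move to the state $q' = \trans_{\At A}(q, (\vec a,\vec b))$ and output $1$ iff $q'$ is an accepting state of $\At A$. Formally, $\trans_{\At M}(q,(\vec a,\vec b)) = (q', o)$ where $o = 1$ iff $q' \in F_{\At A}$. The key point is that this DMM is synchronous and, for the induced function $F$, one has $F(\vec k, \vec\seq)(n) = 1$ iff the prefix $(\vec k,\vec\seq)\restr(n{+}1)$ is accepted by $\At A$, which by construction of $\At A$ holds iff $\Std \models \hat\varphi[n/z,\vec k/\vec x,\vec\seq/\vec X]$ — and this is exactly the representation condition~(\ref{eq:synch:repr}), granted $k_i \leq n$ for all $i$. (When some $k_i > n$ the encoding of $\vec k$ restricted to the prefix is all-zero on that track; but the statement only requires the equivalence when all $k_i \leq n$, so I would not need to worry about that case, matching Def.~\ref{def:synch:repr}.)

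The main obstacle — really the only substantive step — is carefully matching up the relativization bookkeeping of Def.~\ref{def:smso:bound} with the classical "$\MSO$ on finite words'' semantics underlying the DFA translation: one has to check that relativizing individual quantifiers to $y \Leq z$ and then evaluating at $z = n$ yields precisely the truth value of the corresponding finite-word formula on the length-$(n{+}1)$ prefix, and that atomic formulae $x \Eq y$, $x \Leq y$, $\Succ(x,y)$, $\Zero(x)$, $x \In X$ are all correctly captured over that prefix (for which the hypothesis $k_i \leq n$ is exactly what is needed). I would dispatch this by a routine induction on the structure of $\psi$, with the interesting clause being $\ex x$, where the relativized quantifier $\ex x(x \Leq z \land \cdots)$ matches the finite-word existential. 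Everything else — decidability/computability of the DFA, synchronicity of $\At M$, and the fact that "computable from $\hat\varphi$'' is preserved through the translation — follows immediately from the cited standard construction and from Definitions~\ref{def:mealy} and~\ref{def:synch:repr}.
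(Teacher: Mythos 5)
Your proposal is correct and follows essentially the same route as the paper: reduce the bounded formula to an $\MSO$-formula over finite words evaluated on the length-$(n{+}1)$ prefix, take the standard DFA for it, and turn that DFA into a DMM that outputs $1$ exactly when the successor state is accepting. The only detail the paper makes explicit that you leave implicit is how to eliminate the free variable $z$ before invoking the finite-word translation (the paper replaces it by the last position via the formula $\form{last}(t) \deq \forall x (t \Leq x \limp t \Eq x)$, rewriting $\psi\restr[-\Leq z]$ as $(\exists t(\form{last}(t) \land \psi[t/z]))\restr[-\Leq z]$), which is exactly the ``relativization bookkeeping'' you flag.
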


\begin{proof}
First, given a formula $\hat\varphi$ with free variables among
$z,x_1,\dots,x_\ell,X_1,\dots,X_p$, if $\hat\varphi$ is bounded
by $z$ then $\hat\varphi$ is of the form
$\psi\restr[- \Leq z]$, where the free variables of $\psi$
are among $z,x_1,\dots,x_\ell,X_1,\dots,X_p$.
But note that $\psi\restr[-\Leq z]$ is equivalent to the formula
$(\exists t(\form{last}(t) \land \psi[t/z]))\restr[-\Leq z]$,
where $\form{last}(t) \deq \forall x (t \Leq x \limp t \Eq x)$
and where $t$ does not occur free in $\psi$.
We can therefore assume that $\hat\varphi$ is of the form
$\psi\restr[- \Leq z]$ where $\psi$ has free variables among
$x_1,\dots,x_\ell,X_1,\dots,X_p$.

Then, for all $n \in \NN$,
all $\vec k \in \NN^\ell$ with $k_i \leq n$,
and all $\vec\seq \in {(\two^\omega)}^p$,
we have
$\Std \models \psi[\vec k/\vec x,\vec\seq/\vec X]\restr[- \Leq n]$
if and only if,
\emph{in the sense of $\MSO$ over finite words},
the formula $\psi$ holds
in the finite word $\pair{\vec k,\vec\seq\restr (n+1)}$.
Let $\At A = (Q,\init q,\trans,F)$
be a DFA recognizing the language of finite words satisfying
$\psi$~\cite[Thm.\@ 3.1]{thomas97handbook}.
Consider the DMM $\At M = (Q,\init q,\trans_{\At M})$
with $\trans_{\At M}(q,\al a) = (q',b)$ where $q' = \trans(q,\al a)$
and ($b = 1$ iff $q' \in F$), and let
$F : \two^\ell \times \two^p \longto_\Mealy \two$
be the function induced by $\At M$.
We then have
\begin{align*}
& \pair{\vec k,\vec\seq\restr(n+1)} \models \psi\restr[-\Leq n]
    \tag{in the sense of $\MSO$ over finite words} \\
& \iff \text{$\At A$ accepts the finite word $\pair{\vec k,\vec\seq\restr(n+1)}$} \\
& \iff F(\vec k,\vec\seq)(n) = 1
    \tag*{\qedhere}
\end{align*}
\end{proof}

\begin{rem}%
\label{rem:synch:nelb}
There is a well-known non-elementary lower bound
for translating $\MSO$-formulae over finite words to DFAs
(see \eg~\cite[Chap.\@ 13]{gtw02alig}).
This lower bound also applies to the DMMs
which induce synchronous functions represented by bounded formulae
in the sense of Def.\ \ref{def:synch:repr}.
Indeed, given
$F : \two^\ell \times \two^p \to_\Mealy \two$
$z$-represented by $\psi\restr[-\Leq z]$
(with $z$ not free in $\psi$),
for all $n \in \NN$, all $\vec\seq \in {(\two^\omega)}^p$
and all $\vec k \in \NN^\ell$ with $k_i \leq n$,
we have $F(\vec k,\vec\seq)(n) = 1$ if and only if
$\pair{\vec k,\vec\seq\restr(n+1)} \models \psi$
(in the sense of $\MSO$ over finite words).
It follows that if $F$ is induced by a DMM
$\At M = (Q,\init q,\trans)$, then
with the DFA
$\At A \deq (Q \times \two + \{\init q\}, \init q,\trans_{\At A}, Q \times \{1\})$
where
$\trans_{\At A}(\init q,\al a) \deq \trans(\init q,\al a)$
and
$\trans_{\At A}((q,b),\al a) \deq \trans(q,\al a)$, we have
$F(\vec k,\vec\seq)(n) = 1$
iff
$\At A$ accepts the finite word $\pair{\vec k,\vec\seq\restr(n+1)}$.
Since the size of $\At A$ is in general non-elementary in the size
of $\psi$, it follows that the size of $\At M$ is in general non-elementary
in the size of $\psi\restr[- \Leq z]$.
\end{rem}

\begin{exa}
Recall the continuous but not synchronous function $P$
of Ex.\ \ref{ex:prelim:mealy}.(\ref{ex:prelim:mealy:pred}). 
The function $P$ can be used to realize a predecessor function,
and thus is represented
(in the sense of~(\ref{eq:synch:repr}))
by a formula $\varphi(X,x)$ such that
$\Std \models \varphi(\seq,n)$
iff $n+1 \in \seq$.
Note that
$\varphi$ is not equivalent to a bounded formula,
since
by Lem.\ \ref{lem:synch:char}
bounded formulae represent synchronous functions.
\end{exa}

%
%

\subsection{Semantically Bounded Formulae}%
\label{sec:synch:synchbounded}
\noindent
The synchronous comprehension scheme of $\MSO$ is motivated by
Lem.\ \ref{lem:synch:char}, which tells
that uniformly bounded formulae induce DMMs.
Recall from Def.\ \ref{def:smso:bound}
that a uniformly bounded formula
is of the form $\psi\restr[-\leq x]$
with only $x$ as free individual variable.
Uniform boundedness is a purely syntactic restriction on comprehension,
which has the advantage of being
easy to check and conceptually simple to interpret in a proof relevant semantics.
We present here a more semantic criterion on the formulae for which
comprehension remains sound in a synchronous setting.
We call a formula
$\psi(\vec X,x)$
with only $\vec X,x$ free
\emph{semantically bounded}
if the following closed formula
$\B_{\vec X}^x(\psi(\vec X, x))$, expressing that the truth value of
$\psi(\vec X,n)$ only depends on the values of $\vec{X}$ up to $n$, holds:
\[
\forall z \forall \vec{Z} \vec{Z'}
\left(
\bigconj_{1 \leq i \leq q} \forall y \Leq z \big( Z_i y \longliff Z'_i y \big)
~~\longlimp~~
\big( \psi[\vec Z/\vec X,z/x] \longliff \psi[\vec{Z'}/\vec X,z/x] \big)
\right)
\]


We show in Thm.\ \ref{thm:synch:synchbounded} below that semantically bounded $\MSO$
formulae are equivalent to uniformly bounded formulae.
Since all uniformly bounded formulae are obviously semantically bounded, we
have a semantic characterization of the formulae available for the synchronous
comprehension scheme.

\begin{thm}%
\label{thm:synch:synchbounded}
If $\MSO \thesis \B_{\vec X}^x(\psi(\vec X,x))$
and the free variables of $\psi$ are among $x,\vec X$,
then there is a uniformly bounded formula $\hat\varphi(\vec X,x)$
which is effectively computable from $\psi$ and such that
$\MSO \thesis
  \all{\vec X}\all x \big(\psi(\vec X,x) \longliff \hat\varphi(\vec X,x)\big)
$.
\end{thm}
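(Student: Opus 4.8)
The plan is to exploit the automata-theoretic content of B\"uchi's Theorem~\ref{thm:mso:dec} together with the translation of $\MSO$ over finite words to DFAs that already underlies Lemma~\ref{lem:synch:char}. The key observation is that a closed formula $\B_{\vec X}^x(\psi(\vec X,x))$ which is provable in $\MSO$ is, by completeness (Thm.~\ref{thm:mso:compl}), true in $\Std$; so semantically $\psi(\vec\seq,n)$ depends only on $\vec\seq\restr(n+1)$. First I would fix an NBA (or deterministic Muller automaton, via McNaughton's Theorem~\ref{thm:mso:mcnaughton}) recognizing the $\omega$-language $L = \{(\vec\seq,\al c) \in (\two^p\times\two)^\omega \mid \Std\models \psi[\vec\seq/\vec X]\restr\ldots\}$ obtained from $\psi(\vec X,x)$ by treating $x$ as a one-point predicate; more precisely, one reads the formula $\psi'(\vec X,Z) \deq \exists x(Z=\{x\} \wedge \psi)$ over $\omega$-words, where $Z$ ranges over singletons. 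Because $\B$ holds, membership of $(\vec\seq,n)$ in this language is determined by the length-$(n+1)$ prefix, so the automaton can be "truncated": there is a DFA $\At A$ over the alphabet $\two^p$ that, run on $\vec\seq\restr(n+1)$, decides whether $\Std\models\psi[\vec\seq/\vec X, n/x]$. This is the step where the hard work lives, and it is essentially a determinization/prefix-stability argument on $\omega$-automata.

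Granting such a DFA $\At A = (Q,\init q,\trans,G)$, I would then \emph{internalize} it inside $\MSO$ exactly as in the proof of Prop.~\ref{prop:synch:repr}: pick $q$ with $\two^q \supseteq Q$, write a propositional initialization formula $\form I$ and a propositional transition formula $\form H$ describing $\trans$, and set
\[
\hat\varphi(\vec X,x) \deq \forall \vec Q\left(\left[\begin{array}{l}\forall t\Leq x(\Zero(t)\limp \form I[\vec Q(t)]) \wedge\\ \forall{t,t'\Leq x}(\Succ(t,t')\limp \form H[\vec Q(t),\vec X(t),\vec Q(t')])\end{array}\right]\limp \form G[\vec Q(x)]\right)
\]
where $\form G$ is the propositional formula characterizing the final states $G$. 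This $\hat\varphi$ is manifestly of the form $\chi\restr[-\Leq x]$ with $x$ the only free individual variable, hence uniformly bounded in the sense of Def.~\ref{def:smso:bound}, and it is effectively computable from $\psi$ since $\At A$ is. The Recursion Theorem~\ref{thm:synch:rec} supplies, provably in $\MSO$, the existence and uniqueness of the run $\vec Q$ on any input, so that $\hat\varphi(\vec X,x)$ provably expresses "$\At A$ accepts $\vec X\restr(x+1)$".

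It then remains to show $\MSO \thesis \forall\vec X\forall x(\psi(\vec X,x)\liff\hat\varphi(\vec X,x))$. Since both sides have no free variables other than $\vec X,x$, and $\MSO$ is complete for $\Std$ (Thm.~\ref{thm:mso:compl}), it suffices to check this equivalence \emph{semantically}: for every $\vec\seq$ and $n$, $\Std\models\psi[\vec\seq/\vec X,n/x]$ iff $\At A$ accepts $\vec\seq\restr(n+1)$ iff $\Std\models\hat\varphi[\vec\seq/\vec X,n/x]$ --- the first biconditional being the defining property of $\At A$ (which used $\B_{\vec X}^x(\psi)$), the second the correctness of the internalization via the Recursion Theorem. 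I expect the main obstacle to be the first paragraph: namely, turning the semantic fact "$\psi(\vec\seq,n)$ depends only on $\vec\seq\restr(n+1)$'' into an actual DFA on finite words, rather than merely an $\omega$-automaton, in a way that is uniform in $n$. Concretely, one must argue that the (deterministic) $\omega$-automaton for $\psi'$ can be read as a finite-word automaton because prefix-stability forces all one-point continuations agreeing up to $n$ to give the same acceptance verdict; this is where McNaughton's Theorem and a careful bookkeeping of the singleton coordinate $Z$ enter, and it is the only genuinely non-routine ingredient.
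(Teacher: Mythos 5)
Your argument is correct in outline, but it takes a genuinely different route from the paper's. You go through $\omega$-automata: build an automaton for $\psi$ with the individual variable coded as a singleton track, determinize via McNaughton (Thm.~\ref{thm:mso:mcnaughton}), use the hypothesis $\B_{\vec X}^x(\psi)$ to argue that the acceptance verdict is a function of the state reached after the length-$(x+1)$ prefix (and a computable one, since it can be tested on the ultimately periodic all-zero continuation), truncate to a DFA on finite words, and re-internalize that DFA as a run formula in the style of Prop.~\ref{prop:synch:repr}; provability of the final equivalence then follows from completeness (Thm.~\ref{thm:mso:compl}), exactly as in the paper. The paper instead stays at the level of formulae: the Splitting Lemma~\ref{lem:synch:splitting} (a composition-method argument) rewrites $\psi$ as $\bigdisj_j L_j\restr{[-\Leq x]} \land R_j\restr{[-\Gt x]}$, the boundedness hypothesis together with the Transfer Lemma~\ref{lem:synch:transf} reduces each $R_j\restr{[-\Gt x]}$ to a \emph{closed} sentence $R'_j$ independent of $x$, and Büchi decidability (Thm.~\ref{thm:mso:dec}) replaces each $R'_j$ by $\True$ or $\False$, yielding $\hat\varphi = \bigdisj_j L_j\restr{[-\Leq x]}\land R''_j$. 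Your version buys a $\hat\varphi$ of exactly the fixpoint shape~(\ref{eq:synch:form:repr}) that Lem.~\ref{lem:synch:char} already knows how to turn into a DMM, at the price of invoking McNaughton and of the ``truncation'' step, which you rightly identify as the crux and which does go through: determinism makes the post-prefix state well defined, and $\B$ makes acceptance continuation-independent from every reachable state. Two small repairs are needed in the write-up: the acceptance test $\form G[\vec Q(x)]$ must also see the letter $\vec X(x)$ (or $\vec Q(t)$ must denote the state \emph{after} reading position $t$), since the DFA's verdict on $\vec\seq\restr(x+1)$ depends on the last input letter; and the accepting-state set of the truncated DFA is only meaningful on states reachable by prefixes of the intended form (singleton bit placed last), since the ``some continuation accepts iff all do'' equivalence is only guaranteed there. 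Both approaches are non-elementary, so neither has a complexity advantage.
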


Note that Thm.\ \ref{thm:synch:synchbounded}
in particular applies if $\SMSO \thesis \B_{\vec X}^x(\psi(\vec X,x))$.
Moreover, if $\psi(X,x)$ is $x$-recursive in $X$
(in the sense of~\S\ref{sec:synch:rec}),
then $\B_X^x(\psi(X,x))$ holds, but not conversely.

Theorem~\ref{thm:synch:synchbounded} makes it possible to
derive realizers for additional instances of comprehension,
namely
for formulae which are semantically but not uniformly bounded.
However, the algorithm underlying Thm.\ \ref{thm:synch:synchbounded}
relies on Büchi's Theorem~\ref{thm:mso:dec}, and computing
$\hat\varphi$ from $\psi$ can become quickly prohibitively expensive.

The proof of Thm.\ \ref{thm:synch:synchbounded}
relies on the decidability of $\MSO$ and on two preliminary lemmas.
The first one
is the following usual transfer property
(see~\eg~\cite{riba12ifip}).
Given a set $A \sle \Po(\NN)$, write $\Std\restr A$
for the model defined as the standard model $\Std$,
but with individuals ranging over $A$ rather than $\NN$.

\begin{lem}[Transfer]%
\label{lem:synch:transf}
Let $\varphi$ be a formula with free variables among
$\vec x = x_1,\dots,x_\ell$ and $\vec X = X_1,\dots,X_p$.
Furthermore, let $A \in \two^\omega \iso \Po(\NN)$ be non-empty.
Then for all $a_1,\dots,a_\ell \in A$ and all $\vec\seq  \in {(\two^\omega)}^p$
we have
\[
\Std\restr A \models
\varphi[\vec a/\vec x,\Vec{\seq \cap A}/\vec X]
\qquad\iff\qquad
\Std \models
(\varphi[\vec a/\vec x,\vec\seq / \vec X])\restr[A(-)]
\]
\end{lem}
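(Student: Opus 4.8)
The statement to prove is the Transfer Lemma (Lemma~\ref{lem:synch:transf}), relating satisfaction in the relativized model $\Std\restr A$ with satisfaction of the relativized formula in the standard model $\Std$.

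\medskip

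The plan is to proceed by a straightforward induction on the structure of the formula $\varphi$, following the inductive definition of relativization given in Definition~\ref{def:smso:bound}. Fix a non-empty $A \in \two^\omega \iso \Po(\NN)$; I will write $A(-)$ for the predicate ``$x \in A$'' used as the relativizing predicate $\theta[y]$. The key book-keeping observation to set up first is that for any word $\seq \in \two^\omega$ and any $a \in \NN$, membership ``$a \in \seq$'' evaluated in $\Std\restr A$ against $\seq \cap A$ agrees with ``$a \in \seq$'' evaluated in $\Std$, provided $a \in A$; and more generally that the atomic predicates $\Eq$, $\Leq$, $\Succ$, $\Zero$ are absolute between $\Std$ and $\Std\restr A$ as long as all individual parameters lie in $A$. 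This is because $A$ is just a subset of $\NN$ carrying the \emph{restriction} of the ambient order, successor, etc.; note in particular that $\Succ$ in $\Std\restr A$ is \emph{not} the successor of $\Std$ restricted to $A$ — but this does not matter, because the statement only claims the equivalence for parameters $\vec a \in A$ and the atomic formulas $\alpha\restr\theta[y] = \alpha$ are left untouched by relativization, so on the atomic case both sides refer to the \emph{same} relation symbol interpreted in the respective model, and we simply need absoluteness of the \emph{interpreted} atoms at points of $A$. Here I should be slightly careful and check each atom: $x \Eq y$, $x \Leq y$, $\Zero(x)$, $x \In X$ are clearly absolute at parameters in $A$; for $\Succ(x,y)$ one must be explicit that $\Std\restr A \models \Succ[a,b]$ is \emph{defined} to mean the standard successor relation holds, i.e. the model $\Std\restr A$ keeps the standard interpretation of the relation symbols and only cuts down the individual domain — this is exactly the convention introduced just before the lemma (``the model defined as the standard model $\Std$, but with individuals ranging over $A$''). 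Under that reading, absoluteness of atoms at points of $A$ is immediate.

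\medskip

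For the inductive step: the cases $\varphi = \psi_1 \land \psi_2$ and $\varphi = \lnot\psi$ are immediate from the induction hypothesis together with the clauses $(\psi_1\land\psi_2)\restr\theta = \psi_1\restr\theta \land \psi_2\restr\theta$ and $(\lnot\psi)\restr\theta = \lnot(\psi\restr\theta)$, since conjunction and negation commute with satisfaction in both models and the parameter lists are unchanged. The case $\varphi = \ex X \psi$ uses the clause $(\ex X\psi)\restr\theta = \ex X(\psi\restr\theta)$: a witness for $\ex X$ in $\Std\restr A$ is a subset $B \sle A$, which we present as $\seqbis \cap A$ for $\seqbis = B \in \two^\omega$, and conversely any witness $\seqbis$ in $\Std$ for the relativized formula gives the witness $\seqbis \cap A$ in $\Std\restr A$; the induction hypothesis applied to $\psi$ (with the extended parameter list) closes the case — here one uses that second-order quantification in $\Std\restr A$ ranges exactly over subsets of $A$, i.e. over $\Po(A) \iso \{\seqbis \cap A \mid \seqbis \in \two^\omega\}$. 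The only genuinely interesting case is the individual existential $\varphi = \ex x \psi$, where relativization inserts the guard: $(\ex x\psi)\restr\theta[y] = \ex x(\theta[x/y] \land \psi\restr\theta[y])$ with $\theta[x/y]$ here being $A(x)$. In $\Std\restr A$, $\ex x \psi$ ranges over $a \in A$; in $\Std$, the relativized formula $\ex x (A(x) \land \psi\restr\theta)$ ranges over $a \in \NN$ but the conjunct $A(x)$ forces exactly $a \in A$. So the two ranges of witnesses coincide, and for each such $a \in A$ the induction hypothesis for $\psi$ (now with $a$ added to the individual parameters, which is legitimate precisely because $a \in A$) gives the equivalence of the bodies.

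\medskip

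I expect no real obstacle: this is the ``necessary but unsurprising technicality'' that the introduction flags. The one point requiring a little care — and the only place where the argument could go wrong if stated sloppily — is making sure the induction hypothesis is invoked with \emph{all} individual parameters lying in $A$, which is exactly what the guards $\theta[x/y]$ in the existential clauses guarantee, and that the convention for $\Std\restr A$ is ``same relation symbols, smaller domain'' rather than ``relations cut down to $A$''. Under the stated conventions the whole induction is mechanical; I would present the atomic case and the $\ex x$ case in full and merely remark that $\land$, $\lnot$, $\ex X$ are routine. (Since the statement is phrased with the generic relativizing predicate $\theta$ only in Definition~\ref{def:smso:bound} but the lemma specializes $\theta[y]$ to $A(y)$, I would carry out the induction directly with $\theta[y] := (y \In A$-as-a-set$)$, i.e. with the parameter $\seq_A := A$ plugged for the relativizing predicate, so that $\theta[a/y]$ is just $a \in A$.)
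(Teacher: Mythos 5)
Your induction is correct, and it is the standard argument for this kind of relativization/transfer property; note that the paper itself gives no proof of Lemma~\ref{lem:synch:transf}, citing it as a ``usual transfer property'' from the literature, so there is no in-paper argument to compare against. The one genuine subtlety — that $\Std\restr A$ must be read as keeping the \emph{standard} interpretations of $\Leq$, $\Succ$, $\Zero$ restricted to $A$ (rather than, say, reinterpreting $\Succ$ as the successor of the induced order on $A$), since otherwise the atomic case would fail — is exactly the point that needs care, and you identify and resolve it correctly.
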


The second result is the following Splitting Lemma\ \ref{lem:synch:splitting},
reminiscent of the composition method from a technical point of view.
The point of Lem.\ \ref{lem:synch:splitting} is, given a formula
$\varphi$ and a distinguished individual variable $z$, to express
$\varphi$ using an elementary combination of formulae, each local either to
the initial segment $[- \Leq z]$ or to the final segment $[- \Gt z]$.
Its proof is deferred to App.\ \ref{sec:app:synch:synchbounded}.
Write $\FV^\indiv(\varphi)$ for the set of free individual
variables of the formula $\varphi$.

\begin{lem}[Splitting]%
\label{lem:synch:splitting}
Consider a formula $\psi$ and some individual variable $z$.
For every set of individual variables $V$ with $z \in V$,
one can produce a natural number $N$ and
two matching sequences of length $N$ of left formulae
${(L_j)}_{j < N}$
and right formulae ${(R_j)}_{j < N}$
such that the following holds:
\begin{itemize}
\item For every $j < N$,
$\FV^\indiv(L_j) \sle \FV^\indiv(\psi) \cap V$ and
$\FV^\indiv(R_j) \sle \FV^\indiv(\psi) \setminus V$.

\item
If $\FV^\indiv(\psi) = \{\vec x,z,\vec y\}$
with $V\cap\FV^\indiv(\psi) = \{\vec x,z\}$,
then
for all $n \in \NN$,
all $\vec a \leq n$ and all $\vec b > n$, we have
\[
\Std\models\quad
\psi[\vec a/\vec x,n/z,\vec b/\vec y]
~~\longliff~~
\bigdisj_{j < N}
L_j[\vec a/\vec x,n/z]\restr{[ - \Leq n ]}
~\land~
R_j[\vec b/\vec y]\restr{[ - \Gt n]}
\]
\end{itemize}
\end{lem}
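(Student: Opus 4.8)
The plan is to prove this by induction on the structure of the formula $\psi$, following the classical composition/Feferman--Vaught pattern, where the ``elementary combination'' on the left and right is just a finite disjunction of conjunctions $L_j \restr{[-\Leq n]} \land R_j\restr{[-\Gt n]}$. First I would set up the statement carefully with respect to the variable set $V$: the intended reading is that $V$ holds the individual variables that will be ``sent to the left part'' (those evaluated at positions $\leq n$), $z$ among them, while $\FV^{\indiv}(\psi)\setminus V$ are the variables sent to the right. For the base cases ($\psi$ atomic), one does a case analysis on the atom. For $x \Eq y$, $x \Leq y$, $\Succ(x,y)$, $\Zero(x)$ the outcome depends on whether the variables involved are in $V$ or not: if both are in $V$, put the atom (relativized) on the left and take $R$ to be $\True$; if both are outside, dually; if they straddle (one in $V$, one outside), use that every left variable is $\leq n < $ every right variable to decide the atom outright --- e.g.\ $x\Leq y$ with $x\in V$, $y\notin V$ is just $\True$, while $\Succ(x,y)$ in that situation is $\False$ unless one rewrites it; more carefully $\Zero(x)$ for $x\notin V$ is $\False$ (since $0\leq n$ forces $x\in$ left part only if $x$ is $0$, but $x>n\geq 0$), and so on. For $x\In X$, since the $X$'s are not split, one uses $(x\In X)\restr{[-\Leq n]} \equiv x\In X$ and similarly on the right, so the atom lands wholly on the appropriate side depending on whether $x\in V$. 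The constants $\True,\False$ are trivial.

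For the inductive steps: conjunction is handled by distributing, i.e.\ if $\psi_1 \equiv \bigdisj_j L^1_j\land R^1_j$ and $\psi_2\equiv\bigdisj_k L^2_k\land R^2_k$ then $\psi_1\land\psi_2 \equiv \bigdisj_{j,k}(L^1_j\land L^2_k)\land(R^1_j\land R^2_k)$ --- and relativization commutes with $\land$, so this stays in the required normal form. Negation is the step that does real work: from $\psi\equiv\bigdisj_{j<N} L_j\land R_j$ one wants $\lnot\psi$ in the same shape. The trick is to regard, for each fixed $n$ and the left data $\vec a, n$, the set $S\subseteq\{0,\dots,N-1\}$ of indices $j$ for which $L_j[\vec a/\vec x,n/z]\restr{[-\Leq n]}$ holds; then $\psi$ holds iff some $R_j$ with $j\in S$ holds. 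One enumerates all subsets $S\subseteq\{0,\dots,N-1\}$; for each $S$ form the left formula $\Lambda_S \deq \bigconj_{j\in S}L_j \land \bigconj_{j\notin S}\lnot L_j$, which pins down $S$, and the right formula $\bigconj_{j\in S}\lnot R_j$ (so that $\lnot\psi$ holds). Then $\lnot\psi \equiv \bigdisj_{S}\Lambda_S\restr{[-\Leq n]}\land (\bigconj_{j\in S}\lnot R_j)\restr{[-\Gt n]}$, using that the $\Lambda_S$ partition the left-truth-cases and relativization commutes with $\lnot$ and $\land$. This doubles (exponentiates) the number of disjuncts, which is fine since we only need \emph{some} $N$.

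Existential quantification $\ex u\,\psi'$ is the remaining case, and I expect it to be the main obstacle --- it is where the ``localization'' genuinely happens. Apply the induction hypothesis to $\psi'$ but now with two different choices of splitting variable set, $V\cup\{u\}$ and $V$ (i.e.\ once guessing $u$ lands in the initial segment $[-\Leq z]$, once in the final segment $[-\Gt z]$). In the first case one gets $\psi'\equiv\bigdisj_j L'_j(\dots,u)\land R'_j$; then $\ex u (u\Leq z \land \psi')$ contributes disjuncts $(\ex u\,(u\Leq z\land L'_j))\restr{[-\Leq n]}\land R'_j\restr{[-\Gt n]}$ --- note $\ex u\,(u\Leq z\land L'_j)$ still only mentions left variables, and one must check that relativizing this existential to $[-\Leq n]$ agrees with the unrelativized reading, which holds because $u\Leq z$ and $z$ maps to $n$. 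In the second case, dually, the $\ex u$ is pushed entirely into the right formula, yielding disjuncts $L'_k\restr{[-\Leq n]}\land (\ex u\,(z\Lt u\land R'_k))\restr{[-\Gt n]}$, using Lem.~\ref{lem:prelim:mso:lem}(12) (trichotomy) to split on $u\Leq z$ versus $z\Lt u$. Finally one takes the union of the disjuncts from both cases. The care needed here --- and the crux of the whole argument --- is checking that $\MSO$ actually \emph{proves} the claimed equivalence for all $n$, $\vec a\leq n$, $\vec b>n$: this uses the basic arithmetic lemmas of Fig.~\ref{fig:prelim:mso:lem} (especially trichotomy and the interaction of $\Leq$, $\Succ$, $\Zero$ with relativization) and the fact that relativization $\restr\theta[y]$ commutes with all connectives by Def.~\ref{def:smso:bound}, so that the induction goes through uniformly. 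The effectivity claim ($N$, $L_j$, $R_j$ computable from $\psi$) is immediate since every step of the induction is an explicit syntactic construction.
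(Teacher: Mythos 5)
Your overall strategy --- induction on $\psi$, a $V$-membership case analysis on atoms, a product construction for $\land$, an exponential subset enumeration for $\lnot$, and a double application of the induction hypothesis (once with $V\cup\{u\}$, once with $V$) for $\ex u$ --- is essentially the paper's proof, and your negation and first-order existential cases are correct (the paper works over a reduced vocabulary whose binary connective is $\lor$, so its binary case is mere concatenation of the two lists, but your product version for $\land$ is equally fine). There are, however, two genuine gaps. First, you never treat the second-order existential $\ex X \psi$, which is a case of the grammar and is exactly where the Transfer Lemma~\ref{lem:synch:transf} is needed: the induction hypothesis yields disjuncts $L_j\restr{[-\Leq n]}\land R_j\restr{[-\Gt n]}$ in which the \emph{same} $X$ may occur on both sides, and to commute $\ex X$ past this conjunction one must argue that the left conjunct only constrains $X$ on $[0,n]$ and the right one only on $(n,\infty)$, so the two witnesses can be glued into one; this is the separation property~(\ref{eq:app:synch:sep}) in the paper, and without it the case does not follow from the induction hypothesis. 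Second, your treatment of the straddling atom $\Succ(x,y)$ with $x\in V$ and $y\notin V$ is wrong: it cannot be ``decided outright'' and it is not $\False$ --- it holds exactly when $x=n$ and $y=n+1$, so the correct decomposition pairs $L=(x\Eq z)$ with a right formula asserting that $y$ is the least element of the final segment (and that right formula may not mention $z$, so it must be phrased as relativized minimality rather than via $\Succ(z,y)$). The paper sidesteps this entirely by first translating $\Eq$, $\Leq$, $\Succ$, $\Zero$ into the vocabulary with only $\Lt$ and $\In$ as atoms, so $\Succ$ never arises as a base case; if you keep it as an atom you must supply this $L/R$ pair explicitly.

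Two smaller points. In your $\ex u$ case the right disjunct $\ex u(z\Lt u\land R'_k)$ mentions $z$, violating the required condition $\FV^\indiv(R_j)\sle\FV^\indiv(\psi)\setminus V$; the guard is redundant because the relativization $\restr{[-\Gt n]}$ already bounds $u$, so it should simply be dropped, as in the paper. Finally, the lemma is a statement about $\Std\models$, not about derivability in $\MSO$, so no appeal to the arithmetic lemmas of Fig.~\ref{fig:prelim:mso:lem} is needed; the verification at each step is purely semantic.
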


We can now prove Thm.\ \ref{thm:synch:synchbounded}.

\begin{proof}[Proof of Thm.\ \ref{thm:synch:synchbounded}]
We work in the standard model $\Std$ of $\MSO$
and obtain the result by completeness (Thm.\ \ref{thm:mso:compl}).
Using Lem.\ \ref{lem:synch:splitting},
we know that $\psi(\vec{X},x)$ is equivalent to
\[
\varphi(\vec{X},x) \quad\deq\quad
\bigdisj_j L_j(x,\vec{X})\restr{[-\Leq x]} ~\land~ R_j(\vec{X})\restr{[- \Gt x]}
\]
Then, by our assumption that $\psi(\vec X,x)$
(and thus $\varphi(\vec X,x)$) is semantically bounded, we have
\[
\begin{array}{r !{\quad\longliff\quad} l}
  \varphi(\vec X,x)
& \varphi\big(\Vec{X(-) \land - \Leq x},x\big)
\\
& \bigdisj_j
  L_j \big(x, \vec{X(-) \land - \Leq x}\big) \restr{[- \Leq x]}
~\land~
  R_j \big(\vec{X(-) \land - \Leq x}\big) \restr{[- \Gt x]}
\end{array}
\]

\noindent
Again using Lem.\ \ref{lem:synch:transf},
for every $j < N$ and $n \in \NN$,
$R_j(\vec{X(-) \land - \Leq n})\restr{[- \Gt n]}$
is equivalent to $R_j(\vec{X(-) \land - \Leq n \land - \Gt n})\restr{[- \Gt n]}$.
By substitutivity, it is equivalent to $R'_j(n)\restr{[-\Gt n]}$,
where we set $R'_j \deq R_j(\vec{\bot})$.
Because $R'_j$ is closed and $\Std\restr[-\Gt n]\iso \Std$,
Lem.~\ref{lem:synch:transf}
moreover implies that
\[
\Std \models\quad \forall x \left(R'_j ~~\longliff~~ R'_j\restr[- \Gt x]\right)
\]

\noindent
Since $R'_j$ is closed, it follows from the
decidability of $\MSO$ (Thm.\ \ref{thm:mso:dec}) that we can decide whether
$\Std \models R'_j$
or
$\Std \models \lnot R'_j$. Define accordingly closed formulae $R''_j$:
\[
R''_j \quad\deq\quad \left\{
\begin{array}{ll}
\True &\text{if $\Std \models R'_j$} \\
\False &\text{if $\Std \models \lnot R'_j$} \\
\end{array} \right.
\]
Notice in particular that, contrary to the $R'_j$, the $R''_j$ are invariant under
relativization, i.e., the formulae $R''_j$ and $R''_j\restr[- \Leq x]$ are syntactically
equal.
It thus follows that our initial $\psi$ is equivalent to the following formula $\hat\varphi$,
which is effectively computable from $\psi$:
\[
\hat\varphi(\vec{X},x) \quad \deq \quad
\bigdisj_j L_j(x,\vec{X})\restr{[- \Leq x]} \land R''_j
\]
$\hat\varphi$ is uniformly bounded since
it is syntactically equal to
$\left(\bigdisj_j L_j(x,\vec{X}) \land R''_j\right)\restr{[- \Leq x]}$.
\end{proof}

\section{The Realizability Interpretation of \texorpdfstring{$\SMSO$}{SMSO}}%
\label{sec:real}

\noindent
We now present our realizability model for $\SMSO$,
and use it to prove 
Thm.\ \ref{thm:smso:main}.(\ref{thm:smso:main:cor}). 
This realizability interpretation bears some similarities
with usual realizability constructions
for the Curry-Howard correspondence
(see \eg~\cite{su06book,kohlenbach08book}).
For instance, as in the usual setting,
a realizer of a formula $\varphi_1 \land \varphi_2$
is a pair $\pair{\run_1,\run_2}$ of a realizer of $\run_1$
of $\varphi_1$ and a realizer $\run_2$ of $\varphi_2$.
Similarly, a realizer of $\ex X \varphi(X)$
is a pair $\pair{\seq,\run}$ of an $\omega$-word $\seq \in \two^\omega$
and a realizer $\run$ of $\varphi(\seq)$.
However, our construction departs from the standard one
on negation (for which we use McNaughton's Theorem~\ref{thm:mso:mcnaughton}),
and for the fact that there is no primitive notion of implication in $\SMSO$.
In particular, in contrast with the usual settings,
our notion of realizability for sequents of the form
$\psi \thesis \varphi$ (see Thm.\ \ref{thm:real:adeq} and Def.\ \ref{def:real:cat})
is not based on a notion of implication internal to the logic under consideration.

Our approach to Church's synthesis via realizability uses automata
in two different ways.
First, from a \emph{proof} $\Der D$ in $\SMSO$ of an existential
formula $\exists {\vec Y} \varphi(\vec X;\vec Y)$,
one can compute 
a finite-state synchronous Church-realizer $\vec F$ of $\varphi(\vec X;\vec Y)$.
Second, the adequacy of realizability
(and in particular the correctness of $\vec F$ \wrt\@ $\varphi(\vec X;\vec Y)$)
is \emph{proved} using automata for $\varphi(\vec X;\vec Y)$
obtained by McNaughton's Theorem,
but these automata do not have to be built during the extraction procedure.

\subsection{Uniform Automata}
The adequacy of realizability relies on the notion
of \emph{uniform automata} (adapted from~\cite{riba16dialaut}).
In our context, uniform automata are essentially usual non-deterministic
automata, but in which non-determinism is expressed via an explicitly
given set of \emph{moves}. This allows for a simple inheritance of the Cartesian
structure of synchronous functions (Prop.\ \ref{prop:prelim:synchcart}),
and thus to interpret
the strictly positive existentials of $\SMSO$ 
similarly as usual (weak) sums of type theory.
In particular, the set of moves $\Moves(\At A)$ of an automaton $\At A$
interpreting a formula $\varphi$ exhibits the strictly positive
existentials of $\varphi$ as
$\Moves(\At A) = \Moves(\varphi)$ where
\begin{equation}
\label{eq:real:moves}
\Moves(\alpha) \iso \Moves(\lnot\varphi) \iso \one
\qquad
\Moves(\varphi \land \psi) \iso \Moves(\varphi) \times \Moves(\psi)
\qquad
\Moves(\ex {(-)} \varphi) \iso \two \times \Moves(\varphi)
\end{equation}

\begin{defi}[(Non-Deterministic) Uniform Automata]%
\label{def:real:aut}
A (non-deterministic) \emph{uniform automaton} $\At A$ over 
$\Sigma$
(notation $\At A : \Sigma$)
has the form
\begin{equation}
\label{eq:aut}
\At A \quad=\quad
(Q_{\At A} \,,\, \init q_{\At A} \,,\, \Moves(\At A) 
  \,,\, \trans_{\At A} \,,\, \Omega_{\At A})
\end{equation}
where $Q_{\At A}$ is the finite set of \emph{states}, $\init q_{\At A} \in Q_{\At A}$
is the \emph{initial state},
$\Moves(\At A)$
is the finite non-empty set of \emph{moves},
the \emph{acceptance condition} $\Omega_{\At A}$
is an $\omega$-regular subset of $Q_{\At A}^\omega$,
and the \emph{transition function} $\trans_{\At A}$ has the form
\[
\trans_{\At A} \quad:\quad
Q_{\At A} \times \Sigma
\times
\Moves(\At A)
  \quad\longto\quad
Q_{\At A}
\]

A \emph{run} of $\At A$ on an $\omega$-word $\seq \in \Sigma^\omega$
is an $\omega$-word
$\run \in {\Moves(\At A)}^\omega$.
We say that $\run$ is \emph{accepting}
(notation $\run \real \At A(\seq)$)
if ${(q_k)}_{k \in \NN} \in \Omega_{\At A}$
for the sequence of states ${(q_k)}_{k \in \NN}$ defined as
$q_0 \deq \init q_{\At A}$
and
$q_{k+1} ~\deq~ \trans_{\At A}(q_k,\seq(k),\run(k))$.
We say that $\At A$ \emph{accepts} $\seq$ if there exists an accepting
run of $\At A$ on $\seq$, and we let $\Lang(\At A)$,
the \emph{language} of $\At A$,
be the set of $\omega$-words accepted by $\At A$.
\end{defi}

\noindent
Following the usual terminology, an automaton $\At A$ as in~(\ref{eq:aut})
is \emph{deterministic} if $\Moves(\At A)\iso \one$.

Let us now sketch how uniform automata are used in our realizability
interpretation of $\SMSO$.
First, by adapting to our context usual constructions on automata
(\S\ref{sec:real:op}),
to each formula $\varphi$ with
free variables among (say) $\vec X = X_1,\dots,X_p$,
we associate a uniform automaton
$\I\varphi$
over $\two^p$
(Fig.\ \ref{fig:real:form}).
Then, from an $\SMSO$-derivation $\Der D$ of
a sequent (say) $\varphi \thesis \psi$,
with free variables among $\vec X$ as above,
we extract a finite-state synchronous function
$F_{\Der D} : \two^p \times \Moves(\I\varphi) \longto_\Mealy \Moves(\I\psi)$
such that
$F_{\Der D}(\vec\seq,\run) \real \I\psi(\vec\seq)$
whenever $\run \real \I\varphi(\vec\seq)$.
In the case of
$\thesis \exists Y \phi(\vec X; Y)$,
the finite-state realizer $F_{\Der D}$ is of the form
$\pair{\seqbis,G}$ with $\seqbis$ and $G$ finite-state synchronous functions
$\seqbis : \two^p \longto_\Mealy \two$ and
$G : \two^p \longto_\Mealy \Moves(\phi)$
such that
$G(\vec\seq) \real \I\phi(\vec\seq,\seqbis(\vec\seq))$
for all $\vec\seq$.
%
This motivates the following notion.


\begin{defi}[The Category $\Aut_\Sigma$]%
\label{def:real:cat}
For each alphabet $\Sigma$, the category $\Aut_\Sigma$
has automata $\At A : \Sigma$ as objects.
Morphisms $F$ from $\At A$ to $\At B$
(notation $\At A \real F : \At B$)
are finite-state synchronous maps 
$F : \Sigma \times \Moves(\At A) \longto_\Mealy \Moves(\At B)$
such that
$F(\seq,\run) \real \At B(\seq)$
whenever $\run \real \At A(\seq)$.
\end{defi}

The identity morphism $\At A \real \Id_{\At A} : \At A$
is given by $\Id_{\At A}(\seq,\run) \deq \run$,
and the composition of morphisms $\At A \real F : \At B$
and $\At B \real G : \At C$ is the morphism
$\At A \real G \comp F : \At C$ given by
$(G \comp F)(\seq,\run) \deq G(\seq,F(\seq,\run))$.
It is easy to check the usual identity and composition laws of categories,
namely:
\[
\Id \comp F = F
\qquad
F \comp \Id = F
\qquad
(F \comp G) \comp H = F \comp (G \comp H)
\]

\begin{rems}%
\label{rem:real}
\hfill
\begin{enumerate}
\item\label{rem:real:lgge}
Note that if $\At B \real F : \At A$ for some $F$, then
$\Lang(\At B) \sle \Lang(\At A)$.

\begin{proof}
Assume $\At B \real F : \At A$ and $\seq \in \Lang(\At B)$
so that $\run \real \At B(\seq)$ for some $\run \in {\Moves(\At B)}^\omega$.
Then by definition of $\At B \real F : \At A$,
we have $F(\seq,\run) \real \At A(\seq)$ and thus $\seq \in \Lang(\At A)$.
\end{proof}

\item
One could 
also
consider the category $\AUT_\Sigma$ defined as $\Aut_\Sigma$,
but with maps not 
required to be finite-state.
All statements of \S\ref{sec:real} hold for $\AUT_\Sigma$,
but for Cor.\ \ref{cor:real:main}, which
would lead to non necessarily finite-state realizers
and 
would not give Thm.\ \ref{thm:smso:main}.(\ref{thm:smso:main:cor}). 

\item
Uniform automata are a variation of usual automata on $\omega$-words,
which is convenient for our purposes,
namely the adequacy of our realizability interpretation.
Hence,
while it would have been possible to define uniform automata with any
of the usual acceptance conditions (see \eg~\cite{thomas97handbook}),
we lose nothing by assuming their
acceptance conditions to be given by arbitrary $\omega$-regular sets.

\item
Given automata $\At A,\At B : \Sigma$, checking the existence of a realizer
$\At A \real F : \At B$ can be reduced
(\eg\@ using the tools of~\cite{pr18lics,riba16dialaut})
to checking the existence of a winning strategy for the Proponent
($\exists$loïse) in an $\omega$-regular game on a finite graph,
which can in turn be decided by the Büchi-Landweber Theorem~\cite{bl69tams}.
\end{enumerate}
\end{rems}

\subsection{Constructions on Automata}%
\label{sec:real:op}

\noindent
We gather here constructions on uniform automata that we need to interpret formulae.
First, automata are closed under the following operation of
\emph{finite substitution}.

\begin{prop}%
\label{prop:real:subst}
Given $\At A : \Sigma$ and a function $\al f : \Gamma \to \Sigma$,
let $\At A\lift{\al f} : \Gamma$ be the automaton identical to $\At A$,
but with
$\trans_{\At A\lift{\al f}}(q,\al b,u) \deq \trans_{\At A}(q,\al f(\al b),u)$.
Then $\seq \in \Lang(\At A\lift{\al f})$ iff
$\al f \comp \seq \in \Lang(\At A)$.
\end{prop}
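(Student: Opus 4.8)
The plan is to exploit the fact that $\At A\lift{\al f}$ and $\At A$ differ in nothing but the way letters are read: they share the state set $Q_{\At A}$, the initial state $\init q_{\At A}$, the move set $\Moves(\At A)$, and the acceptance condition $\Omega_{\At A}$; only the transition function is reindexed through $\al f$. In particular, since $\Moves(\At A\lift{\al f}) = \Moves(\At A)$, the space of candidate runs is literally the same for both automata, so a single $\omega$-word $\run \in {\Moves(\At A)}^\omega$ can be fed to $\At A\lift{\al f}$ on an input $\seq \in \Gamma^\omega$ and to $\At A$ on the input $\al f \comp \seq \in \Sigma^\omega$. The claim will follow once I show that such a $\run$ is accepting in the former situation iff it is accepting in the latter.

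Concretely, I would fix $\seq \in \Gamma^\omega$ and $\run \in {\Moves(\At A)}^\omega$, let ${(q_k)}_{k \in \NN}$ be the state sequence induced by $\run$ in $\At A\lift{\al f}$ on input $\seq$, and let ${(q'_k)}_{k \in \NN}$ be the one induced by $\run$ in $\At A$ on input $\al f \comp \seq$. A straightforward induction on $k$ gives $q_k = q'_k$ for all $k$: the base case holds because $\init q_{\At A\lift{\al f}} = \init q_{\At A}$, and the inductive step is immediate from the definition of $\trans_{\At A\lift{\al f}}$, namely
\[
q_{k+1} ~=~ \trans_{\At A\lift{\al f}}(q_k,\seq(k),\run(k))
~=~ \trans_{\At A}(q_k,\al f(\seq(k)),\run(k))
~=~ \trans_{\At A}(q'_k,(\al f \comp \seq)(k),\run(k))
~=~ q'_{k+1},
\]
where the penultimate equality uses the induction hypothesis $q_k = q'_k$ together with $\al f(\seq(k)) = (\al f \comp \seq)(k)$.

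Since ${(q_k)}_{k \in \NN} = {(q'_k)}_{k \in \NN}$ and $\Omega_{\At A\lift{\al f}} = \Omega_{\At A}$, we obtain $\run \real \At A\lift{\al f}(\seq)$ iff $\run \real \At A(\al f \comp \seq)$. Existentially quantifying over $\run$ on both sides then yields $\seq \in \Lang(\At A\lift{\al f})$ iff $\al f \comp \seq \in \Lang(\At A)$, as required. There is no genuine obstacle here; the statement is a bookkeeping lemma, and the only point deserving a line of care is the observation that the finite-substitution construction leaves the move alphabet — and hence the space of runs — untouched, which is what makes the correspondence between runs of the two automata an equality rather than merely a bijection.
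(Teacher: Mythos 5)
Your proof is correct and is exactly the intended argument: the paper states Prop.~\ref{prop:real:subst} without proof, treating it as immediate from the fact that $\At A\lift{\al f}$ and $\At A$ share states, initial state, moves, and acceptance condition, so that a run-by-run comparison (your induction on $k$) establishes the equivalence. Nothing is missing, and the point you single out --- that the move alphabet, hence the run space, is unchanged --- is indeed the only observation worth recording.
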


\begin{exa}%
\label{ex:real:subst}
Assume $\At A$ 
interprets
a formula
$\varphi$ with free variables among $\vec X$, 
so that
$\vec\seq \in \Lang(\At A)$ iff $\Std \models \varphi[\vec\seq/\vec X]$.
Then $\varphi$ is also a formula with free variables among
$\vec X,\vec Y$, and we have
$\vec\seq\vec{\seq'} \in \Lang(\At A\lift\pi)$ iff
$\Std\models\varphi[\vec\seq/\vec X/\vec{\seq'}/\vec Y]$,
where $\pi : \vec X \times \vec Y \to \vec X$ is a projection.
\end{exa}

\noindent
The Cartesian structure of $\Mealy$
lifts to $\Aut_\Sigma$.
This gives the interpretation of conjunctions.

\begin{prop}%
\label{prop:real:cart}
For each $\Sigma$, the category $\Aut_\Sigma$ has finite products.
Its terminal object is the automaton $\tu = (\one,\unit,\one,\trans_\tu,\one^\omega)$,
where $\trans_\tu(-,-,-) = \unit$.
%
Binary products are given
by
\[
\begin{array}{r !{\quad} r !{\quad\deq\quad} l}
& \At A \times \At B
& (Q_{\At A} \times Q_{\At B} \,,\,
  (\init q_{\At A},\init q_{\At B}) \,,\,
  \Moves(\At A) \times \Moves(\At B) \,,\, \trans \,,\, \Omega)
\\
  \text{where}
& \trans((q_{\At A},q_{\At B}),\, \al a,\, (u,v))
& (\trans_{\At A}(q_{\At A},\al a,u) ~,~ \trans_{\At B}(q_{\At B},\al a,v))
\end{array}
\]

\noindent
and where ${(q_n,q'_n)}_n \in \Omega$ iff
(${(q_n)}_n \in \Omega_{\At A}$ and ${(q'_n)}_n \in \Omega_{\At B}$).
Note that $\Omega$ is $\omega$-regular since
$\Omega_{\At A}$ and $\Omega_{\At B}$ are $\omega$-regular.
Moreover, $\Lang(\tu) = \Sigma^\omega$ and
$\Lang(\At A \times \At B) = \Lang(\At A) \cap \Lang(\At B)$.
\end{prop}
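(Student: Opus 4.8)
The plan is to verify in turn that $\tu$ is terminal, that $\At A \times \At B$ equipped with the evident projections is a binary product in $\Aut_\Sigma$, and that the language identities hold; the finite products for $n \geq 0$ then follow by the usual induction, using $\tu$ for the empty product. For terminality, given any $\At C : \Sigma$ there is exactly one candidate morphism $\At C \real {!} : \tu$, namely the map $\Sigma \times \Moves(\At C) \to_\Mealy \Moves(\tu) = \one$ sending everything to $\unit$; this is finite-state (it is $\lift{\al f}$ for the constant function, cf.\ Rem.~\ref{rem:prelim:lift}), and the acceptance side is trivial since $\Omega_\tu = \one^\omega$ so every run realizes $\tu(\seq)$. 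Uniqueness is immediate because $\Moves(\tu)$ is a singleton, hence there is at most one synchronous function into it.

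For the binary product, I would first check that the two projections $\pi_{\At A} : \At A \times \At B \to \At A$ and $\pi_{\At B} : \At A \times \At B \to \At B$, defined on runs by $\pi_{\At A}(\seq,(u_n,v_n)_n) \deq (u_n)_n$ and symmetrically for $\pi_{\At B}$, are indeed morphisms in $\Aut_\Sigma$. They are finite-state synchronous as first/second projections on the move alphabet (again via Rem.~\ref{rem:prelim:lift}). The adequacy clause requires that if $(u_n,v_n)_n \real (\At A \times \At B)(\seq)$ then $(u_n)_n \real \At A(\seq)$: this follows because the state sequence of $\At A \times \At B$ on $\seq$ with moves $(u_n,v_n)_n$ is exactly the pair of the state sequences of $\At A$ on $(\seq,(u_n)_n)$ and of $\At B$ on $(\seq,(v_n)_n)$ — an easy induction on $k$ from the definition of $\trans$ — and membership in the product condition $\Omega$ is by definition the conjunction of membership in $\Omega_{\At A}$ and $\Omega_{\At B}$. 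For the universal property, given $\At C \real F : \At A$ and $\At C \real G : \At B$, the pairing $\langle F, G\rangle(\seq,\run) \deq (F(\seq,\run), G(\seq,\run))$ is finite-state synchronous (pairing of $\Mealy$-maps, using the product structure of $\Mealy$ from Prop.~\ref{prop:prelim:synchcart}/Prop.~\ref{prop:prelim:synchcart} — actually Prop.~\ref{prop:prelim:synchcart} is the category result and finite products is Prop.~\ref{prop:prelim:synchcart}; I mean Prop.~\ref{prop:prelim:synchcart} and Prop.~\ref{prop:prelim:synchcart}), and it satisfies the adequacy clause for $\At A \times \At B$ precisely by the state-sequence decomposition just noted together with the adequacy of $F$ and $G$. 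Commutation $\pi_{\At A} \comp \langle F,G\rangle = F$, $\pi_{\At B} \comp \langle F,G\rangle = G$ and uniqueness of the pairing are then inherited verbatim from the corresponding facts in $\Mealy$, since composition in $\Aut_\Sigma$ is composition of the underlying $\Mealy$-maps.

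Finally, $\Lang(\tu) = \Sigma^\omega$ is immediate since every $\seq$ has the unique run $\unit^\omega$, accepted. For $\Lang(\At A \times \At B) = \Lang(\At A) \cap \Lang(\At B)$: the inclusion $\subseteq$ follows from Rem.~\ref{rem:real}.(\ref{rem:real:lgge}) applied to the two projection morphisms; for $\supseteq$, if $\run \real \At A(\seq)$ and $\runbis \real \At B(\seq)$ then the interleaved run $(\run(n),\runbis(n))_n$ is accepting for $\At A \times \At B$ by the state-sequence decomposition and the definition of $\Omega$. The only point needing a word is that $\Omega$ is $\omega$-regular, which is recorded in the statement and holds because intersection (after relabelling along the projections $Q_{\At A}\times Q_{\At B} \to Q_{\At A}$, $Q_{\At A}\times Q_{\At B} \to Q_{\At B}$) preserves $\omega$-regularity. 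I do not anticipate a genuine obstacle here: the entire argument is a routine transport of the Cartesian structure of $\Mealy$ through the forgetful-on-moves functor, and the mild technical heart is the inductive identification of the state sequence of a product automaton with the pair of component state sequences.
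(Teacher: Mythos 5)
Your proposal is correct and follows essentially the same route as the paper: the paper's own proof simply states that the Cartesian structure is ``directly inherited from $\Mealy$'' and omits the details you spell out, and it establishes the language identities exactly as you do (the inclusion $\subseteq$ via Rem.~\ref{rem:real}.(\ref{rem:real:lgge}) applied to the projections, and $\supseteq$ by pairing accepting runs). The only blemish is the garbled cross-reference in your parenthetical, where you presumably mean Prop.~\ref{prop:prelim:synch:cat} for the category structure of $\Mealy$ and Prop.~\ref{prop:prelim:synchcart} for its finite products.
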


\begin{proof}
The Cartesian structure is directly inherited from $\Mealy$ and is omitted.
Moreover, we obviously have $\Lang(\tu) = \Sigma^\omega$.
Let us show that $\Lang(\At A_1 \times \At A_2) = \Lang(\At A_1) \cap \Lang(\At A_2)$.
The inclusion $(\sle)$ follows from
Rem.\ \ref{rem:real}.(\ref{rem:real:lgge}) 
applied to the projection maps
$\At A_1 \times \At A_2 \real \varpi_i : \At A_i$
induced by the Cartesian structure.
For the converse inclusion $(\sge)$,
note that if $\run_i \real \At A_i(\seq)$
for $i =1,2$,
then $\pair{\run_1,\run_2} \real (\At A_1 \times \At A_2)(\seq)$.
\end{proof}

Uniform automata are equipped with the obvious adaptation
of the usual projection on non-deterministic automata,
which interprets existentials.
Given a uniform automaton $\At A : \Sigma \times \Gamma$,
its \emph{projection on $\Sigma$} is the automaton
\[
(\exists_\Gamma \At A : \Sigma)
~\deq~
(Q_{\At A} \,,\, \init q_{\At A} \,,\, \Gamma \times \Moves(\At A)
\,,\, \trans \,,\, \Omega_{\At A})
\quad\text{where}\quad
\trans(q,\al a,(\al b,u)) ~\deq~ \trans_{\At A}(q,(\al a,\al b),u)
\]

\begin{prop}%
\label{prop:real:realproj}
Given $\At A : \Sigma \times \Gamma$ and $\At B : \Sigma$,
the realizers $\At B \real F : \exists_\Gamma \At A$
are exactly
the $\Mealy$-pairs $\pair{\seqbis,G}$ of finite-state synchronous
functions
\[
\seqbis ~~:~~
\Sigma \times \Moves(\At B) ~~\longto_\Mealy~~ \Gamma
\qquad\qquad
G ~~:~~
\Sigma \times \Moves(\At B) ~~\longto_\Mealy~~ \Moves(\At A)
\]
such that
$G(\seq,\run) \real \At A\pair{\seq,\seqbis(\seq,\run)}$
for all $\seq \in \Sigma^\omega$ and all $\run \real \At B(\seq)$.
\end{prop}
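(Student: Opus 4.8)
The plan is to unfold the two sides and reduce the claim to the universal property of finite products in $\Mealy$ (Prop.~\ref{prop:prelim:synchcart}) together with a line-by-line comparison of the state sequences induced by runs.

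First I would note that, by construction of $\exists_\Gamma \At A$, we have $\Moves(\exists_\Gamma\At A) = \Gamma \times \Moves(\At A)$, so that by Def.~\ref{def:real:cat} a realizer $\At B \real F : \exists_\Gamma \At A$ is precisely a finite-state synchronous map $F : \Sigma \times \Moves(\At B) \longto_\Mealy \Gamma \times \Moves(\At A)$ subject to the acceptance clause of Def.~\ref{def:real:cat}. Since $\Mealy$ has finite products given by the $\Set$-product (Prop.~\ref{prop:prelim:synchcart}), giving such an $F$ is the same as giving its two components $\seqbis \deq \pi_1 \comp F$ and $G \deq \pi_2 \comp F$, which are again finite-state synchronous (the projections are finite-state by Rem.~\ref{rem:prelim:lift}, and composition preserves finite-stateness by Prop.~\ref{prop:prelim:synch:cat}); conversely $\pair{\seqbis,G}$ recovers $F$. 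Thus the $\Mealy$-maps underlying realizers $\At B \real F : \exists_\Gamma \At A$ are exactly the $\Mealy$-pairs $\pair{\seqbis,G}$ of the stated types, and it only remains to check that the acceptance clause on $F$ is equivalent to the condition that $G(\seq,\run) \real \At A\pair{\seq,\seqbis(\seq,\run)}$ for all $\seq \in \Sigma^\omega$ and all $\run \real \At B(\seq)$.

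For that, I would fix $\seq \in \Sigma^\omega$ and $\run \real \At B(\seq)$ and set $\beta \deq \seqbis(\seq,\run) \in \Gamma^\omega$ and $\mu \deq G(\seq,\run) \in \Moves(\At A)^\omega$, so that $F(\seq,\run)$ is $(\beta,\mu)$ read through the coherence iso $\Moves(\exists_\Gamma\At A)^\omega \iso \Gamma^\omega \times \Moves(\At A)^\omega$, i.e.\ the move played at step $k$ is the pair $(\beta(k),\mu(k))$. Now $F(\seq,\run) \real (\exists_\Gamma\At A)(\seq)$ says that the sequence ${(q_k)}_k$ with $q_0 = \init q_{\At A}$ and $q_{k+1} = \trans_{\exists_\Gamma\At A}(q_k,\seq(k),(\beta(k),\mu(k)))$ lies in $\Omega_{\exists_\Gamma\At A}$. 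Unfolding $\trans_{\exists_\Gamma\At A}$ and $\Omega_{\exists_\Gamma\At A}$, this recurrence becomes $q_{k+1} = \trans_{\At A}(q_k,(\seq(k),\beta(k)),\mu(k))$ together with ${(q_k)}_k \in \Omega_{\At A}$, which is verbatim the definition of $\mu \real \At A\pair{\seq,\beta}$, that is $G(\seq,\run) \real \At A\pair{\seq,\seqbis(\seq,\run)}$. Quantifying over all $\seq$ and all $\run \real \At B(\seq)$ gives the equivalence, and the proposition follows.

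I do not expect a genuine obstacle here: the statement is essentially a bookkeeping lemma verifying that the projection construction $\exists_\Gamma$ behaves like a weak sum relative to the Cartesian structure of $\Aut_\Sigma$. The only point requiring a little care is to keep the iso ${(\Gamma \times \Moves(\At A))}^\omega \iso \Gamma^\omega \times \Moves(\At A)^\omega$ implicit but used consistently, so that the ``move at step $k$'' is the pair $(\beta(k),\mu(k))$ on both sides of the equivalence.
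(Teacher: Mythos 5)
Your proposal is correct and follows essentially the same route as the paper's proof: decompose $F$ into the pair $\pair{\seqbis,G}$ via the Cartesian structure of $\Mealy$ and the identification $\Moves(\exists_\Gamma\At A)=\Gamma\times\Moves(\At A)$, then observe that the acceptance clause for $\exists_\Gamma\At A$ unfolds to exactly the condition $G(\seq,\run)\real\At A\pair{\seq,\seqbis(\seq,\run)}$. You merely spell out the state-sequence recurrence that the paper dismisses as "easy to see."
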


\begin{proof}
Consider a realizer $\At B \real F : \exists_\Gamma \At A$
for some $\At B : \Sigma$.
Then $F$ is a finite-state synchronous function from
$\Sigma^\omega \times {\Moves(\At B)}^\omega$ to
${(\Gamma \times \Moves(\At A))}^\omega \iso \Gamma^\omega \times {\Moves(\At A)}^\omega$,
and is therefore given by a pair $\pair{\seqbis,G}$ of finite-state synchronous
functions
\begin{equation}
\label{eq:real:realproj}
\seqbis ~~:~~
\Sigma \times \Moves(\At B) ~~\longto_\Mealy~~ \Gamma
\qquad\qquad
G ~~:~~
\Sigma \times \Moves(\At B) ~~\longto_\Mealy~~ \Moves(\At A)
\end{equation}

\noindent
Moreover, given $\seq \in \Sigma^\omega$ and $\run \real \At B(\seq)$,
since $F(\seq,\run) \real \exists_{\Gamma}\At A(\seq)$,
it is easy to see that
$G(\seq,\run) \real \At A(\pair{\seq,\seqbis(\seq,\run)})$.
Conversely, given $\seqbis$ and $G$ as in~(\ref{eq:real:realproj}),
if
$G(\seq,\run) \real \At A(\pair{\seq,\seqbis(\seq,\run)})$
for all $\seq \in \Sigma^\omega$ and all $\run \real \At B(\seq)$,
then we have $\At B \real \pair{\seqbis,G} : \exists_\Gamma \At A$.
\end{proof}

The negation $\lnot(-)$ on formulae is interpreted by an
operation $\aneg(-)$ on uniform automata which involves
McNaughton's Theorem~\ref{thm:mso:mcnaughton}.

\begin{prop}%
\label{prop:real:compl}
Given a uniform automaton $\At A : \Sigma$,
there is a uniform \emph{deterministic} automaton $\aneg \At A : \Sigma$
such that $\seq \in \Lang(\aneg \At A)$ iff
$\seq \notin \Lang(\At A)$.
\end{prop}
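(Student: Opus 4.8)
The plan is to build $\aneg\At A$ in two stages: first turn the uniform automaton $\At A$ into an equivalent (classical) non-deterministic automaton on $\omega$-words, then complement that automaton and re-package the result as a \emph{deterministic} uniform automaton. For the first stage, recall that a run of $\At A$ on $\seq$ is a word $\run \in {\Moves(\At A)}^\omega$, and $\seq$ is accepted iff \emph{some} such $\run$ drives the state sequence into $\Omega_{\At A}$. This is precisely the behaviour of an ordinary non-deterministic automaton $\At A'$ over $\Sigma$ whose transition relation relates $(q,\al a)$ to $q'$ exactly when $q' = \trans_{\At A}(q,\al a,u)$ for some move $u \in \Moves(\At A)$, with the same $\omega$-regular acceptance condition $\Omega_{\At A}$ on state sequences. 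Clearly $\Lang(\At A') = \Lang(\At A)$, and since $\Moves(\At A)$ is finite, $\At A'$ is a genuine finite non-deterministic $\omega$-automaton with an $\omega$-regular acceptance set.

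For the second stage, I would invoke McNaughton's Theorem~\ref{thm:mso:mcnaughton} (together with the standard fact that every $\omega$-regular condition, in particular an $\omega$-regular set of accepting state sequences, can be realised by a Muller condition on a deterministic automaton obtained by a product construction): there is a \emph{deterministic} Muller automaton $\At D = (Q_{\At D},\init q_{\At D},\trans_{\At D},\MF)$ with $\Lang(\At D) = \Lang(\At A') = \Lang(\At A)$. Deterministic Muller automata are trivially closed under complement: keeping the same states and transitions but replacing the accepting family $\MF$ by its complement $\Po(Q_{\At D}) \setminus \MF$ yields a deterministic automaton recognising $\Sigma^\omega \setminus \Lang(\At A)$. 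Both "set of states visited infinitely often lies in $\MF$'' and its negation are $\omega$-regular subsets of $Q_{\At D}^\omega$, so this fits the format of Definition~\ref{def:real:aut}.

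Finally I would cast this complemented deterministic Muller automaton as a uniform automaton in the sense of Definition~\ref{def:real:aut}: take $Q_{\aneg\At A} \deq Q_{\At D}$, $\init q_{\aneg\At A} \deq \init q_{\At D}$, $\Moves(\aneg\At A) \deq \one$ (so that $\aneg\At A$ is deterministic in the sense of the paper, $\Moves \iso \one$), $\trans_{\aneg\At A}(q,\al a,\unit) \deq \trans_{\At D}(q,\al a)$, and let $\Omega_{\aneg\At A}$ be the $\omega$-regular set of state sequences whose infinity set lies outside $\MF$. Since there is exactly one run $\run = \unit^\omega$ on any $\seq$, and it is accepting iff the induced state sequence is in $\Omega_{\aneg\At A}$, we get $\seq \in \Lang(\aneg\At A)$ iff $\seq \notin \Lang(\At D) = \Lang(\At A)$, as required. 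The one point needing care — and the main obstacle in a fully detailed write-up — is the passage from an arbitrary $\omega$-regular acceptance condition $\Omega_{\At A} \sle Q_{\At A}^\omega$ on $\At A'$ to a Muller condition: one must first take the product of $\At A'$ with a deterministic parity (or Muller) automaton recognising $\Omega_{\At A}$ viewed as an $\omega$-regular language over the alphabet $Q_{\At A}$, and only then apply McNaughton; this is routine but is where the "$\omega$-regular $\Rightarrow$ Muller-recognisable after determinisation'' machinery is genuinely used.
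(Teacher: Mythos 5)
Your proof is correct and follows essentially the same route as the paper: existentially quantify away the moves to get an ordinary non-deterministic $\omega$-automaton, use a product with an automaton recognising $\Omega_{\At A} \sle Q_{\At A}^\omega$ to reduce to a standard acceptance condition, determinize via McNaughton, and complement the resulting Muller condition. The only cosmetic differences are that the paper keeps the moves in the alphabet ($\Sigma \times \Moves(\At A)$) and removes them by the usual projection rather than folding them into the transition relation, and it uses a non-deterministic Büchi automaton for $\Omega_{\At A}$ so that McNaughton's Theorem applies directly to the product.
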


\begin{proof}
Let $U \deq \Moves(\At A)$
and consider the (usual) deterministic automaton
$\At S$ over
$\Sigma \times U$
with the same states as $\At A$ and with transition function
$\trans_{\At S}$
defined as
$\trans_{\At S}(q,(\al a,u)) \deq \trans_{\At A}(q,\al a,u)$.
Then $\run \real \At A(\seq)$ iff $\At S$ accepts $\pair{\seq,\run}$.
Since $\Omega_{\At A}$ is $\omega$-regular, 
it is recognized by a non-deterministic Büchi automaton
$\At C$ over $Q_{\At A}$.
We then obtain a non-deterministic Büchi automaton $\At B$
over $\Sigma \times U$ with state set $Q_{\At A} \times Q_{\At C}$
and
s.t.\@ $\Lang(\At B) = \Lang(\At S)$.
It follows that $\seq \in \Lang(\At A)$ iff
$\seq \in \Lang(\weak\exists_{U} \At B)$,
where $\weak\exists_{U} \At B$ is the usual projection of $\At B$ on $\Sigma$
(see \eg~\cite{thomas97handbook}).
By McNaughton's Theorem~\ref{thm:mso:mcnaughton},
$\weak\exists_{U} \At B$ is equivalent to a \emph{deterministic}
Muller automaton $\At D$ over $\Sigma$.
Then we let $\aneg \At A$ 
be the deterministic uniform automaton defined as $\At D$ but with
$\Omega_{\aneg \At A}$ the $\omega$-regular set
generated by the Muller condition
$S \in \MF$ iff $S \notin \MF_{\At D}$
(see~\eg~\cite[Thm.\@ I.7.1 \& Prop.~I.7.4]{pp04book}).
\end{proof}

\subsection{The Realizability Interpretation}%
\label{sec:real:real}
We are now going to define our realizability interpretation.
This goes in two steps:
\begin{enumerate}
\item\label{item:real:real:form}
To each formula $\varphi$ we associate a uniform automaton $\I\varphi$.

\item\label{item:real:real:der}
To each derivation $\Der D$ of a (closed) sequent
$\varphi_1,\dots,\varphi_n \thesis \varphi$
in $\SMSO$, we associate a finite-state synchronous $F_{\Der D}$
such that
$\I{\varphi_1} \times \cdots \times \I{\varphi_n} \real F_{\Der D} : \I\varphi$.
\end{enumerate}

\noindent
We first discuss step~(\ref{item:real:real:form}).
Consider a formula $\varphi$ with free variables among
$\vec x = x_1,\dots,x_\ell$
and
$\vec X = X_1,\dots,X_p$.
Its interpretation is a uniform automaton
$\I\varphi_{\vec x,\vec X}$ over $\two^\ell \times \two^p$,
defined by induction on $\varphi$,
and such that $\I\delta_{\vec x,\vec X}$ is deterministic for a deterministic $\delta$.
We thus have to devise a deterministic uniform automaton $\At A(\alpha)$
for each atomic formula $\alpha$ of $\SMSO$.
The definitions of the $\At A(\alpha)$'s
are easy and follow usual constructions
(see \eg~\cite{thomas97handbook}). They are deferred to App.\ \ref{sec:app:real:atom}.
Moreover, in order to handle individual variables,
the interpretation also uses a deterministic uniform automaton $\Sing : \two$
accepting the language of $\omega$-words
$\seq \in \two^\omega \iso \Po(\NN)$ such that $\seq$ is a singleton.
App.\ \ref{sec:app:real:atom} also presents a possible definition for $\Sing$.
The interpretation
$\I\varphi_{\vec x,\vec X}$
is defined in Fig.\ \ref{fig:real:form},
where $\pi$, $\pi'$ are suitable projections
and $\sigma$ is a suitable permutation.
We write $\I\varphi$ when $\vec x,\vec X$ are irrelevant or
understood from the context.
Note that the set of moves $\Moves(\varphi)$ of $\I\varphi$
indeed satisfies (\ref{eq:real:moves}), so in particular $\I\delta$
is indeed deterministic for a deterministic $\delta$.

As expected, the interpretation $\I{-}$ is correct in the following sense.
For $k \in \NN$, we keep on writing $k$ for the function from
$\NN$ to $\two$ which takes $n$ to $1$ iff $n = k$.

\begin{prop}%
\label{prop:real:cor}
Given a formula $\varphi$ with free variables among
$\vec x = x_1,\dots,x_\ell$ and $\vec X = X_1,\dots,X_p$,
for all
$\vec k \in \NN^\ell$
and all $\vec\seq \in {(\two^\omega)}^p \iso {(\two^p)}^\omega$
we have
$(\vec k,\vec\seq) \in \Lang(\I\varphi_{\vec x,\vec X})$
iff
$\Std \models \varphi[\vec k/\vec x,\vec \seq/\vec X]$.
\end{prop}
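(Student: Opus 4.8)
The plan is to proceed by induction on the structure of $\varphi$, following the inductive definition of $\I{-}$ in Fig.~\ref{fig:real:form}. For the base case, we need the correctness of the automata $\At A(\alpha)$ for atomic $\alpha$, together with the correctness of the singleton automaton $\Sing$; these are exactly the constructions deferred to App.~\ref{sec:app:real:atom}, so in the body of the proof I would simply invoke them, reducing the atomic case to checking that membership of $(\vec k,\vec\seq)$ in $\Lang(\At A(\alpha))$ matches the standard interpretation of $\alpha$, being careful about the encoding of individual variables $x_i$ as singleton $\omega$-words $k_i \in \two^\omega$ and the side conditions imposed via $\Sing$.

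For the inductive step I would treat the three connective cases $\varphi_1 \land \varphi_2$, $\ex x \varphi$, $\ex X \varphi$ (and $\lnot\varphi$), using the language-level equations already established for the relevant automaton constructions. The conjunction case follows from Prop.~\ref{prop:real:cart}, which gives $\Lang(\I{\varphi_1} \times \I{\varphi_2}) = \Lang(\I{\varphi_1}) \cap \Lang(\I{\varphi_2})$, matching the semantics of $\land$; one also needs the finite-substitution lemma (Prop.~\ref{prop:real:subst}, cf.\ Ex.~\ref{ex:real:subst}) to reconcile the free-variable contexts of $\varphi_1$, $\varphi_2$ and $\varphi_1 \land \varphi_2$ via the projections $\pi,\pi'$. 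For the existential cases, the key fact is that $\Lang(\exists_\Gamma \At A)$ is the projection of $\Lang(\At A)$: that $(\vec k,\vec\seq) \in \Lang(\exists_{\two}\I\varphi)$ iff there is $\seqbis \in \two^\omega$ with $(\vec k,\vec\seq,\seqbis) \in \Lang(\I\varphi)$, which after applying the induction hypothesis is exactly $\Std \models \ex X \varphi[\cdots]$ (resp.\ $\ex x \varphi$, where one additionally intersects with $\Sing$ to force the witness to encode a natural number, and uses the permutation $\sigma$ to place the new variable correctly). For $\lnot\varphi$, Prop.~\ref{prop:real:compl} gives $\Lang(\aneg\I\varphi) = \Sigma^\omega \setminus \Lang(\I\varphi)$, which matches the semantics of negation directly once the induction hypothesis is applied.

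The main obstacle I expect is purely bookkeeping rather than conceptual: carefully managing the free-variable contexts $\vec x,\vec X$ across the induction, since $\I{-}$ is indexed by an explicit context and the constructions $\exists_\Gamma$, $\aneg$, finite substitution, and products all shuffle or extend these contexts, and the figure uses unnamed projections $\pi,\pi'$ and a permutation $\sigma$ whose exact action must be pinned down to make the substitution instances $\varphi[\vec k/\vec x,\vec\seq/\vec X]$ line up with the $\omega$-word inputs to the automaton. In particular the clause for $\ex x \varphi$ requires both projecting out the witness \emph{and} relativizing to singletons via $\Sing$, so one must check that an accepting run of the projected automaton exists precisely when the witness $\omega$-word is of the form $k$ for some $k \in \NN$ and $\varphi$ holds at $k$. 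Once the contexts are handled correctly, each inductive case is a one-line appeal to the corresponding language equation from \S\ref{sec:real:op} composed with the induction hypothesis, so I would write the proof tersely, stating the atomic case as handled in the appendix and giving the connective cases as short paragraphs keyed to Propositions~\ref{prop:real:subst}, \ref{prop:real:cart}, \ref{prop:real:compl} and the definition of $\exists_\Gamma$.
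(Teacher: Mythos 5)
Your proposal is correct and matches the intended argument: the paper states Prop.~\ref{prop:real:cor} without an explicit proof precisely because it is the routine induction you describe, with each case discharged by the corresponding language equation from \S\ref{sec:real:op} (Props.~\ref{prop:real:subst}, \ref{prop:real:cart}, \ref{prop:real:compl} and the definition of $\exists_\Gamma$) and the atomic/singleton automata of App.~\ref{sec:app:real:atom}. You also correctly identify the one genuine subtlety — that individual quantification must be relativized to $\Sing$ because the atomic automata do not themselves check that individual inputs are singletons — which is exactly the point the appendix flags.
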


\begin{figure}
\[
\begin{array}{c}
  \I\alpha_{\vec x,\vec X}
~\deq~
  \At A(\alpha)\lift\pi
\qquad\quad
  \I{\lnot \psi}_{\vec x,\vec X}
~\deq~
  \aneg \I\psi_{\vec x,\vec X}
\qquad\quad
  \I{\ex X \psi}_{\vec x,\vec X}
~\deq~
  \exists_\two(\I\psi_{\vec x,\vec X,X})
\\[0.5em]
  \I{\psi_1 \land \psi_2}_{\vec x,\vec X}
~\deq~
  \I{\psi_1}_{\vec x,\vec X} \times \I{\psi_2}_{\vec x,\vec X}
\qquad\qquad
  \I{\ex x \psi}_{\vec x,\vec X}
~\deq~
  \exists_\two(\Sing\lift{\pi'} \times \I\psi_{\vec x,x,\vec X}\lift{\sigma})
\end{array}
\]
\caption{Interpretation of $\SMSO$-Formulae as Uniform Automata.%
\label{fig:real:form}}
\end{figure}

We now turn to step~(\ref{item:real:real:der}).
Let $\varphi_1,\dots,\varphi_n,\varphi$ be formulae
and consider variables
$\vec x = x_1,\dots,x_\ell$ and $\vec X= X_1,\dots,X_p$
containing all the free variables of
$\varphi_1,\dots,\varphi_n,\varphi$.
Then we say that a synchronous function
\[
F \quad:\quad
\two^\ell \times \two^p
\times \one^\ell \times \Moves(\varphi_1) \times \cdots \times \Moves(\varphi_n)
\quad\longto_\Mealy\quad \Moves(\varphi)
\]
\emph{$\vec x,\vec X$-realizes} the sequent $\varphi_1,\dots,\varphi_n \thesis \varphi$
(notation $\varphi_1,\dots,\varphi_n \real_{\vec x,\vec X} F : \varphi$
or $\vec\varphi \real_{\vec x,\vec X} F : \varphi$)
if
\[
\Sing^\ell\lift{\vec\pi} \times
\I{\varphi_1}_{\vec x,\vec X} \times \cdots \times \I{\varphi_n}_{\vec x,\vec X}
~~\real~~ F ~~:~~ \I\varphi_{\vec x,\vec X}
\]
where $\vec\pi$ are suitable projections.

\begin{thm}[Adequacy]%
\label{thm:real:adeq}
Let $\vec\varphi,\varphi$ 
be formulae
with variables among $\vec x,\vec X$.
From an $\SMSO$-derivation $\Der D$
of $\vec\varphi \thesis \varphi$,
one can compute an $\Mealy$-morphism $F_{\Der D}$
s.t.\@
$\vec\varphi \real_{\vec x,\vec X} F_{\Der D} : \varphi$.
\end{thm}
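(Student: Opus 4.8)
The proof will go by induction on the $\SMSO$-derivation $\Der D$ of $\vec\varphi \thesis \varphi$, building for each inference rule an $\Mealy$-morphism witnessing the conclusion out of those supplied by the induction hypothesis for the premises; since every construction used below is effective, so is $F_{\Der D}$. The categorical machinery of \S\ref{sec:real:op} does most of the bookkeeping. The axiom $\vec\varphi,\varphi \thesis \varphi$ is realized by the product projection of $\Sing^\ell\lift{\vec\pi}\times\I{\varphi_1}\times\cdots\times\I{\varphi_n}\times\I\varphi$ onto the $\I\varphi$-factor; the cut rule by the composite $F_2 \comp \pair{\Id, F_1}$, using the finite products of $\Aut_\Sigma$ (Prop.\ \ref{prop:real:cart}); the two $\land$-elimination rules by post-composition with a product projection and $\land$-introduction by pairing; the $\exists$-introduction and $\exists$-elimination rules by the universal property of $\exists_\two$ established in Prop.\ \ref{prop:real:realproj}, after re-indexing along the finite substitutions of Prop.\ \ref{prop:real:subst} to add or discard the variable bound by the quantifier (which, by the side conditions, is not free in the context nor in $\psi$), and carrying along the extra $\Sing$-factor in the first-order case. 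The $\exists$-introduction witnesses are the projections onto the appropriate input slots, which are plainly finite-state synchronous.

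A large block of rules has a \emph{deterministic} conclusion, and can be handled uniformly: a deterministic formula $\delta$ has $\Moves(\delta) \iso \one$, so a realizer of it is essentially unique and the only thing to verify is the semantic side condition, which by correctness of the interpretation (Prop.\ \ref{prop:real:cor}) and the language-inclusion remark Rem.\ \ref{rem:real}.(\ref{rem:real:lgge}) reduces to a fact about $\Std$. This disposes of $\lnot$-introduction and the $\bot$-rule (using in addition that $\Lang(\aneg\At A)$ is the complement of $\Lang(\At A)$, Prop.\ \ref{prop:real:compl}, and that $\I{\bot}$ has empty language), of deterministic double-negation elimination~(\ref{eq:smso:dnedet}) (using $\Lang(\aneg\aneg\At A) = \Lang(\At A)$), of negative comprehension~(\ref{eq:smso:negca}) (the premise forces $\Std \models \varphi[\psi[y]/X]$, hence $\Std \models \ex X \varphi$, whenever the context is realizable), and of all the arithmetic rules of Fig.\ \ref{fig:mso:arith} with an atomic or $\False$ conclusion (the corresponding arithmetical facts hold in $\Std$). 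The deterministic-induction rule~(\ref{eq:smso:ind}) is of the same kind but additionally invokes ordinary induction over $\NN$: once a realizable valuation of the context is fixed, the two premises say that $\delta[0/x]$ holds and that $\delta[m/x]$ implies $\delta[(m{+}1)/x]$ (here one uses that the deterministic $\delta[y/x]$ is realizable precisely when true), which gives $\delta$ at the value taken by $x$. The two arithmetic rules of Fig.\ \ref{fig:mso:arith} with a non-deterministic conclusion, $\thesis \ex y\, \Zero(y)$ and $\thesis \ex y\, \Succ(x,y)$, are realized by pairing the trivial realizer of the deterministic body with the explicit finite-state synchronous witnesses sending any input to the singleton coding $0$, respectively to the singleton coding $x{+}1$; and the general elimination-of-equality rule is realized by reusing the premise's realizer of $\varphi[y/x]$ unchanged, since its second premise guarantees that the $y$- and $z$-slots of every realizable input carry equal values, so that the interpretations of $\varphi[y/x]$ and $\varphi[z/x]$ have the same runs on such inputs.

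The one genuinely new case is synchronous comprehension, which from a proof of $\vec\varphi \thesis \psi[\hat\varphi[y]/X]$ with $\hat\varphi$ uniformly bounded by $y$ concludes $\vec\varphi \thesis \ex X \psi$, whose interpretation $\exists_\two \I\psi$ is not deterministic. By Prop.\ \ref{prop:real:realproj} it suffices to produce a finite-state synchronous witness $\seqbis$ and a map $G$ realizing $\I\psi$ fed with $\seqbis$; for $\seqbis$ I would take the finite-state synchronous function $y$-represented by $\hat\varphi$, which exists and is effectively computable by Lem.\ \ref{lem:synch:char}, and for $G$ the realizer $F$ already obtained for the premise $\vec\varphi \thesis \psi[\hat\varphi[y]/X]$, which is legitimate since $\hat\varphi$ is deterministic and hence $\Moves(\psi[\hat\varphi[y]/X]) = \Moves(\psi)$. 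What makes this work, and what I expect to be the main obstacle of the whole proof, is a substitution lemma stating that $\I{\psi[\hat\varphi[y]/X]}$ and the automaton obtained from $\I\psi$ by feeding its $X$-component with the synchronous function represented by $\hat\varphi$ have the same accepting runs: this is the only place where one must reconcile the deterministic sub-automaton $\I{\hat\varphi}$ sitting inside $\I{\psi[\hat\varphi[y]/X]}$ with the externally supplied witness, rather than merely comparing languages as in all the previous cases. I would prove this lemma by induction on $\psi$, using Prop.\ \ref{prop:real:cor} together with the correctness of the representation in Lem.\ \ref{lem:synch:char}; once it is available, the adequacy of synchronous comprehension, and with it the induction, goes through.
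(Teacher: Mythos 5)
Your overall architecture coincides with the paper's: induction on derivations, a uniform treatment of every rule whose conclusion is deterministic (soundness of $\SMSO$ over $\Std$ plus Prop.\ \ref{prop:real:cor} and $\Moves(\delta)\iso\one$), explicit finite-state witnesses for $\thesis\ex y\,\Zero(y)$ and $\thesis\ex y\,\Succ(x,y)$, reuse of the left premise's realizer for elimination of equality, and the product/projection structure of $\Aut_\Sigma$ together with Prop.\ \ref{prop:real:realproj} for the logical rules. All of that is fine.

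There is, however, a genuine gap in the synchronous comprehension case, which you yourself single out as the crux. You assert that $\hat\varphi$ is deterministic and hence that $\Moves(\psi[\hat\varphi[y]/X]) = \Moves(\psi)$, so that the premise's realizer $F$ can be reused verbatim as the second component $G$. This is false in general: a uniformly bounded formula in the sense of Def.\ \ref{def:smso:bound} is merely a relativization $\psi\restr[-\Leq x]$ and may contain (bounded) individual and set existentials, e.g.\ $\exists t\Leq x\,(\Succ(t,x)\land\lnot Xt)$ as in Ex.\ \ref{ex:synch:spec}; such a $\hat\varphi$ has $\Moves(\hat\varphi)\not\iso\one$, and every positive occurrence of $x\In X$ in $\psi$ (which contributes $\one$ to $\Moves(\psi)$) is replaced by $\hat\varphi[x/y]$ (which contributes $\Moves(\hat\varphi)$). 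Consequently your ``substitution lemma'' cannot be stated as an identity of accepting runs between $\I{\psi[\hat\varphi[y]/X]}$ and $\I\psi$ fed with the witness: the two automata do not even have the same move alphabet. What is needed instead — and what the paper proves as the Lifting Lemma~\ref{lem:real:ca:lift}, alongside the language-level Substitution Lemma~\ref{lem:real:ca:subst} — is a finite-state synchronous map $H : \Moves(\psi[\hat\varphi[y]/Y]) \to_\Mealy \Moves(\psi)$, built by induction on $\psi$, transporting accepting runs of the substituted automaton to accepting runs of $\I\psi$ on the extended input; in the base case $x_i\In Y$ this $H$ collapses $\Moves(\hat\varphi[x_i/y])$ to $\one$, using the language-level lemma to justify correctness. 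The realizer of the conclusion is then $\pair{\seqbis, H\comp F}$ (up to projections), not $\pair{\seqbis, F}$. Your induction on $\psi$ is the right vehicle, but the statement it carries must produce this map of runs rather than assert their coincidence.
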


\noindent
Adequacy of realizability,
together with Prop.\ \ref{prop:real:realproj},
directly gives
Theorem~\ref{thm:smso:main}.(\ref{thm:smso:main:cor}). 

\begin{cor}[Thm.\ \ref{thm:smso:main}.(\ref{thm:smso:main:cor})]
\label{cor:real:main}
Consider a formula $\varphi(\vec X;\vec Y)$
with only $\vec X,\vec Y$ free, where
$\vec X = X_1,\dots,X_p$ and $\vec Y = Y_1,\dots,Y_q$.
Given a derivation $\Der D$ in $\SMSO$ of
$\thesis \ex {\vec Y} \varphi(\vec X;\vec Y)$,
we have $F_{\Der D} \iso \pair{\vec \seqbis,G}$
where $\vec\seqbis = \seqbis_1,\dots,\seqbis_q$
with $\seqbis_i : \two^p \longto_\Mealy \two$
and
$\Std \models \varphi(\vec\seq,\vec \seqbis(\vec\seq))$
for all $\vec\seq \in {(\two^\omega)}^p \iso {(\two^p)}^\omega$.
\end{cor}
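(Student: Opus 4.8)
The plan is to read the corollary off from the Adequacy Theorem~\ref{thm:real:adeq}, using Proposition~\ref{prop:real:realproj} to decompose the extracted realizer and Proposition~\ref{prop:real:cor} to transfer the acceptance condition back into the standard model; the genuine content lies entirely in adequacy, which is established later in this section, so what remains is essentially bookkeeping.

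Concretely, I would first apply Theorem~\ref{thm:real:adeq} to the derivation $\Der D$ of the sequent $\thesis \ex{\vec Y}\varphi(\vec X;\vec Y)$. This sequent has empty context and, since $\vec Y$ is bound, only the monadic variables $\vec X = X_1,\dots,X_p$ free; hence adequacy produces a finite-state synchronous $F_{\Der D} : \two^p \longto_\Mealy \Moves(\ex{\vec Y}\varphi)$ with $\tu \real F_{\Der D} : \I{\ex{\vec Y}\varphi}_{\vec X}$, where $\tu$ is the terminal automaton over $\two^p$ interpreting the empty context. Its unique run $\unit^\omega$ satisfies $\unit^\omega \real \tu(\vec\seq)$ for every $\vec\seq$, so the realizability condition for $F_{\Der D}$ simply reads: $F_{\Der D}(\vec\seq) \real \I{\ex{\vec Y}\varphi}_{\vec X}(\vec\seq)$ for all $\vec\seq$.

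Next I would unfold the existential prefix. Iterating $q$ times the clause $\I{\ex X\psi} = \exists_\two(\I\psi)$ of Figure~\ref{fig:real:form}, the automaton $\I{\ex{\vec Y}\varphi}_{\vec X}$ coincides, after the evident reassociation of its set of moves $\two \times (\two \times \cdots \times (\two \times \Moves(\varphi)))$ with $\two^q \times \Moves(\varphi)$, with the projection $\exists_{\two^q}(\I\varphi_{\vec X,\vec Y})$; this reassociation is precisely the isomorphism $F_{\Der D} \iso \pair{\vec\seqbis,G}$ appearing in the statement. Proposition~\ref{prop:real:realproj}, applied with the terminal automaton $\tu$ over $\two^p$ in the rôle of $\At B$ and with $\two^q$ in the rôle of $\Gamma$, then identifies $F_{\Der D}$ with a $\Mealy$-pair $\pair{\vec\seqbis,G}$ of finite-state synchronous functions $\vec\seqbis : \two^p \longto_\Mealy \two^q$ and $G : \two^p \longto_\Mealy \Moves(\varphi)$ such that $G(\vec\seq) \real \I\varphi_{\vec X,\vec Y}\pair{\vec\seq,\vec\seqbis(\vec\seq)}$ for all $\vec\seq$ (the quantification over runs of $\tu$ collapsing to a quantification over $\vec\seq$, as noted above). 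Splitting $\vec\seqbis$ into components through the Cartesian structure of $\Mealy$ (Proposition~\ref{prop:prelim:synchcart}) gives $\seqbis_1,\dots,\seqbis_q$ with each $\seqbis_i : \two^p \longto_\Mealy \two$.

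Finally, from $G(\vec\seq) \real \I\varphi_{\vec X,\vec Y}\pair{\vec\seq,\vec\seqbis(\vec\seq)}$ one obtains $\pair{\vec\seq,\vec\seqbis(\vec\seq)} \in \Lang(\I\varphi_{\vec X,\vec Y})$ by the definition of acceptance, and hence $\Std \models \varphi[\vec\seq/\vec X,\vec\seqbis(\vec\seq)/\vec Y]$ by the correctness of the interpretation (Proposition~\ref{prop:real:cor}); that is, $\Std \models \varphi(\vec\seq,\vec\seqbis(\vec\seq))$ for all $\vec\seq$, as desired. There is no real obstacle internal to this corollary: the only point requiring a little care is verifying that the $q$-fold iteration of $\exists_\two$ matches a single projection along $\two^q$ and that the associated reassociation of move sets is the one implicit in $F_{\Der D}\iso\pair{\vec\seqbis,G}$; and it is worth stressing that McNaughton's Theorem is not invoked here, having already been consumed in the proof of Theorem~\ref{thm:real:adeq}, where the substance of the result resides.
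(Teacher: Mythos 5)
Your proposal is correct and follows exactly the route the paper takes: its proof of this corollary is the one-line observation that adequacy (Thm.~\ref{thm:real:adeq}) together with Prop.~\ref{prop:real:realproj} gives the result, and your write-up is a faithful expansion of that, with the $q$-fold unfolding of $\exists_\two$, the collapse of the terminal context $\tu$, and the final appeal to Prop.~\ref{prop:real:cor} all being the intended (implicit) bookkeeping.
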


The proof of Thm.\ \ref{thm:real:adeq} goes by induction on derivations.
Most of the rules of $\SMSO$ are straightforward,
except the synchronous comprehension rule, that we discuss first.
Adequacy for synchronous comprehension
follows from the existence of finite-state characteristic functions
for bounded formulae (Lem.\ \ref{lem:synch:char})
and from
the following lemmas,
which allow us, given a synchronous function
$\seqbis$ $y$-represented by $\hat\varphi$,
to lift a realizer of
$\psi[\hat\varphi[y]/Y]$
to a realizer of $\exists Y \psi$.

\begin{lem}[Substitution Lemma for Synchronous Comprehension]%
\label{lem:real:ca:subst}
Let $\vec x = x_1,\dots,x_\ell$ and $\vec X = X_1,\dots,X_p$.
Let $\hat\varphi$ be a
formula with free variables among $y,\vec X$, 
and which
$y$-represents $\seqbis : \two^p \longto_\Mealy\two$.
Then for every formula $\psi$ with
free variables among $\vec x,\vec X,Y$,
for all $\vec k \in \NN^\ell$
and all $\vec\seq \in {(\two^\omega)}^p \iso {(\two^p)}^\omega$
we have
\[
(\vec k,\vec\seq) \in \Lang(\I{\psi[\hat\varphi[y]/Y]}_{\vec x,\vec X})
\qquad\iff\qquad
(\vec k,\vec\seq, \seqbis(\vec\seq)) \in
\Lang(\I\psi_{\vec x,\vec X Y})
\]
\end{lem}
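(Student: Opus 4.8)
The plan is to reduce the statement to the correctness of the interpretation (Prop.~\ref{prop:real:cor}) together with the ordinary semantic substitution lemma for the standard model $\Std$. By Prop.~\ref{prop:real:cor}, the left-hand side $(\vec k,\vec\seq) \in \Lang(\I{\psi[\hat\varphi[y]/Y]}_{\vec x,\vec X})$ is equivalent to $\Std \models (\psi[\hat\varphi[y]/Y])[\vec k/\vec x,\vec\seq/\vec X]$, and, reading $\seqbis(\vec\seq) \in \two^\omega$ as an element of $\Po(\NN)$ under the identification $\two^\omega \iso \Po(\NN)$, the right-hand side $(\vec k,\vec\seq,\seqbis(\vec\seq)) \in \Lang(\I\psi_{\vec x,\vec X Y})$ is equivalent to $\Std \models \psi[\vec k/\vec x,\vec\seq/\vec X,\seqbis(\vec\seq)/Y]$. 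So it suffices to prove that these two satisfaction statements coincide.

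First I would record the relevant instance of the usual semantic substitution lemma: for every formula $\psi$ with free variables among $\vec x,\vec X,Y$, every formula $\theta$ whose only free individual variable is $y$ and whose free predicate variables are among $\vec X$, every $\vec k \in \NN^\ell$ and every $\vec\seq \in {(\two^\omega)}^p$, one has
\[
\Std \models (\psi[\theta[y]/Y])[\vec k/\vec x,\vec\seq/\vec X]
\quad\iff\quad
\Std \models \psi[\vec k/\vec x,\vec\seq/\vec X,A/Y],
\qquad
A \deq \{\, n \in \NN \mid \Std \models \theta[n/y,\vec\seq/\vec X]\,\}.
\]
This is proved by a routine induction on $\psi$ (extending the individual substitution with a fresh witness when passing under an $\ex x$). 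The only interesting case is the atom $t \In Y$: by the definition of formula substitution recalled after~(\ref{eq:mso:ax}) we have $(t \In Y)[\theta[y]/Y] = \theta[t/y]$, and once $t$ has been given its value $m \in \NN$ both sides express $m \in A$. The remaining atoms are left unchanged by the substitution, and the cases of $\land$, $\lnot$ and of the quantifiers $\ex x$, $\ex X$ follow from the induction hypothesis, the freshness conventions on bound variables ensuring that neither $y$ nor the shared parameters $\vec X$ get captured when $\theta$ is duplicated under a quantifier.

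It then remains to instantiate this with $\theta \deq \hat\varphi$ and to identify $A$ with $\seqbis(\vec\seq)$. Since $\hat\varphi$ has no free individual variable other than $y$ and $y$-represents $\seqbis$, Def.~\ref{def:synch:repr} (in the case $\ell = 0$) gives, for every $n \in \NN$ and every $\vec\seq$, that $\seqbis(\vec\seq)(n) = 1$ iff $\Std \models \hat\varphi[n/y,\vec\seq/\vec X]$; that is, $A = \seqbis(\vec\seq)$ as subsets of $\NN$. Feeding this equality into the semantic substitution lemma yields the equivalence between $\Std \models (\psi[\hat\varphi[y]/Y])[\vec k/\vec x,\vec\seq/\vec X]$ and $\Std \models \psi[\vec k/\vec x,\vec\seq/\vec X,\seqbis(\vec\seq)/Y]$, and composing with Prop.~\ref{prop:real:cor} on each side gives the statement. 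The work is essentially bookkeeping rather than a genuine obstacle; the one point requiring care is to match the ``avoiding capture of free variables'' convention built into $\psi[\hat\varphi[y]/Y]$ with the induction (so that $y$, which does not occur bound in $\psi$, and the parameters $\vec X$ are never captured) and to keep straight the identification $\two^\omega \iso \Po(\NN)$ under which $\seqbis(\vec\seq)$ is read as a set of naturals.
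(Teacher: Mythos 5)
Your proof is correct, but it is organized differently from the paper's. The paper proves the lemma by a direct induction on $\psi$ at the level of automata languages: the base case $x_i \In Y$ uses the representation property together with Prop.~\ref{prop:real:cor}, and the inductive cases appeal to the language characterizations of the constructions on uniform automata ($\Lang(\At A \times \At B) = \Lang(\At A) \cap \Lang(\At B)$ from Prop.~\ref{prop:real:cart}, the description of realizers of $\exists_\Gamma\At A$ from Prop.~\ref{prop:real:realproj}, and complementation from Prop.~\ref{prop:real:compl}). You instead apply Prop.~\ref{prop:real:cor} once to each side of the claimed equivalence, reducing the whole statement to an equivalence of satisfaction in $\Std$, and then discharge that by the classical semantic substitution lemma (substituting a formula for a predicate variable versus interpreting that variable by the set the formula defines), identifying that set with $\seqbis(\vec\seq)$ via Def.~\ref{def:synch:repr} in the case $\ell = 0$, where the side condition $k_i \leq n$ is vacuous. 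Both routes are sound and rest on the same underlying facts --- Prop.~\ref{prop:real:cor} is itself established by the very induction the paper runs here --- so your version mainly relocates the induction into a standard model-theoretic lemma. What the paper's presentation buys is structural parallelism with the Lifting Lemma~\ref{lem:real:ca:lift}, which genuinely requires an induction on $\psi$ because it constructs a Mealy machine $H$ case by case and invokes the substitution lemma in its base and negation cases; keeping the two inductions in lockstep makes that dependence transparent. Your version is shorter and cleanly separates the automata-theoretic content (all concentrated in Prop.~\ref{prop:real:cor}) from the purely logical bookkeeping, which is a legitimate trade-off.
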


\begin{proof}
By induction on $\psi$.
\begin{itemize}
\item
If $\psi$ is an atomic formula not of the form $(x \In Y)$,
then $\psi[\hat\varphi[y]/Y] = \psi$
and the result is trivial.

\item
If $\psi$ is of the form $(x_i \In Y)$,
then
$\psi[\hat\varphi[y]/Y] = \hat\varphi[x_i/y]$.
Since $\hat\varphi$ $y$-represents $\seqbis$,
by~(\ref{eq:synch:repr}) (Def.\ \ref{def:synch:repr}),
for all $\vec k \in \NN^\ell$
and all $\vec\seq \in {(\two^\omega)}^p$
we have
\[
\seqbis(\vec\seq)(k_i) = 1
\qquad\text{iff}\qquad
\Std \models \hat\varphi[k_i/z,\vec\seq/\vec X]
\]
that is
\[
\Std \models k_i \In \seqbis(\vec\seq)
\qquad\text{iff}\qquad
\Std \models \hat\varphi[k_i/z,\vec\seq/\vec X]
\]

\noindent
Then we are done since it follows from Prop.\ \ref{prop:real:cor} that
\[
\Std \models \hat\varphi[k_i/z,\vec\seq/\vec X]
\qquad\text{iff}\qquad
(\vec k,\vec\seq) \in \Lang(\I{\hat\varphi[x_i/y]}_{\vec x,\vec X})
\]

\item
If $\psi = \psi_1 \land \psi_2$,
then by Prop.\ \ref{prop:real:cart}
we have
\[
\begin{array}{r !{\qquad} l !{\quad=\quad} l}
& \Lang \big( \I{(\psi_1\land\psi_2)[\hat\varphi[y]/Y]} \big)
& \Lang \big( \I{\psi_1[\hat\varphi[y]/Y]} \big)
  ~\cap~
  \Lang \big( \I{\psi_2[\hat\varphi[y]/Y]} \big)

\\
  \text{and}

& \Lang \big( \I{\psi_1\land\psi_2} \big)
& \Lang \big( \I{\psi_1} \big)
  ~\cap~
  \Lang \big( \I{\psi_2} \big)
\end{array}
\]
and we conclude by induction hypothesis.

\item
The cases of $\psi$ of the form $\ex X \varphi$ or $\ex x \varphi$
are similar, using Prop.~\ref{prop:real:realproj}
instead of Prop.~\ref{prop:real:cart}.

\item
The case of $\psi$ of the form $\lnot \varphi$ follows from
Prop.\ \ref{prop:real:compl} and the induction hypothesis.
\qedhere
\end{itemize}
\end{proof}

\begin{lem}[Lifting Lemma for Synchronous Comprehension]%
\label{lem:real:ca:lift}
Let $\vec x = x_1,\dots,x_\ell$ and $\vec X = X_1,\dots,X_p$.
Let $\hat\varphi$ be a
formula with free variables among $y,\vec X$,
and which
$y$-represents $\seqbis : \two^p \longto_\Mealy\two$.
Then for every formula $\psi$ with
free variables among $\vec x,\vec X,Y$,
there is a finite-state synchronous function
\[
H \quad:\quad
\Moves(\psi[\hat\varphi[y]/Y])
\quad\longto_\Mealy\quad
\Moves(\psi)
\]
such that
for all $\vec k \in \NN^\ell$,
all $\vec\seq \in {(\two^\omega)}^p$
and all $\run \in {\Moves(\psi[\hat\varphi[y]/Y])}^\omega$,
we have
\begin{equation}
\label{eq:real:ca:subst:iso}
\run \real \I{\psi[\hat\varphi[y]/Y]}_{\vec x,\vec X}(\vec k,\vec\seq)
\qquad\imp\qquad
H(R) \real
  \I\psi_{\vec x,\vec X,Y}(\vec k,\vec\seq, \seqbis(\vec\seq))
\end{equation}
\end{lem}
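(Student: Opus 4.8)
The plan is to prove the statement by induction on $\psi$, in parallel with (and reusing) the Substitution Lemma~\ref{lem:real:ca:subst}, which already establishes that the \emph{languages} of $\I{\psi[\hat\varphi[y]/Y]}_{\vec x,\vec X}$ and $\I\psi_{\vec x,\vec X,Y}(-,-,\seqbis(-))$ coincide. Here we need the finer, realizer-level statement: a finite-state synchronous translation $H$ of runs. The key observation guiding the construction is that the two automata $\I{\psi[\hat\varphi[y]/Y]}$ and $\I\psi$ differ only in how occurrences of the atom $x \In Y$ are handled; since $\Moves$ is computed structurally by~(\ref{eq:real:moves}) and is insensitive to atoms, the move sets $\Moves(\psi[\hat\varphi[y]/Y])$ and $\Moves(\psi)$ are in fact \emph{canonically isomorphic} whenever $Y$ does not occur under a $\lnot$. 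When $Y$ does occur under a $\lnot$, the subformula gets replaced by a deterministic automaton on \emph{both} sides (via $\aneg(-)$, Prop.~\ref{prop:real:compl}), so the move set is $\one$ on both sides there too. Thus in all cases $H$ can be taken to be (lifted from) a bijection $\Moves(\psi[\hat\varphi[y]/Y]) \iso \Moves(\psi)$, hence trivially finite-state synchronous by Rem.~\ref{rem:prelim:lift}; the real content is verifying that this $H$ transports accepting runs correctly, i.e.\ that~(\ref{eq:real:ca:subst:iso}) holds.

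The induction proceeds case by case, mirroring the proof of Lemma~\ref{lem:real:ca:subst}. For $\psi$ atomic and not of the form $x \In Y$, we have $\psi[\hat\varphi[y]/Y] = \psi$ and $H \deq \Id$ works vacuously. For $\psi = (x_i \In Y)$, both $\I{\hat\varphi[x_i/y]}$ and $\I{x_i \In Y}$ are deterministic, so $\Moves$ is $\one$ on both sides; $H$ is the unique map, and~(\ref{eq:real:ca:subst:iso}) reduces to the language equality already recorded in the $(x_i \In Y)$-case of Lemma~\ref{lem:real:ca:subst} (via Prop.~\ref{prop:real:cor} and Def.~\ref{def:synch:repr}). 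For $\psi = \psi_1 \land \psi_2$, by Prop.~\ref{prop:real:cart} both interpretations are products, $\Moves = \Moves(\psi_1) \times \Moves(\psi_2)$ componentwise, and we set $H \deq H_1 \times H_2$ using the induction hypotheses; that an accepting run of a product is exactly a pair of accepting runs (the $(\sge)$ direction in the proof of Prop.~\ref{prop:real:cart}) gives~(\ref{eq:real:ca:subst:iso}). For $\psi = \ex X \varphi$ or $\psi = \ex x \varphi$, we use Prop.~\ref{prop:real:realproj}: a realizer of a projection automaton is a pair $\pair{\seqbis',G'}$, and since substitution $[\hat\varphi[y]/Y]$ commutes with the quantifier (as $X$, resp.\ $x$, is bound), we obtain $H$ by applying the induction hypothesis to $\varphi$ in the second component while acting as the identity on the $\Gamma$-component of the move (which is $\two$, resp.\ $\two$ together with the $\Sing$-component). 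Finally, for $\psi = \lnot\varphi$, both sides are deterministic by Prop.~\ref{prop:real:compl}, $\Moves$ is $\one$, and~(\ref{eq:real:ca:subst:iso}) is again just the language statement, which in this case is the $\lnot\varphi$-case of Lemma~\ref{lem:real:ca:subst}.

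The step I expect to be the main obstacle is the bookkeeping in the two existential cases: one must check carefully that the permutation and projection decorations ($\pi'$, $\sigma$ in Fig.~\ref{fig:real:form}, and the $\lift{(-)}$ operations) line up identically on the substituted and unsubstituted sides, so that the induced isomorphism on move sets really does commute with the transition functions and hence transports accepting runs. This is a routine but fiddly verification that $\hat\varphi$, having free variables only among $y,\vec X$ (and in particular \emph{not} among $\vec x$), does not interfere with the handling of the quantified individual variable $x$ in the $\ex x \varphi$ case; the hypothesis that $\hat\varphi$ is uniformly bounded by $y$ — in particular that $y$ is its only free individual variable — is what makes this go through cleanly. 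Once the isomorphism is seen to be a morphism of the underlying deterministic ``shape'' automata, finite-stateness of $H$ is immediate from Rem.~\ref{rem:prelim:lift}.
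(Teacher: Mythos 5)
Your proposal is correct and follows essentially the same route as the paper's proof: induction on $\psi$, with the identity (resp.\ the unique map into $\one$) in the atomic and negation cases, the Substitution Lemma~\ref{lem:real:ca:subst} supplying the language-level facts needed there, and Props.\ \ref{prop:real:cart} and~\ref{prop:real:realproj} handling conjunction and the two existential cases. One small inaccuracy in your framing: in the $(x_i \In Y)$ case, $\Moves(\hat\varphi[x_i/y])$ need \emph{not} be $\one$ (a uniformly bounded $\hat\varphi$ may contain existential quantifiers), so $H$ is in general a non-bijective map onto $\one = \Moves(x_i \In Y)$ rather than an isomorphism of move sets — but since that is exactly the unique map you take, and acceptance on the deterministic side is decided by the language equality from Lemma~\ref{lem:real:ca:subst}, nothing in your argument breaks.
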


\begin{proof}
By induction on $\psi$.
\begin{itemize}
\item
If $\psi$ is an atomic formula not of the form $(x \In Y)$,
then $\psi[\hat\varphi[y]/Y] = \psi$.
So we take the identity for $H$
and the result trivially follows.

\item If $\psi$ is of the form $(x_i \In Y)$,
then
$\psi[\hat\varphi[y]/Y] = \hat\varphi[x_i/y]$.
Since $\psi$ is deterministic,
we can take for $H$
the unique map $\Moves(\hat\varphi[x_i/y]) \to_\Mealy \Moves(x_i \In Y) = \one$,
and the result follows from
Lem.\ \ref{lem:real:ca:subst}.

\item If $\psi$ is of the form $\varphi_1 \land \varphi_2$
(resp.\@ $\ex X \varphi$, $\ex x \varphi$)
then we conclude by induction hypothesis and Prop.\ \ref{prop:real:cart}
(resp.\@ Prop.\ \ref{prop:real:realproj}).

\item If $\psi = \lnot \varphi$, then we have
$\Moves(\psi) = \Moves(\psi[\hat\varphi[y]/Y]) = \one$,
and $H$ is the identity.
We then conclude by Lem.\ \ref{lem:real:ca:subst}.
\qedhere
\end{itemize}
\end{proof}

\noindent
Adequacy for synchronous comprehension follows easily.

\begin{lem}[Adequacy of Synchronous Comprehension]%
\label{lem:real:ca}
Let $\psi$ with free variables among $\vec x,\vec X,Y$
and let $\hat\varphi$ be a
formula with free variables among $y,\vec X$ and which is uniformly bounded by $y$.
Then there is a finite-state realizer
$\psi[\hat\varphi[y]/Y] \real_{\vec x,\vec X} F : \ex Y \psi$,
effectively computable from $\psi$ and $\varphi$.
\end{lem}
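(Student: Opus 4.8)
The plan is to assemble the realizer for synchronous comprehension from the pieces already in place: the finite-state characteristic function of a uniformly bounded formula, the Substitution and Lifting Lemmas just proved, and Proposition~\ref{prop:real:realproj} which identifies realizers of an existential automaton with pairs $\pair{\seqbis,G}$. Concretely, since $\hat\varphi$ is uniformly bounded by $y$ with free variables among $y,\vec X$, Lemma~\ref{lem:synch:char} gives a finite-state synchronous function $\seqbis : \two^p \longto_\Mealy \two$ which is $y$-represented by $\hat\varphi$, and $\seqbis$ is effectively computable from $\hat\varphi$. This is the $\Gamma$-component of the pair we are building (with $\Gamma = \two$).

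First I would fix $\vec x = x_1,\dots,x_\ell$ and $\vec X = X_1,\dots,X_p$ and recall that, by Fig.~\ref{fig:real:form}, $\I{\ex Y \psi}_{\vec x,\vec X} = \exists_\two(\I\psi_{\vec x,\vec X,Y})$. By Proposition~\ref{prop:real:realproj}, to realize $\psi[\hat\varphi[y]/Y] \real_{\vec x,\vec X} F : \ex Y \psi$ it suffices to exhibit finite-state synchronous functions $\seqbis'$ and $G$ from $\two^\ell\times\two^p\times\one^\ell\times\Moves(\psi[\hat\varphi[y]/Y])$ to $\two$ and to $\Moves(\psi)$ respectively, such that $G(\vec k,\vec\seq,\run) \real \I\psi_{\vec x,\vec X,Y}(\vec k,\vec\seq,\seqbis'(\vec k,\vec\seq,\run))$ whenever $\run \real \I{\psi[\hat\varphi[y]/Y]}_{\vec x,\vec X}(\vec k,\vec\seq)$ (using that $\Sing^\ell$ forces the $\vec k$ to be genuine singletons). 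For $\seqbis'$ I would take the function that ignores the $\one^\ell$-- and $\Moves$-arguments and the $\vec k$, returning $\seqbis(\vec\seq)$ — this is finite-state synchronous since $\seqbis$ is and precomposition with a projection preserves this (Remark~\ref{rem:prelim:lift} and Proposition~\ref{prop:prelim:synch:cat}). For $G$ I would take the function $H$ provided by the Lifting Lemma~\ref{lem:real:ca:lift} (again precomposed with the evident projection discarding the irrelevant arguments).

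The verification is then immediate: suppose $\run \real \I{\psi[\hat\varphi[y]/Y]}_{\vec x,\vec X}(\vec k,\vec\seq)$. By Lemma~\ref{lem:real:ca:lift}, $H(\run) \real \I\psi_{\vec x,\vec X,Y}(\vec k,\vec\seq,\seqbis(\vec\seq))$, which is exactly $G(\vec k,\vec\seq,\run) \real \I\psi_{\vec x,\vec X,Y}(\vec k,\vec\seq,\seqbis'(\vec k,\vec\seq,\run))$. Hence $\pair{\seqbis',G}$ satisfies the hypothesis of Proposition~\ref{prop:real:realproj}, giving the desired realizer $F \deq \pair{\seqbis',G}$ of $\exists_\two(\I\psi_{\vec x,\vec X,Y}) = \I{\ex Y \psi}_{\vec x,\vec X}$. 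Effectivity is clear: $\seqbis$ is computable from $\hat\varphi$ by Lemma~\ref{lem:synch:char}, and $H$ is computable from $\psi$ and $\hat\varphi$ by the (constructive) induction in Lemma~\ref{lem:real:ca:lift}.

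I do not expect a genuine obstacle here — the lemma is essentially a packaging step. The only points requiring a little care are bookkeeping ones: correctly threading the $\Sing^\ell\lift{\vec\pi}$ factors and the $\one^\ell$ arguments through Definition of $\real_{\vec x,\vec X}$ so that the types of $\seqbis'$ and $G$ match Proposition~\ref{prop:real:realproj} on the nose, and checking that discarding arguments via projections keeps everything in $\Mealy$ (which is Proposition~\ref{prop:prelim:synch:cat} together with Proposition~\ref{prop:prelim:synchcart}). If one wanted, the statement could also be phrased so as to absorb surrounding context formulae $\vec\varphi$ on the left, but since realizability of a sequent is defined via the product automaton and the ambient $\vec\varphi$ play no role in comprehension, it is cleanest to prove the bare form above and let the general case follow by the product (weakening) structure of $\Aut_\Sigma$.
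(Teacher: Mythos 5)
Your proposal is correct and follows exactly the paper's argument: obtain $\seqbis$ from Lem.~\ref{lem:synch:char}, obtain $H$ from the Lifting Lemma~\ref{lem:real:ca:lift}, precompose both with suitable projections, and conclude via Prop.~\ref{prop:real:realproj} that the pair realizes $\exists Y\,\psi$. The paper's proof is just a terser rendering of the same construction, writing the realizer as $\pair{\seqbis \comp \lift\pi, H \comp \lift{\pi'}}$.
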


\begin{proof}
Let $\seqbis$ $y$-represented by $\hat\varphi$
be given by Lem.\ \ref{lem:synch:char},
and let $H$ satisfying~(\ref{eq:real:ca:subst:iso})
be given by Lem.\ \ref{lem:real:ca:lift}.
It then directly follows from Prop.\ \ref{prop:real:realproj}
and Lem.\ \ref{lem:real:ca:lift} that
$\psi[\hat\varphi[y]/Y]
\real_{\vec x,\vec X}
\pair{\seqbis \comp \lift\pi ,H\comp \lift{\pi'}}
: \exists Y \psi$,
where $\pi,\pi'$ are suitable projections.
\end{proof}


We can finally prove of Thm.\ \ref{thm:real:adeq}.

\begin{proof}[Proof of Thm.\ \ref{thm:real:adeq}]
The proof is by induction on derivations.
Note that if $\vec\varphi \thesis_\SMSO \varphi$, then
the universal closure of the implication $\land\vec\varphi \limp \varphi$
holds in the standard model $\Std$.
In particular,
for all rules whose conclusion is of the form
$\vec\varphi \thesis \delta$
with $\delta$ deterministic,
it follows from Prop.\ \ref{prop:real:cor}
and~(\ref{eq:real:moves}) that
the unique $\Mealy$-map with codomain
$\Moves(\delta) \iso \one$ (and with appropriate domain) is a realizer.
This handles the rules of \emph{negative} \emph{comprehension}
(\ref{eq:smso:negca}),
\emph{deterministic} \emph{induction} (\ref{eq:smso:ind})
and of elimination of double negation on deterministic formulae
(\ref{eq:smso:dnedet}).
This also handles all the rules of Fig.\ \ref{fig:mso:arith},
excepted the rules of elimination of equality
\[
\dfrac{\vec\varphi \thesis \varphi[y/x] \qquad \vec\varphi \thesis y \Eq z}
  {\vec\varphi \thesis \varphi[z/x]}
\]

\noindent
as well as the rules
$\overline{\vec\varphi \thesis \ex y \Zero(y)}$
and
$\overline{\vec\varphi \thesis \ex y \Succ(x,y)}$.
For the latter, we use the DMM depicted in
Fig.\ \ref{fig:prelim:mealy} (left)
(Ex.\ \ref{ex:prelim:mealy}.(\ref{ex:prelim:mealy:succ})) 
together with the fact that $\Succ(-,-)$ is deterministic.
The case of the former is similar and simpler.
As for elimination of equality, we take as realizer of the conclusion
the realizer of the left premise.
This realizer is trivially correct if there is no
realizer of the assumptions $\vec\varphi$.
Otherwise the result follows from Prop.\ \ref{prop:real:cor}
since the right premise ensures that the individual variables $y$ and $z$
are interpreted by the same natural number.
Adequacy for synchronous comprehension is given by Lem.\ \ref{lem:real:ca}.
It remains to deal with the rules of Fig.\ \ref{fig:mso:ded}.
The first two rules follow from the fact that each $\Aut_\Sigma$
is a category with finite products
(Prop.\ \ref{prop:real:cart}).
The rules for $\lnot/\False$ are trivial since their
conclusions are of the form $\vec\varphi \thesis \delta$.
The rules for conjunction follow from
Prop.\ \ref{prop:real:cart}
and those for existential quantifications follow from
Prop.\ \ref{prop:real:realproj}.
\end{proof}

\section{Indexed Structure on Automata}%
\label{sec:fib}

\noindent
In~\S\ref{sec:real} we have defined one category $\Aut_\Sigma$
for each alphabet $\Sigma$.
These categories are actually related by \emph{substitution functors}
arising from $\Mealy$-morphisms, inducing an \emph{indexed}
(or \emph{fibred}) structure.
Substitution functors are a basic notion of categorical logic,
which allows for categorical axiomatizations of quantifications.
We refer to \eg~\cite[Chap.\@ 1]{jacobs01book} for background.

We present here the fibred structure of the categories $\Aut_{(-)}$
and show that the existential quantifiers $\exists_{(-)}$ and Cartesian
product $(-)\times(-)$ of~\S\ref{sec:real:op}
satisfy the expected properties of existential quantifiers and conjunction
in categorical logic.
These properties essentially correspond to the adequacy of the logical
rules of Fig.\ \ref{fig:mso:ded} that do not mention negation ($\lnot$)
nor falsity ($\False$).
Although the fibred structure is not technically necessary to prove
the adequacy of our realizability model,
following such categorical axiomatization was a guideline in its design.
Besides, categorical logic turns out to be an essential tool
when dealing with generalizations to (say) alternating automata.

\subsection{The Basic Idea}%
\label{sec:fib:base}
Before entering the details, let us try to explain the main ideas
in the
usual setting of first-order logic
over a manysorted individual language.
The categorical semantics of existential quantifications
is given by an adjunction
\begin{equation}
\label{eq:fib:exadjsimple}
\text{
  \AXC{$\ex x \varphi(x) \thesis \psi$}
  \doubleLine
  \RightLabel{\quad ($x$ not free in $\psi$)}
  \UIC{$\varphi(x) \thesis \psi$}
\DisplayProof}
\end{equation}

\noindent
This adjunction
induces a bijection
between (the interpretations of) proofs
of the sequents $\varphi(x) \thesis \psi$
and $\ex x \varphi(x) \thesis \psi$,
that we informally denote
\[
\varphi(x) \thesis \psi
\qquad\iso\qquad
\ex x \varphi(x) \thesis \psi
\]

\noindent
Now, in general the variable $x$ occurs free in $\varphi$.
As a consequence,
in order to properly formulate~(\ref{eq:fib:exadjsimple})
one should be able to interpret sequents of the form
$\varphi(x) \thesis \psi$ with free variables.
More generally, the formulae $\varphi$ and $\psi$
should be allowed to contain free variables distinct from~$x$.

The idea underlying the
general method (see \eg~\cite{jacobs01book} for details),
is to first devise a base category $\cat B$ of individuals,
whose objects interpret products of sorts
of the individual language,
and whose maps from say $\iota_1 \times \cdots \times \iota_m$
to $o_1 \times \cdots \times o_n$
represent $n$-tuples $(t_1,\dots,t_n)$ of terms $t_i$
of sort $o_i$
whose free variables are among
$x_{\iota_1},\dots,x_{\iota_m}$,
with $x_{\iota_j}$ of sort $\iota_j$.
Then,
for each object $\iota = \iota_1 \times \cdots \times \iota_m$ of $\cat B$,
one devises a category $\cat E_\iota$ whose objects represent formulae
with free variables among $x_{\iota_1},\dots,x_{\iota_m}$,
and whose morphisms interpret proofs.
Furthermore, $\cat B$-morphisms 
\[
t = (t_1,\dots,t_n) \quad:\quad
\iota_1 \times \cdots \times \iota_m
\quad\longto\quad
o_1 \times \cdots \times o_n
\]
induce \emph{substitution functors}
\[
\subst t \quad:\quad \cat E_{o_1 \times \cdots \times o_n}
\quad\longto\quad
\cat E_{\iota_1 \times \cdots \times \iota_m}
\]

\noindent
The functor $\subst t$ takes
(the interpretation of) a formula $\varphi$
whose free variables are among
$y_{o_1},\dots,y_{o_n}$
to (the interpretation of) the formula
$\varphi[t_1/y_{o_1},\dots,t_n/y_{o_n}]$
with free variables among
$x_{\iota_1},\dots,x_{\iota_m}$.
Its action on the morphisms of $\cat E_{o_1 \times \cdots \times o_n}$
allows us to interpret the \emph{substitution rule} 
\[
\dfrac{\varphi \thesis \psi}
  {\varphi[t_1/y_{o_1},\dots,t_n/y_{o_n}] \thesis \psi[t_1/y_{o_1},\dots,t_n/y_{o_n}]}
\]

\noindent
In very good situations, the operation $\subst{(-)}$
is itself functorial.
Among the morphisms of $\cat B$,
one usually requires the existence of projections, say
\[
\pi
\quad:\quad
o \times \iota
\quad\longto\quad
o
\]
Projections induce substitution functors, called \emph{weakening} functors
\[
\subst\pi
\quad:\quad
\cat E_{o}
\quad\longto\quad
\cat E_{o \times \iota}
\]
which simply allow us to see formula $\psi(y_{o})$ with free variable $y_{o}$
as a formula $\psi(y_{o},x_\iota)$ with free variables among $y_{o},x_\iota$
(but with no actual occurrence of $x_\iota$).
Then the proper formulation of~(\ref{eq:fib:exadjsimple})
is that existential quantification over $x_\iota$ is a functor
\[
\ex {x_\iota} (-)
\quad:\quad
\cat E_{o \times \iota}
\quad\longto\quad
\cat E_{o}
\]
which is left-adjoint to $\subst\pi$:
\[
\text{
  \AXC{$\ex {x_\iota} \varphi(x_\iota,y_o)
    \thesis \psi(y_o)$}
  \doubleLine
  \UIC{$\varphi(x_\iota,y_o) \thesis \subst{\pi}(\psi)(x_\iota,y_o)$}
\DisplayProof}
\]
(where
$x_{\iota}$ does not occur free in $\psi$
since $\psi$ is assumed to be (interpreted as) an object of $\cat E_{o}$,
thus replacing the usual side condition).
Universal quantifications are dually axiomatized as right adjoints
to weakening functors.
In both cases, the adjunctions are subject to additional conditions
(called the \emph{Beck-Chevalley} conditions) which ensure that
they are preserved by substitution.

\subsection{Substitution}%
\label{sec:fib:subst}
So far, for each alphabet $\Sigma$ we have defined a category $\Aut_\Sigma$
of uniform automata over $\Sigma$.
Following~\S\ref{sec:fib:base},
different categories $\Aut_\Sigma$, $\Aut_\Gamma$ can be related
by means of $\Mealy$-morphisms $F : \Sigma \to \Gamma$.
This relies on a very simple \emph{substitution operation}
on automata, generalizing the substitution operation presented
in Prop.\ \ref{prop:real:subst}.

\begin{defi}[Substitution]%
\label{def:fib:subst}
Given a DMM $\At M : \Sigma \to \Gamma$ as in Def.\ \ref{def:mealy}
and an automaton $\At A : \Gamma$ as in Def.\ \ref{def:real:aut},
the automaton $\At A\lift{\At M} : \Sigma$
is defined as follows:
\[
\At A\lift{\At M}
\quad\deq\quad
(Q_{\At A} \times Q_{\At M},\, (\init q_{\At A},\init q_{\At M}),\,
\Moves(\At A),\,
\trans_{\At A\lift{\At M}},\, \Omega_{\At A\lift{\At M}})
\]
where
\[
\trans_{\At A\lift{\At M}}
\quad:\quad
Q_{\At A} \times Q_{\At M} \times \Sigma
\times
\Moves(\At A)
\quad\longto\quad
Q_{\At A} \times Q_{\At M}
\]
is defined as
\[
\trans_{\At A\lift{\At M}}((q_{\At A},q_{\At M}), \al a, m)
~~\deq~~
(\trans_{\At A}(q_{\At A}, \al b,m),\, q'_{\At M})
\qquad\text{with}\qquad
(q'_{\At M},\al b) ~~\deq~~ \trans_{\At M}(q_{\At M},\al a)
\]
and where
${(q_k,q'_k)}_{k \in \NN} \in \Omega_{\At A\lift{\At M}}$
iff ${(q_k)}_{k \in \NN} \in \Omega_{\At A}$.
\end{defi}

Note the reversed direction of the action of $\At M : \Sigma \to \Gamma$:
the substitution operation $(-)\lift{\At M}$ takes an automaton
over $\Gamma$ to an automaton over $\Sigma$.
Substitutions of the form $\At A\lift{\At M}$
can be seen as generalizations
of the substitutions presented in Prop.\ \ref{prop:real:subst}:
Given a function $\al f : \Sigma \to \Gamma$,
the automaton $\At A\lift{\al f}$ of Prop.\ \ref{prop:real:subst}
is isomorphic (in $\Aut_{\Sigma}$)
to the automaton $\At A\lift{\At M_{\al f}}$ obtained by applying
Def.\ \ref{def:fib:subst} to the one-state DMM inducing
the $\Mealy$-morphism $\lift{\al f} : \Sigma \to_{\Mealy} \Gamma$
of Rem.\ \ref{rem:prelim:lift}.

We now characterize the language of $\At A\lift{\At M}$.
To this end, it is useful to note that $\Sigma^\omega$
is in bijection with the set of synchronous functions
$\one^\omega \to \Sigma^\omega$.

\begin{prop}
Given a DMM $\At M : \Sigma \to \Gamma$ and
an automaton $\At A : \Gamma$,
for $\seq \in \Sigma^\omega$
we have:
\[
\seq \in \Lang(\At A\lift{\At M})
\qquad\text{iff}\qquad
F_{\At M}\comp \seq \in \Lang(\At A)
\]
where $F_{\At M} \comp \seq$ is the composition of the synchronous
function $F_{\At M}$ induced by $\At M$
with $\seq$ seen as a synchronous function $\one^\omega \to \Sigma^\omega$.
\end{prop}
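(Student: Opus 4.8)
The plan is to unfold the definitions and observe that, since $\Moves(\At A\lift{\At M}) = \Moves(\At A)$, a run of $\At A\lift{\At M}$ on $\seq$ and a run of $\At A$ on $F_{\At M}\comp\seq$ are literally the same object, namely an $\omega$-word $\run \in \Moves(\At A)^\omega$. So it suffices to show that for each such $\run$, the run $\run$ is accepting for $\At A\lift{\At M}$ on $\seq$ if and only if it is accepting for $\At A$ on $F_{\At M}\comp\seq$; taking the existential quantification over $\run$ then yields the statement about languages. Here I read $\seq \in \Sigma^\omega$ as the synchronous function $\one^\omega \to \Sigma^\omega$, so that $F_{\At M}\comp\seq$ is just the $\omega$-word $F_{\At M}(\seq) \in \Gamma^\omega$.

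First I would fix $\run \in \Moves(\At A)^\omega$ and let ${(q_k,q'_k)}_{k\in\NN}$ be the sequence of states of $\At A\lift{\At M}$ traversed along $(\seq,\run)$, as in Def.\ \ref{def:real:aut}. By a straightforward simultaneous induction on $k$, using the definition of $\trans_{\At A\lift{\At M}}$ from Def.\ \ref{def:fib:subst}, I would prove that (i) $q'_k = \Seq\trans_{\At M}(\seq\restr k)$, i.e.\ the second component is exactly the state reached by $\At M$ after reading the prefix $\seq\restr k$, and (ii) $q_k$ equals the $k$-th state traversed by $\At A$ along $(F_{\At M}\comp\seq,\run)$. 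The base case is immediate since $q_0 = (\init q_{\At A},\init q_{\At M})$ and $\Seq\trans_{\At M}(\es) = \init q_{\At M}$. For the inductive step, the transition $\trans_{\At A\lift{\At M}}((q_k,q'_k),\seq(k),\run(k))$ is by definition $(\trans_{\At A}(q_k,\al b,\run(k)),\, q'_{k+1})$ where $(q'_{k+1},\al b) \deq \trans_{\At M}(q'_k,\seq(k))$; using (i) at rank $k$ and the recursive clause defining $\Seq\trans_{\At M}$ gives $q'_{k+1} = \Seq\trans_{\At M}(\seq\restr(k+1))$, which is (i) at rank $k+1$, while $\al b = \trans_{\At M}^o(\Seq\trans_{\At M}(\seq\restr k),\seq(k)) = F_{\At M}(\seq)(k)$ by the very definition of the function induced by a DMM (given just after Def.\ \ref{def:mealy}). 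Plugging this $\al b$ into the first component and using (ii) at rank $k$ yields (ii) at rank $k+1$.

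With (i) and (ii) established, the conclusion is immediate: by Def.\ \ref{def:fib:subst} one has ${(q_k,q'_k)}_k \in \Omega_{\At A\lift{\At M}}$ iff ${(q_k)}_k \in \Omega_{\At A}$, and by (ii) the sequence ${(q_k)}_k$ is precisely the state sequence whose membership in $\Omega_{\At A}$ defines $\run \real \At A(F_{\At M}\comp\seq)$. Hence $\run \real (\At A\lift{\At M})(\seq)$ iff $\run \real \At A(F_{\At M}\comp\seq)$, and quantifying over $\run$ gives $\seq \in \Lang(\At A\lift{\At M})$ iff $F_{\At M}\comp\seq \in \Lang(\At A)$. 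There is no real obstacle here; the only point demanding a little care is to identify the letter $\al b$ output by $\trans_{\At M}$ inside the substituted transition function with the $k$-th letter of $F_{\At M}(\seq)$, which is exactly what the definition of the induced function delivers. Incidentally, this argument subsumes Prop.\ \ref{prop:real:subst}, recovered by taking for $\At M$ the one-state DMM inducing $\lift{\al f}$.
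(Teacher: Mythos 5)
Your proof is correct: the paper states this proposition without proof, treating it as immediate from Def.\ \ref{def:fib:subst} and the definition of the function induced by a DMM, and your unfolding — identifying runs via $\Moves(\At A\lift{\At M}) = \Moves(\At A)$, then showing by induction that the second state component tracks $\Seq\trans_{\At M}(\seq\restr k)$ while the first component follows the run of $\At A$ on $F_{\At M}(\seq)$ — is exactly the intended verification. The key step of identifying the letter $\al b$ with $F_{\At M}(\seq)(k)$ via $\trans^o_{\At M}$ is handled correctly.
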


Given $\At M : \Sigma \to \Gamma$,
an important property of the substitution operation
$(-)\lift{\At M}$ is that it induces a functor
$\Aut_{\Gamma} \to \Aut_\Sigma$.
The action of this functor on objects of $\Aut_\Gamma$
has just been defined.
Given a morphism $\At A \real F : \At B$ of $\Aut_{\Gamma}$,
the morphism $\At A\lift{\At M} \real F\lift{\At M} : \At B\lift{\At M}$
is the finite-state synchronous function
\[
F\lift{\At M}
\quad:\quad
\Sigma \times \Moves(\At A) \quad\longto_\Mealy\quad \Moves(\At B)
\]
taking $(\seq,\run)$ to $F(F_{\At M}(\seq),\run)$,
where $F_{\At M}$ is the finite-state synchronous function induced
by $\At M$.
It is easy to see that the action of $(-)\lift{\At M}$
on morphisms preserves identities and composition.

\subsection{Categorical Existential Quantifications}%
\label{sec:fib:quant}
Recall from~\S\ref{sec:real:op} that uniform automata are equipped with existential
quantifications, given by an adaption of
the usual projection operation on non-deterministic automata.
Given $\At A : \Sigma \times \Gamma$,
we defined
$\exists_\Gamma \At A : \Sigma$ as
\[
\exists_\Gamma \At A
~\deq~
(Q_{\At A} \,,\, \init q_{\At A} \,,\, \Gamma \times \Moves(\At A)
\,,\, \trans \,,\, \Omega_{\At A})
\quad\text{with}\quad
\trans(q,\al a,(\al b,u)) ~\deq~ \trans_{\At A}(q,(\al a,\al b),u)
\]

\noindent
We are now going to see that $\exists_{(-)}$
is an existential quantification in the usual categorical sense
of \emph{simple coproducts} (see \eg~\cite[Def.\@ 1.9.1]{jacobs01book}).
First, the \emph{weakening functors}
\[
(-)\lift\pi \quad:\quad
\Aut_{\Sigma}
\quad\longto\quad
\Aut_{\Sigma \times \Gamma}
\]
alluded to in~\S\ref{sec:fib:base}
are the substitution functors induced by projections
(see also Ex.\ \ref{ex:real:subst}):
\[
\lift\pi \quad:\quad \Sigma \times \Gamma \quad\longto_\Mealy\quad \Sigma
\]

\noindent
We can now state the first required property, namely that
$\exists_\Gamma$ induces a functor left adjoint to
$(-)\lift\pi$.

\begin{prop}%
\label{prop:fib:quant:adj}
Each existential quantifier $\exists_\Gamma$ induces
a functor $\Aut_{\Sigma \times \Gamma} \to \Aut_\Sigma$
which is left-adjoint to the weakening functor
$(-)\lift{\pi} : \Aut_\Sigma \to \Aut_{\Sigma \times \Gamma}$.
\end{prop}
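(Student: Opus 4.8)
The plan is to first promote $\exists_\Gamma$ to a functor $\Aut_{\Sigma\times\Gamma}\to\Aut_\Sigma$ and then exhibit a bijection, natural in $\At A$ and $\At B$, between $\Aut_\Sigma(\exists_\Gamma\At A,\At B)$ and $\Aut_{\Sigma\times\Gamma}(\At A,\At B\lift\pi)$. Throughout I would use the identification $\Moves(\exists_\Gamma\At A)=\Gamma\times\Moves(\At A)$ of Fig.~\ref{fig:real:form}, so that a run of $\exists_\Gamma\At A$ on $\seq\in\Sigma^\omega$ is a pair $\pair{\seqbis,\run}$ with $\seqbis\in\Gamma^\omega$ and $\run\in\Moves(\At A)^\omega$, and the fact that substitution along the projection $\pi:\Sigma\times\Gamma\to_\Mealy\Sigma$ (Def.~\ref{def:fib:subst}) leaves the moves untouched, so $\Moves(\At B\lift\pi)=\Moves(\At B)$. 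The key observation, which I would establish once and reuse, is that by unfolding Def.~\ref{def:real:aut} the state sequence run by $\exists_\Gamma\At A$ along $(\seq,\pair{\seqbis,\run})$ is literally the state sequence run by $\At A$ along $(\pair{\seq,\seqbis},\run)$; hence $\pair{\seqbis,\run}\real(\exists_\Gamma\At A)(\seq)$ iff $\run\real\At A(\pair{\seq,\seqbis})$. Likewise, since the single extra state component contributed by $\pi$ is trivial, the state sequence run by $\At B\lift\pi$ along $\pair{\seq,\seqbis}$ projects onto that run by $\At B$ along $\seq$, so $\At B\lift\pi$ accepts $\pair{\seq,\seqbis}$ iff $\At B$ accepts $\seq$.

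For functoriality, given $\At A\real F:\At B$ in $\Aut_{\Sigma\times\Gamma}$ (a finite-state synchronous $F:(\Sigma\times\Gamma)\times\Moves(\At A)\to_\Mealy\Moves(\At B)$ preserving acceptance), I would set $\exists_\Gamma F(\seq,\pair{\seqbis,\run})\deq\pair{\seqbis,F(\pair{\seq,\seqbis},\run)}$. This is finite-state synchronous, being assembled from $F$ and from pairings and projections, which are all $\Mealy$-maps closed under composition and products. That it realizes $\exists_\Gamma\At A\real\exists_\Gamma F:\exists_\Gamma\At B$ is immediate from the key observation: if $\pair{\seqbis,\run}\real(\exists_\Gamma\At A)(\seq)$ then $\run\real\At A(\pair{\seq,\seqbis})$, hence $F(\pair{\seq,\seqbis},\run)\real\At B(\pair{\seq,\seqbis})$, hence $\pair{\seqbis,F(\pair{\seq,\seqbis},\run)}\real(\exists_\Gamma\At B)(\seq)$. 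Preservation of identities and composites is then a one-line check on the defining formulas.

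For the adjunction I would define, for a morphism $\exists_\Gamma\At A\real F:\At B$ (a finite-state synchronous $F:\Sigma\times(\Gamma\times\Moves(\At A))\to_\Mealy\Moves(\At B)$ with $F(\seq,\pair{\seqbis,\run})\real\At B(\seq)$ whenever $\run\real\At A(\pair{\seq,\seqbis})$), the transpose $\Phi(F)(\pair{\seq,\seqbis},\run)\deq F(\seq,\pair{\seqbis,\run})$. This is just transport of $F$ along the canonical bijection $\Sigma\times(\Gamma\times M)\iso(\Sigma\times\Gamma)\times M$, which is a finite-state synchronous isomorphism because it is induced by a function (Rem.~\ref{rem:prelim:lift}); so on underlying $\Mealy$-maps $\Phi$ is a bijection onto the finite-state synchronous $G:(\Sigma\times\Gamma)\times\Moves(\At A)\to_\Mealy\Moves(\At B)$. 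By the key observation the side condition defining $\exists_\Gamma\At A\real F:\At B$ becomes, after transport, exactly the side condition ``$G(\pair{\seq,\seqbis},\run)\real(\At B\lift\pi)(\pair{\seq,\seqbis})$ whenever $\run\real\At A(\pair{\seq,\seqbis})$'' defining $\At A\real G:\At B\lift\pi$. Hence $\Phi$ restricts to a bijection $\Aut_\Sigma(\exists_\Gamma\At A,\At B)\iso\Aut_{\Sigma\times\Gamma}(\At A,\At B\lift\pi)$.

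Finally I would check naturality: for $h:\At B\to\At B'$ in $\Aut_\Sigma$, that $\Phi(h\comp F)=h\lift\pi\comp\Phi(F)$, using $(h\lift\pi)(\pair{\seq,\seqbis},\run)=h(\seq,\run)$ from Def.~\ref{def:fib:subst}; and for $g:\At A'\to\At A$ in $\Aut_{\Sigma\times\Gamma}$, that $\Phi(F\comp\exists_\Gamma g)=\Phi(F)\comp g$, by unwinding $\exists_\Gamma g$. Both are direct computations with no side conditions to track. There is no real difficulty here: the whole proof is bookkeeping, and the one point requiring (mild) care is to keep the reassociations of the product alphabets $\Sigma,\Gamma,\Moves(\At A),\Moves(\At B)$ straight and to record that each is a $\Mealy$-isomorphism, so that the two presentations of the hom-sets genuinely agree. (The Beck--Chevalley condition for these adjunctions, which makes $\exists$ commute with substitution, is proved by the same kind of unfolding, and I would treat it analogously.)
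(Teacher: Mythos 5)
Your proof is correct, and the underlying computation is the same one the paper relies on: everything reduces to the observation that the state sequence of $\exists_\Gamma \At A$ on $(\seq,\pair{\seqbis,\run})$ coincides with that of $\At A$ on $(\pair{\seq,\seqbis},\run)$, so that the transpose $F \mapsto \Phi(F)$ is just reassociation along the $\Mealy$-isomorphism $\Sigma\times(\Gamma\times\Moves(\At A))\iso(\Sigma\times\Gamma)\times\Moves(\At A)$. The difference is one of packaging: the paper verifies the universal-arrow characterization of adjunctions (Mac Lane, Thm.\ IV.1.2.(ii)), constructing the unit $\eta_{\At A} : \At A \to (\exists_\Gamma\At A)\lift\pi$ as the projection on moves and showing that every $F : \At A \to \At B\lift\pi$ factors uniquely through it via the equation $F(\pair{\seq,\seqbis},\run) = H(\seq,\pair{\seqbis,\run})$ --- which is exactly your bijection read in the other direction. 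That route buys the functoriality of $\exists_\Gamma$ on morphisms and the naturality of the correspondence for free, whereas you supply the action of $\exists_\Gamma$ on morphisms explicitly and check naturality in both variables by hand. Your version is more verbose but entirely sound, and arguably more self-contained since it does not appeal to the general theorem on universal arrows; the one point you rightly flag --- keeping the reassociations straight and recording that they are $\Mealy$-isomorphisms --- is indeed the only place where care is needed.
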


\begin{proof}
Fix alphabets $\Sigma$ and $\Gamma$.
According to~\cite[Thm.\@ IV.1.2.(ii)]{maclane98book},
we have to show that for each automaton $\At A : \Sigma \times \Gamma$,
there is an $\Aut_{\Sigma \times \Gamma}$-morphism
\[
\eta_{\At A} \quad:\quad
\At A
\quad\longto\quad
(\exists_\Gamma \At A)\lift\pi
\]

\noindent
satisfying the following universal property:
for each automaton $\At B : \Sigma$
and each
$\Aut_{\Sigma\times \Gamma}$-morphism
\[
F \quad:\quad
\At A
\quad\longto\quad
\At B\lift\pi
\]

\noindent
there is a unique $\Aut_\Sigma$-morphism
\[
H \quad:\quad
\exists_\Gamma \At A
\quad\longto\quad
\At B
\]

\noindent
such that we have
\[
\xymatrix@C=40pt@R=40pt{
  \At A
  \ar[r]^(0.4){\eta_{\At A}} 
  \ar[d]_{F} 
& (\exists_\Gamma \At A)\lift\pi
  \ar[dl]^{H\lift\pi} 
\\
  \At B\lift\pi
}
\]

Note that $\eta_{\At A}$ must be an $\Mealy$-morphism
\[
\eta_{\At A}
\quad:\quad
(\Sigma \times \Gamma) \times \Moves(\At A)
\quad\longto_\Mealy\quad
\Gamma \times \Moves(\At A)
\]

\noindent
We let $\eta_{\At A}$ be the $\Mealy$-morphism
induced by the usual projection
$\Sigma \times \Gamma \times \Moves(\At A)
\to
\Gamma \times \Moves(\At A)$.
Given $\At A \real F : \At B\lift\pi$,
we are left with the following trivial fact:
there is a unique
$\exists_\Gamma \At A \real H : \At B$
such that
\[
\forall \seq \in \Sigma^\omega,~
\forall \seqbis \in \Gamma^\omega,~
\forall \run \in {\Moves(\At A)}^\omega,\quad
F(\pair{\seq,\seqbis},\run)
\quad=\quad
H(\seq,\pair{\seqbis,\run})
\qedhere
\]
\end{proof}

The Beck-Chevalley condition of~\cite[Def.\@ 1.9.1]{jacobs01book}
asks for the following isomorphism in $\Aut_{\Delta}$,
where $\At A : \Sigma \times \Gamma$
and $F : \Delta \to_\Mealy \Sigma$:
\[
(\exists_\Gamma \At A)\lift{\At M_F}
\quad\iso\quad
\exists_\Gamma(\At A\lift{\At M_{F \times \id_\Gamma}})
\]
This isomorphism follows from the fact that the
two above automata have the same set of moves
(namely $\Gamma \times \Moves(\At A)$).

\subsection{Categorical Conjunction}%
\label{sec:fib:conj}
Recall from~\S\ref{sec:real:op} that each category $\Aut_\Sigma$ has
Cartesian products, which interpret conjunction, a necessary feature to
interpret a sequent as a morphism from the conjunct of its premises to its
conclusion. In the setting of categorical logic, it remains to be shown that
these products are \emph{fibred} in the sense of~\cite[Def.\@ 1.8.1]{jacobs01book},
i.e.\@ that they are preserved by substitution.

\begin{prop}%
\label{lem:prop:prod:fibred}
Given automata $\At A, \At B : \Gamma$ and a DMM
$\At M : \Sigma \to \Gamma$, the product $\At A\lift{\At M} \times \At B\lift{\At M}$
is isomorphic to $(\At A \times \At B)\lift{\At M}$
in $\Aut_\Sigma$.
\end{prop}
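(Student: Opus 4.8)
The plan is to exhibit the isomorphism as the $\Mealy$-identity on moves. The first step is the simple observation that the two automata in question have the \emph{same} set of moves: unfolding Def.\ \ref{def:fib:subst} and the definition of binary products in $\Aut_\Sigma$ (Prop.\ \ref{prop:real:cart}), both $\At A\lift{\At M} \times \At B\lift{\At M}$ and $(\At A \times \At B)\lift{\At M}$ have $\Moves(\At A) \times \Moves(\At B)$ as their set of moves. So the natural candidate for the isomorphism, and for its inverse, is the morphism $\Id$ of Def.\ \ref{def:real:cat} sending $(\seq,\run)$ to $\run$; since $\Id \comp \Id = \Id$, it suffices to check that $\Id$ is a morphism $\At A\lift{\At M} \times \At B\lift{\At M} \to (\At A \times \At B)\lift{\At M}$ and conversely, i.e.\@ that for every input $\seq \in \Sigma^\omega$ a run $\run \in {(\Moves(\At A) \times \Moves(\At B))}^\omega$ is accepting for one of the two automata on $\seq$ exactly when it is accepting for the other on $\seq$.

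The key point making this true is that in the product $\At A\lift{\At M} \times \At B\lift{\At M}$ the two copies of $Q_{\At M}$ run in lockstep: they both start at $\init q_{\At M}$ and are both updated by $\trans_{\At M}$ reading the current letter of $\seq$, hence remain equal — and equal to the state reached by $\At M$ alone along $\seq$ — at every step. Concretely, I would fix $\seq$ and $\run = {(u_k,v_k)}_{k \in \NN}$, let ${(r_k)}_k$ be the trajectory of $\At M$ along $\seq$, and compute by a routine induction the state sequences traversed by $\run$ on $\seq$: in $\At A\lift{\At M} \times \At B\lift{\At M}$ it has the form ${((p_k,r_k),(q_k,r_k))}_k$, and in $(\At A \times \At B)\lift{\At M}$ it has the form ${((p_k,q_k),r_k)}_k$, for the \emph{same} sequences ${(p_k)}_k$, ${(q_k)}_k$, where ${(p_k)}_k$ is the run of $\At A$ on $F_{\At M}(\seq)$ induced by ${(u_k)}_k$ and ${(q_k)}_k$ the run of $\At B$ on $F_{\At M}(\seq)$ induced by ${(v_k)}_k$; the same induction also checks that the letter handed to $\At A$ and to $\At B$ at step $k$ both equal $F_{\At M}(\seq)(k)$. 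By the definitions of the acceptance conditions of products (Prop.\ \ref{prop:real:cart}) and of substitutions (Def.\ \ref{def:fib:subst}), $\run$ is accepting for either of the two automata on $\seq$ iff ${(p_k)}_k \in \Omega_{\At A}$ and ${(q_k)}_k \in \Omega_{\At B}$; these two conditions coincide, which is exactly what is needed.

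This argument is essentially bookkeeping, so I do not anticipate a genuine obstacle; the only slightly delicate point is the lockstep behaviour of the two $Q_{\At M}$-components in the product, but that is settled by the straightforward simultaneous induction just described. Everything else — that $\Id$ is finite-state synchronous and that it composes to itself — is immediate from Def.\ \ref{def:real:cat}.
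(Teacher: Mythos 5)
Your proof is correct and follows essentially the same route as the paper, which simply observes that both automata have $\Moves(\At A) \times \Moves(\At B)$ as their set of moves and declares the isomorphism (the identity on moves) to follow trivially. You merely make explicit the verification the paper leaves implicit, namely the lockstep behaviour of the two $Q_{\At M}$-components ensuring that acceptance of a run coincides for the two automata.
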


\begin{proof}
The isomorphism trivially follows from the fact that
\[
\Moves(\At A\lift{\At M} \times \At B\lift{\At M})
\quad=\quad
\Moves(\At A) \times \Moves(\At B)
\quad=\quad
\Moves((\At A \times \At B)\lift{\At M})
\qedhere
\]
\end{proof}


\subsection{Indexed Structure}%
\label{sec:fib:indexed}
Thanks to the substitution operation discussed in~\S\ref{sec:fib:subst},
each $\Mealy$-morphism $F : \Sigma \to \Gamma$ induces a functor
$(-)\lift{\At M_F} : \Aut_\Gamma \to \Aut_\Sigma$,
where $\At M_F$ is a \emph{chosen} DMM inducing $F$.
As usual in categorical logic, we would like to extend substitution to
a functor $\subst{(-)} : \Mealy^\op \to \Cat$
taking alphabets $\Sigma$ to categories $\Aut_\Sigma$,
and $\Mealy$-morphisms $F : \Sigma \to \Gamma$ to functors
$\Aut_\Gamma \to \Aut_\Sigma$.
In order for $\subst{(-)}$ to be a functor, it should preserve identities and
composition.
In particular, given an automaton $\At A: \Sigma$,
for all
$\Mealy$-maps
$G : \Delta \to \Gamma$ and $F : \Gamma \to \Sigma$
we should have
\begin{equation}
\label{eq:fib:pseudo}
\At A ~~=~~ \At A\lift{\At M_{\Id_\Sigma}}
\qquad\text{and}\qquad
(\At A\lift{\At M_{F}})\lift{\At M_{G}}
~~=~~
\At A\lift{\At M_{F \comp G}}
\end{equation}

\noindent
But we see no reason for this to be possible.
In particular there is no reason for the DMM $\At M_{F \comp G}$
chosen to induce $F \comp G$ to be a product of $\At M_F$ and $\At M_G$.
However, since $\At A\lift{\At M}$ always has the same moves as $\At A$,
we actually get~(\ref{eq:fib:pseudo}) modulo isomorphisms.

This is a usual situation in categorical logic.
It is indeed customary to relax the requirement
of $\subst{(-)}$ to be a functor, and only ask it to be a \emph{pseudo} functor,
\ie\@ a functor for which identities and composition are only preserved up to
natural isomorphisms, subject to some specific coherence conditions
(see \eg~\cite[Def.\@ 1.4.4]{jacobs01book}).
The required natural isomorphisms have the form
\begin{equation}
\label{eq:fib:nat}
\begin{array}{r !{\quad\stackrel\iso\longto\quad} l}
  \eta_\Sigma
  \quad:\quad
  \Id_{\Aut_\Sigma}
& (-)\lift{\At M_{\Id_\Sigma}}
\\
  \mu_{G,F}
  \quad:\quad
  (-)\lift{\At M_F}\lift{\At M_G}
& (-)\lift{\At M_{F \comp G}}
\end{array}
\end{equation}

\noindent
Since $\At A$ and $\At A\lift{\At M}$ have the same moves,
we can take for each components of $\eta_\Sigma$ and $\mu_{F,G}$
synchronous functions acting as identities on runs.
It then follows that all the required diagrams commute.

We now proceed to the formal construction.
Fix for each $\Mealy$-morphism $F : \Sigma \to_\Mealy \Gamma$
a chosen DMM $\At M_F$ inducing $F$.
For each $\At A : \Sigma$, and each
$\Mealy$-morphisms
$G : \Delta \to_\Mealy \Gamma$ and $F : \Gamma \to_\Mealy \Sigma$,
we let
\[
\At A \real \eta_{\Sigma,\At A} : \At A\lift{\At M_{\Id_\Sigma}}
\qquad\text{and}\qquad
\At A \lift{\At M_F}\lift{\At M_G} \real \mu_{G,F,\At A} : \At A\lift{\At M_{F\comp G}}
\]

\noindent
be given by
\[
\begin{array}{r c c c c}
  \eta_{\Sigma,\At A}
& :
& \Sigma \times \Moves(\At A)
& \longto
& \Moves(\At A)
\\
&
& (\seq,\run)
& \longmapsto
& \run
\end{array}
\qquad\text{and}\qquad
\begin{array}{r c c c c}
  \mu_{G,F,\At A}
& :
& \Sigma \times \Moves(\At A)
& \longto
& \Moves(\At A)
\\
&
& (\seq,\run)
& \longmapsto
& \run
\end{array}
\]

\noindent
The following 
says that
the coherence conditions required for structure maps
of pseudo-functors (see \eg~\cite[Def.\@ 1.4.4]{jacobs01book})
are 
met by $\eta_{\Sigma,\At A}$ and $\mu_{F,G,\At A}$.
The proof is trivial.

\begin{prop}
The morphisms $\eta_\Sigma$ and $\mu_{F,G}$ defined above are
natural isomorphisms as in~(\ref{eq:fib:nat}).
Moreover, for each automaton $\At A:\Sigma$
and each $\Mealy$-maps $F,G,H$ of appropriate domains and codomains,
the two diagrams of Fig.\ \ref{fig:fib:indexed} commute.
\end{prop}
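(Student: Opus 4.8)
The plan is to reduce everything to a single observation already emphasized in the text: the substitution operation $(-)\lift{\At M}$ never alters the set of moves, so $\Moves(\At A\lift{\At M}) = \Moves(\At A)$ for every DMM $\At M$ and every $\At A$ (Def.\ \ref{def:fib:subst}). Consequently all the morphisms that occur in the statement --- the components $\eta_{\Sigma,\At A}$ and $\mu_{G,F,\At A}$, their candidate inverses, and the images under the functors $(-)\lift{\At M}$ of arbitrary $\Aut_\Sigma$-morphisms --- are, as underlying synchronous functions, assembled from projections to the run component and from the synchronous functions $F_{\At M}$ induced by the chosen DMMs; on the run component every one of them acts as the identity. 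Since composition in $\Aut_\Sigma$ is composition of the underlying synchronous functions, any composite of such maps again acts as the identity on runs.

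First I would check that $\eta_{\Sigma,\At A}$ and $\mu_{G,F,\At A}$ are genuine $\Aut_\Sigma$-morphisms, i.e.\ that they preserve accepting runs, and that they are invertible in $\Aut_\Sigma$. This uses the description of $\At A\lift{\At M}$ in Def.\ \ref{def:fib:subst}: the state sequence of $\At A\lift{\At M}$ on input $(\seq,\run)$ is $(q_k,q'_k)_k$ with $(q_k)_k$ the state sequence of $\At A$ on $(F_{\At M}(\seq),\run)$, and $\Omega_{\At A\lift{\At M}}$ constrains only the $Q_{\At A}$-component. For $\At M_{\Id_\Sigma}$ we have $F_{\At M_{\Id_\Sigma}} = \id_\Sigma$, so $\run \real \At A\lift{\At M_{\Id_\Sigma}}(\seq)$ is literally $\run \real \At A(\seq)$, whence $\eta_{\Sigma,\At A}$ (identity on runs) is a morphism $\At A \to \At A\lift{\At M_{\Id_\Sigma}}$, with inverse the evident identity-on-runs morphism the other way. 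For $\mu$, Prop.\ \ref{prop:prelim:synch:cat} gives $F_{\At M_F} \comp F_{\At M_G} = F \comp G = F_{\At M_{F\comp G}}$, so $\run \real \At A\lift{\At M_F}\lift{\At M_G}(\seq)$ iff $\run \real \At A\big((F\comp G)(\seq)\big)$ iff $\run \real \At A\lift{\At M_{F\comp G}}(\seq)$; thus $\mu_{G,F,\At A}$ is an isomorphism in $\Aut_\Sigma$, again with identity-on-runs inverse.

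Then I would dispatch naturality and the coherence diagrams, which are pure bookkeeping. For a morphism $\At A \real F' : \At B$ in $\Aut_\Sigma$, the naturality square for $\eta$ compares $\eta_{\Sigma,\At B} \comp F'$ with $F'\lift{\At M_{\Id_\Sigma}} \comp \eta_{\Sigma,\At A}$; since $\eta$ is the identity on runs and $F'\lift{\At M_{\Id_\Sigma}}$ is $F'$ precomposed with $F_{\At M_{\Id_\Sigma}}\times\id = \id$, both composites send $(\seq,\run) \mapsto F'(\seq,\run)$, and the square for $\mu_{G,F}$ is handled identically using $F_{\At M_F}\comp F_{\At M_G} = F_{\At M_{F\comp G}}$. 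Finally, in the two diagrams of Fig.\ \ref{fig:fib:indexed} every vertex is an automaton with the same move set $\Moves(\At A)$, every edge is a morphism acting as the identity on runs, and in each $\Aut_\Sigma$ there is at most one morphism between two such automata that is the identity on runs; since composites of identity-on-runs morphisms are identity-on-runs, the two paths around each diagram agree. The only point requiring any care --- and hence the ``main obstacle'', such as it is --- is to make sure these are equalities/isomorphisms in $\Aut_\Sigma$ rather than merely of underlying $\Mealy$-maps, which is exactly what the acceptance analysis of $\At A\lift{\At M}$ above supplies; beyond that the proof is, as the statement advertises, trivial.
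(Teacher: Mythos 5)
Your proposal is correct and follows exactly the route the paper intends: the paper dismisses the proof as trivial after observing that, since $\At A$ and $\At A\lift{\At M}$ have the same moves, all components of $\eta_\Sigma$ and $\mu_{G,F}$ can be taken to act as identities on runs, and you have simply supplied the routine verifications (that these maps and their inverses preserve acceptance via $\run \real \At A\lift{\At M}(\seq)$ iff $\run \real \At A(F_{\At M}(\seq))$, and that all composites in the naturality squares and coherence diagrams are again identity-on-runs, hence equal). No gaps.
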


The assignment
$\subst{(-)} : \Mealy^\op \to \Cat$
taking the alphabet $\Sigma$ to the category $\Aut_\Sigma$ and
the morphism $F : \Gamma \to_\Mealy \Sigma$
to the functor $(-)\lift{\At M_F} : \Aut_\Sigma \to \Aut_\Gamma$
is thus a pseudo-functor.

\begin{figure}
\[
\xymatrix@C=70pt{
& \At A\lift{\At M_F}
  \ar@{=}[d]
  \ar[dl]_{\eta_{\Gamma,\At A\lift{\At M_F}}} 
  \ar[dr]^{\eta_{\Sigma,\At A}\lift{\At M_F}} 
\\
  \At A\lift{\At M_F}\lift{\At M_{\Id_\Gamma}}
  \ar[r]_{\mu_{\Id_\Gamma,F,\At A}} 
& \At A\lift{\At M_F}
& \At A\lift{\At M_{\Id_\Sigma}}\lift{\At M_F}
  \ar[l]^{\mu_{F,\Id_\Sigma,\At A}} 
}
\]
\[
\xymatrix@C=70pt{
  \At A\lift{\At M_F}\lift{\At M_G}\lift{\At M_H}
  \ar[r]^{\mu_{G,F,\At A}\lift{\At M_H}} 
  \ar[d]_{\mu_{H,G,\At A\lift{\At M_F}}} 
& \At A\lift{\At M_{F\comp G}}\lift{\At M_H}
  \ar[d]^{\mu_{H,F \comp G,\At A}} 
\\
  \At A\lift{\At M_F}\lift{\At M_{G\comp H}}
  \ar[r]_{\mu_{G\comp H,F,\At A}} 
& \At A\lift{F \comp G \comp H}
}
\]
\caption{Coherence Diagrams for the Structure Maps of
$\subst{(-)} : \Mealy^\op \to \Cat$.\label{fig:fib:indexed}}
\end{figure}

\section{Conclusion}

\noindent
In this paper, we
revisited Church's synthesis via
an automata-based realizability interpretation of
an intuitionistic proof system $\SMSO$ for $\MSO$ on $\omega$-words,
and we
demonstrated that our approach is sound and complete,
in the sense of Thm.\ \ref{thm:smso:main}.
As it stands, this approach must still pay the price of the non-elementary
lower-bound for the translation of $\MSO$ formulae over finite words
to DFAs
(see Rem.\ \ref{rem:synch:nelb})
and the system $\SMSO$
is limited by its set of connectives and its restricted induction scheme.

\subsection*{Further Works}
First, the indexed structure (\S\ref{sec:fib:indexed})
induced by the substitution operation of~\S\ref{sec:fib:subst}
suggests that in our context, it may be profitable
to work in a conservative extension of $\MSMSO$,
with one function symbol for each Mealy machine
together with defining axioms of the form~(\ref{eq:synch:form:repr}).
In particular, this could help mitigate Rem.\ \ref{rem:synch:nelb}
by giving the possibility, in the synchronous comprehension scheme of $\SMSO$,
to give a term for a Mealy machine rather than the $\MSO$-formula
representing it.
We expect this to give better lower bounds \wrt\@ completeness 
(for each solvable instance of Church's synthesis, to provide proofs
with realizers of a reasonable complexity).

Second,
following the approach of~\cite{riba16dialaut}, $\SMSO$ could be extended
with primitive universal quantifications and implications
as soon as one goes
to a \emph{linear} deduction system.
Among outcomes of going 
to a linear deduction system,
following~\cite{riba16dialaut}
we expect similar proof-theoretical properties as with the usual
\emph{Dialectica} interpretation (see \eg~\cite{kohlenbach08book}),
such as realizers of linear Markov rules
and choices schemes.
Also, having primitive universal quantifications may allow us
to take benefit of the reductions of $\MSO$ to its negative fragment,
as provided by the \emph{Safraless} approaches to
synthesis~\cite{kv05focs,kpv06cav,fjr11fmsd}.

Obtaining a good handle of induction in $\SMSO$ is more complex.
One possibility to have finite-state realizers for a more general induction rule
would be to rely on saturation techniques for regular languages.
Another possibility, which may be of practical interest, is to follow
the usual Curry-Howard approach and allow for possibly infinite-state realizers.

Another direction of future work is to incorporate specific reasoning principles
on Mealy machines.
For instance, a possibility could be
to base our deduction system on a complete equational theory
for Mealy machines.

\section*{Acknowledgment}
The authors would like to thank to anonymous referees
for their thorough readings of previous versions of this paper,
which helped a lot in raising its quality.

\bibliographystyle{alpha}
\bibliography{bibliographie}

\appendix

\begin{figure}[h!]
\begin{description}
\item[$\Succ$ is the Successor for $\Lt$]
\[
\forall x,y
\big[
\Succ(x,y)
\quad\longliff\quad
\big(x \Lt y ~\land~
\lnot \exists z(x \Lt z \Lt y) \big)
\big]
\]

\item[Strict Linear Order Axioms]
\[
\lnot(x \Lt x)
\qquad\qquad
(x \Lt y \Lt z \limp x \Lt z)
\qquad\qquad
(x \Lt y ~\lor~ x \Eq y ~\lor~ y \Lt x)
\]

\item[Predecessor and Unboundedness Axioms]
\[
\forall x
\big[
\exists y(y \Lt x)
~~\longlimp~~
\exists y \Succ(y,x)
\big]
\qquad\qquad
\forall x\exists y(x \Lt y)
\]
\end{description}
\caption{The Arithmetic Axioms of~\cite{riba12ifip}.\label{fig:app:prelim:axarith}}
\end{figure}

\section{Completeness of \texorpdfstring{$\MSO$}{MSO} (Thm.\ \ref{thm:mso:compl})}%
\label{sec:app:prelim}

In this Appendix, we provide the missing details
to deduce the completeness of our axiomatization of $\MSO$
(Thm.\ \ref{thm:mso:compl}) from~\cite{riba12ifip}.
The arithmetic axioms of~\cite{riba12ifip}
expressed with $\Lt$ and $\Succ$
are presented in Fig.\ \ref{fig:app:prelim:axarith},
where
\[
(x \Lt y)
\quad\deq\quad
\left(x \Leq y ~\land~ \lnot(x \Eq y)\right)
\]

\noindent
The axioms of Fig.\ \ref{fig:app:prelim:axarith} follow from
Lem.\ \ref{lem:prelim:mso:lem} (Fig.\ \ref{fig:prelim:mso:lem})
that we prove now.

\begin{proof}[Proof of Lem.\ \ref{lem:prelim:mso:lem}]
Let us recall the non-logical rules of $\MSO$ (omitting equality):
\begin{itemize}
\item $\Leq$ is a partial order:
\[
\dfrac{}{\vec\varphi \thesis x \Leq x}
\qquad
\dfrac{\vec\varphi \thesis x \Leq y \qquad \vec\varphi \thesis y \Leq z}
  {\vec\varphi \thesis x \Leq z}
\qquad
\dfrac{\vec\varphi \thesis x \Leq y \qquad \vec\varphi \thesis y \Leq x}
  {\vec\varphi \thesis x \Eq y}
\]

\item Basic $\Zero$ and $\Succ$ rules (total injective relations):
\[
\begin{array}{c}
\dfrac{}{\vec\varphi \thesis \ex y \Zero(y)}

\qquad\qquad

\dfrac{\vec\varphi  \thesis \Zero(x)
  \qquad \vec\varphi \thesis \Zero(y)}
  {\vec\varphi \thesis x \Eq y}

\\\\


\dfrac{}{\vec\varphi \thesis \ex y \Succ(x,y)}

\qquad\qquad

\dfrac{\vec\varphi \thesis \Succ(y,x)
  \qquad
  \vec\varphi \thesis \Succ(z, x)}
  {\vec\varphi \thesis y \Eq z}

\qquad\quad

\dfrac{\vec\varphi \thesis \Succ(x,y)
  \qquad \vec\varphi \thesis \Succ(x,z)}
  {\vec\varphi \thesis y \Eq z}
\end{array}
\]


\item
Arithmetic rules:
\[
\dfrac{\vec\varphi \thesis \Succ(x,y)
  \quad~~
  \vec\varphi \thesis \Zero(y)}
  {\vec\varphi \thesis \False}
\qquad~~
\dfrac{\vec\varphi \thesis \Succ(x, y)}{\vec\varphi \thesis x \Leq y}
\qquad~~
\dfrac{\vec\varphi \thesis \Succ(y,y')
  \quad~~
  \vec\varphi\thesis x \Leq y'
  \quad~~
  \vec\varphi \thesis\lnot(x \Eq y')}
  {\vec\varphi\thesis x \Leq y}
\]

\end{itemize}

\noindent
We now proceed to the proof of the properties listed in
Fig.\ \ref{fig:prelim:mso:lem}.
\begin{enumerate}
\item\label{eq:app:mso:ltirrefl}
$\thesis \lnot(x \Lt x)$

\begin{proof}
From reflexivity of equality.
\end{proof}

\item\label{eq:app:mso:lttrans}
$x \Lt y,y \Lt z \thesis x \Lt z$

\begin{proof}
We have
$x \Lt y,y \Lt z \thesis x \Leq z$
and
$x \Lt y,y \Lt z,x \Eq z \thesis \False$
by the partial order rules for $\Leq$.
\end{proof}

\item\label{eq:app:mso:succ}
$\Succ(x,y),x \Eq y \thesis \False$
\begin{proof}
By induction on $y$, we show
\[
\phi(y) \quad\deq\quad
\forall x(\Succ(x,y) \limp \lnot (x\Eq y))
\]
We have $\Zero(y) \thesis \phi(y)$ by the first arithmetic rule.
We now show $\phi(y),\Succ(y,y') \thesis \phi(y')$, that is
\[
\phi(y),\Succ(y,y'),\Succ(x,y'),x \Eq y' \thesis \False
\]
Note that
\[
\Succ(y,y'),\Succ(x,y'),x \Eq y' \thesis
x \Eq y ~\land~ y \Eq y' ~\land~ \Succ(x,y)
\]
From which follows that
\[
\phi(y), \Succ(y,y'),\Succ(x,y'),x \Eq y' \thesis \False
\qedhere
\]
\end{proof}

\item\label{eq:app:mso:unboundedness}
$\thesis \forall x\exists y (x\Lt y)$.

\begin{proof}
The basic and arithmetic rules above
ensure that every $x$ has a successor $y$, and that the successor satisfies $x \Le y$.
Thus, in combination with~(\ref{eq:app:mso:succ}), we get that $x \Lt y$.
\end{proof}

\item $\Succ(y,y'),x \Leq y, x \Eq y' \thesis \False$
\begin{proof}
We have
\[
\Succ(y,y'),x \Leq y, x \Eq y' \thesis y' \Leq y
\]
and by the partial order rules for $\Leq$
together with the second arithmetic rule, we have
\[
\Succ(y,y'),x \Leq y, x \Eq y' \thesis y' \Eq y
\]
and we conclude by~(\ref{eq:app:mso:succ}).
\end{proof}

\item\label{eq:app:mso:zeroleq}
$\Zero(x) \thesis x \Leq y$
\begin{proof}
By induction on $y$.
\end{proof}

\item\label{eq:app:mso:zeroleqzero}
$x \Leq y,\Zero(y) \thesis \Zero(x)$
\begin{proof}
By~(\ref{eq:app:mso:zeroleq}),
we have $\Zero(y) \thesis y \Leq x$
and we conclude by the partial order rule for $\Leq$.
\end{proof}

\item\label{eq:app:mso:zeromin}
$\forall y(x \Leq y) \thesis \Zero(x)$

\begin{proof}
We have
\[
\forall y(x \Leq y),\Zero(z) \thesis x \Leq z
\]

\noindent
Hence by~(\ref{eq:app:mso:zeroleqzero})
we get
\[
\forall y(x \Leq y),\Zero(z) \thesis \Zero(x)
\]
and we conclude by the basic rules for $\Zero$.
\end{proof}

\item\label{eq:app:mso:ltsucleq}
$x \Lt y,\Succ(x,x') \thesis x' \Leq y$
\begin{proof}
By induction on $y$, we show
\[
\phi(y) \quad\deq\quad \forall x,x'
\left(
x \Lt y \limp \Succ(x,x') \limp x' \Leq y
\right)
\]
First, the base case $\Zero(y) \thesis \phi(y)$
follows from the fact that
$\Zero(y), x \Lt y \thesis \False$
by~(\ref{eq:app:mso:zeroleqzero}).
For the induction step, we show
\[
\Succ(y,y'),\phi(y),x \Lt y',\Succ(x,x') \thesis x' \Leq y'
\]
We use the excluded middle on $x \Eq y$, and we are left with showing
\[
\Succ(y,y'),\phi(y),x \Lt y',\lnot(x \Eq y),\Succ(x,x') \thesis x' \Leq y'
\]
But by the arithmetic rules,
$\Succ(y,y'),x \Lt y' \thesis x \Leq y$,
so that
\[
\Succ(y,y'),\phi(y),x \Lt y',\lnot(x \Eq y),\Succ(x,x') \thesis x \Lt y
\]
But
\[
\phi(y),x \Lt y,\Succ(x,x') \thesis x' \Leq y
\]
and we are done.
\end{proof}

\item\label{eq:app:mso:sucmon}
$x \Leq y,\Succ(x,x'),\Succ(y,y') \thesis x' \Leq y'$
\begin{proof}
By induction on $z$ we show
\[
\phi(z) \quad\deq\quad
\forall x,x',y\left(
x \Leq y \limp \Succ(x,x') \limp \Succ(y,z) \limp x' \Leq z
\right)
\]
We trivially have
$\Zero(z)\thesis \phi(z)$.
We now show $\Succ(z,z'),\phi(z) \thesis \phi(z')$,
that is
\[
\Succ(z,z'),\phi(z), x \Leq y, \Succ(x,x'),\Succ(y,z') \thesis x' \Leq z'
\]
By the basic $\Zero$ and $\Succ$ rules, this amounts to show
\[
\phi(z), x \Leq z, \Succ(x,x'),\Succ(z,z') \thesis x' \Leq z'
\]
Now, using the excluded middle on $x \Eq z$, we are left with showing
\[
\phi(z), \Succ(x,x'),\Succ(z,z'),x \Lt z \thesis x' \Leq z'
\]
But by~(\ref{eq:app:mso:ltsucleq})
we have
\[
x \Lt z,\Succ(x,x') \thesis x' \Leq z
\]
and we are done.
\end{proof}

\item\label{eq:app:mso:ltleq}
$\thesis
\forall x \forall y
\left[
y \Lt x \quad\longliff\quad \exists z(y \Leq z ~\land~ \Succ(z,x))
\right]$
\begin{proof}
The right-to-left direction follows from~(\ref{eq:app:mso:succ}).
For the left-to-right direction, by induction on $x$,
we show
\[
\phi(x) \quad\deq\quad
\forall y\left(
y \Lt x ~\longlimp~ \exists z(y \Leq z ~\land~ \Succ(z,x))
\right)
\]

\noindent
For the base case
$\Zero(x)\thesis \phi(x)$,
by~(\ref{eq:app:mso:zeroleqzero})
we have
$\Zero(x), y \Leq x \thesis \Zero(y)$
and we conclude by the basic rules for $\Zero$.
For the induction step, we have to show
\[
\Succ(x,x'),\phi(x),y \Lt x' \thesis \exists z(y \Leq z ~\land~ \Succ(z,x'))
\]
By the last arithmetic rule,
\[
\Succ(x,x'),y \Lt x' \thesis y \Leq x
\]
and we are done.
\end{proof}

\item\label{eq:app:mso:ltlin}
$\thesis x \Lt y \lor x \Eq y \lor y \Lt x$

\begin{proof}
By induction on $x$, we show
\[
\phi(x) \quad\deq\quad
\forall y
\left(
x \Lt y \lor x \Eq y \lor y \Lt x
\right)
\]
The base case $\Zero[x] \thesis \phi(x)$
follows from~(\ref{eq:app:mso:zeroleq}).
For the induction step, we have to show
\[
\Succ(x,x'),\phi(x) \thesis
\forall y(x' \Lt y \lor x' \Eq y \lor y \Lt x')
\]
By induction on $y$ we show
$\Succ(x,x'),\phi(x) \thesis \psi[y,x']$
where
\[
\psi[y,x']
\quad\deq\quad
x' \Lt y \lor x' \Eq y \lor y \Lt x'
\]
The base case follows again from~(\ref{eq:app:mso:zeroleq}).
For the induction step, we have to show
\[
\Succ(x,x'),\Succ(y,y'),\phi(x),\psi[y,x'] \thesis
\forall y(x' \Lt y' \lor x' \Eq y' \lor y' \Lt x')
\]
and we are done since~(\ref{eq:app:mso:sucmon})
gives
\[
\forall x,x',y,y'\left(
x \Lt y \limp \Succ(x,x') \limp \Succ(y,y') \limp x' \Lt y'
\right)
\qedhere
\]
\end{proof}

\item\label{eq:app:mso:sucax}
$\thesis
\forall x,y
\left[
\Succ(x,y)
\quad\longliff\quad
\left(x \Lt y ~\land~
\lnot \exists z(x \Lt z \Lt y)\right)
\right]$

\begin{proof}
For the left-to-right direction, thanks to~(\ref{eq:app:mso:succ})
we are left with showing
\[
\Succ(x,y),x \Lt z,z \Lt y \thesis \False
\]

\noindent
But the last arithmetic rule gives $z \Leq x$ from $\Succ(x,y)$ and $z \Lt y$,
which together with $x \Lt z$ gives $z \Eq x$ by antisymmetry of $\Leq$,
contradicting $x \Lt z$.

Conversely, assume that $x \Lt y$ without any intermediate $z$.
By the basic rules for $\Succ$, we have $\Succ(x,z)$ for some $z$.
Note that $x \Lt z$ by~(\ref{eq:app:mso:succ}).
Since $x \Lt y$ it follows from~(\ref{eq:app:mso:ltsucleq})
that $z \Leq y$.
But this implies $z \Eq y$ as $\lnot(z \Lt y)$.
\end{proof}
\end{enumerate}

\noindent
This concludes the proof of Lem.\ \ref{lem:prelim:mso:lem}.
\end{proof}

\noindent
The linear order axioms
((\ref{eq:app:mso:ltirrefl}),
(\ref{eq:app:mso:lttrans}),
(\ref{eq:app:mso:ltlin})),
the successor axiom~(\ref{eq:app:mso:sucax}),
the unboundedness~(\ref{eq:app:mso:unboundedness})
and predecessor~(\ref{eq:app:mso:ltleq}) axioms are thus proved in our axiomatic.

Finally, we have to prove Lem.\ \ref{lem:prelim:mso:ind},
namely that strong induction is derivable in $\MSO$.
The proof holds no surprise.

\begin{proof}[Proof of Lem.\ \ref{lem:prelim:mso:ind}]
We have to show
\[
\forall x \big( \forall y(y \Lt x ~\limp~ X y) ~\longlimp~ X x \big)
~\thesis~ \all x X x
\]
By induction on $x$ we show
\[
\forall x \big( \forall y(y \Lt x ~\limp~ X y) ~\longlimp~ X x \big)
~\thesis~ \phi(x)
\]
where
\[
\phi(x)
~~\deq~~
\forall y(y \Leq x ~\limp~ X y)
\]

\noindent
The base case
\[
\forall x \big( \forall y(y \Lt x ~\limp~ X y) ~\longlimp~ X(x) \big)
,\, \Zero(x) ~\thesis~ \phi(x)
\]
is trivial since by~(\ref{eq:app:mso:zeroleq}),
\[
\Zero(x) \thesis \lnot \exists y(y \Lt x)
\]

\noindent
For the induction step, we have to show
\[
\forall x\big( \forall y(y \Lt x ~\limp~ X y) ~\longlimp~ X x \big)
,\,
\Succ(x,x')
,\,
\phi(x)
,\,
z \Leq x'
~\thesis~ X z
\]

\noindent
Notice that $\phi(x)$ is equivalent to
$\phi'(x') \deq \forall y(y \Lt x' \limp X y)$ thanks to~(\ref{eq:app:mso:ltleq}).
By~(\ref{eq:app:mso:ltlin}),
we have three subcase according to: 
\[
z \Lt x'
~\lor~
z \Eq x'
~\lor~
x' \Lt z
\]

\noindent
The first case enables us to use $\phi'(x')$ directly,
and the second one follows from the assumption
$\forall x(\forall y(y \Lt x \limp X y) \limp X x)$ together with $\phi'(x')$.
The last one leads to a contradiction using the antisymmetry of $\Leq$.
\end{proof}

\section{Internally Bounded Formulae (\S\ref{sec:synch:synchbounded})}%
\label{sec:app:synch:synchbounded}

We prove here the Splitting Lemma~\ref{lem:synch:splitting},
used in the proof of Thm.\ \ref{thm:synch:synchbounded}.
We consider formulae over the vocabulary
of~\cite{riba12ifip}, that is formulae given by the grammar
\[
\varphi,\psi \in \Lambda \quad\bnf\quad
\True \gs \False \gs
x \In X \gs x \Lt y
 \gs \lnot \varphi \gs \varphi \lor \psi \gs \ex X \varphi \gs \ex x \varphi
\]
Following~\S\ref{sec:prelim:ax}
(see also~\S\ref{sec:app:prelim}),
defining
the atomic formulae $\Eq$, $\Succ(-,-)$, $\Leq$ and $\Zero(-)$ as
\[
\begin{array}{r !{\quad\deq\quad} l}
  x \Eq y
& \all X(x \In X ~\limp~ y \In X)
\\
\Succ(x,y)
&
\left(x \Lt y ~\land~
\lnot \exists z(x \Lt z \Lt y)\right)
\\
x \Leq y
& (x \Lt y) \lor (x \Eq y)
\\
\Zero(x)
&
\all y(x \Leq y)
\end{array}
\]
we obtain for each formula in the sense of Fig.\ \ref{fig:mso:form}
an equivalent formula in $\Lambda$
(\wrt\@ the standard model $\Std$). 
%
%
Note that the Transfer
Lemma~\ref{lem:synch:transf} gives in particular that if
$\seq_0,\seq_1 \in \two^\omega \iso \Po(\NN)$ are disjoint, then
\begin{equation}
\label{eq:app:synch:sep}
\Std\models\quad
\exists X(\varphi_0\restr \seq_0 ~\land~ \varphi_1\restr \seq_1)
~~\longliff~~
\big( \exists X(\varphi_0\restr \seq_0) ~\land~ \exists X(\varphi_1 \restr \seq_1) \big)
\end{equation}

\begin{lem}[Splitting (Lem.\ \ref{lem:synch:splitting})]%
\label{lem:app:synch:splitting}
Let $\psi$ be a formula in $\Lambda$ and let
$z$ be an individual variable.
For every set of individual variables $V$ with $z \in V$,
one can produce a natural number $N$ and
two matching sequences of length $N$ of left formulae
${(L_j)}_{j < N}$
and right formulae ${(R_j)}_{j < N}$
such that the following holds:
\begin{itemize}
\item For every $j < N$,
$\FV^\indiv(L_j) \sle \FV^\indiv(\psi) \cap V$ and
$\FV^\indiv(R_j) \sle \FV^\indiv(\psi) \setminus V$.

\item
If $\FV^\indiv(\psi) = \{\vec x,z,\vec y\}$
with $V\cap\FV^\indiv(\psi) = \{\vec x,z\}$,
then
for all $n \in \NN$,
all $\vec a \leq n$ and all $\vec b > n$, we have
\[
\Std\models\quad
\psi[\vec a/\vec x,n/z,\vec b/\vec y]
~~\longliff~~
\bigdisj_{j < N}
L_j[\vec a/\vec x,n/z]\restr{[ - \Leq n ]}
~\land~
R_j[\vec b/\vec y]\restr{[ - \Gt n]}
\]
\end{itemize}
\end{lem}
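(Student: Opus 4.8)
The plan is to prove the statement by structural induction on $\psi$, in the style of the composition method. We work over $\Std$ throughout, with formulae taken in the vocabulary $\Lambda$ of~\cite{riba12ifip} described above, so that only the atoms $\True$, $\False$, $x \In X$ and $x \Lt y$ need to be treated in the base case; we also adopt the conventions that an empty disjunction is $\False$ and an empty conjunction is $\True$. For the base case: for $\True$ (resp.\ $\False$) take $N = 1$ with $L_0 = R_0 = \True$ (resp.\ $N = 0$). For $w \In X$, the variable $w$ is interpreted by a value $\leq n$ if $w \in V$ and by a value $> n$ otherwise, so take $N = 1$ with $(L_0,R_0) = (w \In X, \True)$ or $(L_0,R_0) = (\True, w \In X)$ accordingly, using that relativization does not affect atoms. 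For $x' \Lt y'$ we split into four cases according to membership in $V$: if both variables are in $V$ take $(L_0,R_0) = (x' \Lt y', \True)$; if neither is, take $(L_0,R_0) = (\True, x' \Lt y')$; if only $x' \in V$ the atom is always true under the given substitution, so take $L_0 = R_0 = \True$; and if only $y' \in V$ it is always false, so take $N = 0$.

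For the inductive step, disjunction is handled by concatenating the sequences obtained for the two immediate subformulae, using that relativization commutes with $\lor$. For $\lnot\varphi$, I would apply the induction hypothesis to $\varphi$, negate the resulting $\bigdisj_{j<N}(L_j\restr{[ - \Leq n ]} \land R_j\restr{[ - \Gt n]})$, and distribute the ensuing conjunction of disjunctions into disjunctive normal form: each choice function $f \colon \{0,\dots,N-1\} \to \{0,1\}$ yields one disjunct which, since relativization commutes with $\lnot$ and $\land$, has the required shape with left formula $\bigconj_{f(j)=0}\lnot L_j$ and right formula $\bigconj_{f(j)=1}\lnot R_j$; this step is responsible for the (exponential) blow-up of $N$ but is otherwise immediate. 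For $\ex x\,\varphi$, take $x$ fresh (so $x \notin V$) and split the range of $x$ into $[0,n]$ and the values $> n$, rewriting $\ex x\,\varphi$ as $(\ex {x \Leq z}\,\varphi) \lor (\ex {x \Gt z}\,\varphi)$ under the substitution of $n$ for $z$; then apply the induction hypothesis to $\varphi$ with left-variable set $V \cup \{x\}$ for the first disjunct and with $V$ (so $x$ becomes a right variable) for the second, and push the bounded quantifier on $x$ through the disjunction and into the left, resp.\ right, component --- which is licit because $x$ does not occur in the right, resp.\ left, formulae produced by the induction hypothesis. The new left formulae are $\ex {x \Leq z}\,L_j$ and the new right formulae are $\ex x\,R_j$; the free-variable conditions persist, the occurrence of $z$ introduced by the bound $x \Leq z$ being absorbed since $z \in V$.

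The step I expect to be the main obstacle is the set quantifier $\ex X\,\varphi$, which is the heart of the composition-method argument. Applying the induction hypothesis to $\varphi$ with the same $V$ (a set variable does not enter $V$), $\varphi$ becomes $\bigdisj_{j<N} L_j\restr{[ - \Leq n ]} \land R_j\restr{[ - \Gt n]}$. The key point is that, under the substitution of $n$ for $z$, the formula $L_j\restr{[ - \Leq n ]}$ depends only on $X$ restricted to $[0,n]$: all its individual quantifiers are bounded by $n$ and all its free individual variables lie in $\FV^\indiv(\psi) \cap V$, hence are interpreted by values $\leq n$; symmetrically $R_j\restr{[ - \Gt n]}$ depends only on the restriction of $X$ to the values $> n$. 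Since these two index sets are disjoint, the separation property~(\ref{eq:app:synch:sep}), itself a consequence of the Transfer Lemma~\ref{lem:synch:transf}, gives $\ex X(L_j\restr{[ - \Leq n ]} \land R_j\restr{[ - \Gt n]}) \longliff \ex X(L_j\restr{[ - \Leq n ]}) \land \ex X(R_j\restr{[ - \Gt n]})$. Taking the new left formulae to be $\ex X\,L_j$ and the new right formulae $\ex X\,R_j$, and using that relativization commutes with $\ex X$, completes this case; the free-variable conditions hold because quantifying a set variable leaves the free individual variables unchanged. Finally, every step of the construction is plainly effective, which yields the computability of $N$ and of the two sequences from $\psi$ and $V$.
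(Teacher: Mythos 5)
Your proof is correct and follows essentially the same route as the paper's: the same induction on $\psi$, with the same four-way case split for $x \Lt y$, concatenation of sequences for $\lor$, a DNF expansion indexed by choice functions for $\lnot$, a two-fold application of the induction hypothesis (once with $V\cup\{x\}$ and once with $V$) for $\ex x$, and the separation property~(\ref{eq:app:synch:sep}) derived from the Transfer Lemma for $\ex X$. The only cosmetic divergence is your explicitly bounded quantifier $\exists {x \Leq z}$ in the left formulae of the $\ex x$ case, which is redundant (the relativization $\restr{[-\Leq n]}$ in the conclusion already bounds it) and is better replaced by a plain $\ex x L_j$ so that the condition $\FV^\indiv(L_j) \sle \FV^\indiv(\psi) \cap V$ holds even when $z$ is not free in $\psi$.
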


\begin{proof}
The proof proceeds by induction on $\psi$.
The cases of $\True$ and $\False$ are trivial and omitted.
\begin{description}
\item[Case of $(x \Lt y)$]
We take $N \deq 1$ and
we define suitable left and right formulae according to $V$.
In each case the choice of $z$ is irrelevant.
\begin{itemize}
\item If $x, y \in V$, then $L_0 \deq \psi$ and $R_0 \deq \True$.
\item If $x,y \notin V$, then $L_0 \deq \True$ and $R_0 \deq \psi$.
\item If $x \in V$ and $y \notin V$, then $L_0 \deq R_0 \deq \True$.
\item If $y \in V$ and $x \notin V$, then $L_0 \deq R_0 \deq \False$.
\end{itemize}

\item[Case of $(x \In X)$]
We take $N \deq 1$.
Then one of the produced formula is $\psi$ and the other one
is $\True$ according to whether $x \in V$ or not.

\item[Case of $(\psi \lor \psi')$]
Let ${(L_j,R_j)}_{j<N}$ and ${(L'_k,R'_k)}_{k<N'}$ be obtained
by applying the induction hypothesis on $\psi$ and $\psi'$ respectivelly.
Then for $\psi \lor \psi'$ we take $N'' \deq N+N'$ and
the sequence ${(L''_i,R''_i)}_{i<N''}$ given by
\[
\begin{array}{l c l !{\qquad} l c l !{\qquad} l}
  L''_j
& \deq
& L_j
& R''_j
& \deq
& R_j
& \text{(for $j<N$)}
\\
  L''_{N+k}
& \deq
& L'_k
& R''_{N+k}
& \deq
& R'_k
& \text{(for $k<N'$)}
\\
\end{array}
\]

\item[Case of $(\ex x \psi)$]
Note that we can assume $x \notin V$.
We apply the induction hypothesis on $\psi$ twice:
once with $V \cup\{x\} $ and once with $V$.
This gives sequences
resp.\ ${(L_j,R_j)}_{j<N}$
and
${(L'_k,R'_k)}_{k<N'}$.
For $\ex x \psi$ we take
$N'' \deq N+N'$ and
the sequence ${(L''_i,R''_i)}_{i<N''}$ given by
\[
\begin{array}{l c l !{\qquad} l c l !{\qquad} l}
  L''_j
& \deq
& \ex x L_j
& R''_j
& \deq
& R_j
& \text{(for $j<N$)}
\\
  L''_{N+k}
& \deq
& L'_k
& R''_{N+k}
& \deq
& \ex x R'_k
& \text{(for $k<N'$)}
\\
\end{array}
\]

\noindent
The disjunction is then seen to be equivalent to $\ex x \psi$
by making a case analysis over whether $x \Leq n$ holds.

\item[Case of $(\ex X \psi)$]
Let ${(L_j,R_j)}_{j<N}$ be obtained by induction hypothesis on $\psi$.
Then for $\ex X \psi$ we keep the same $N$
and it directly follows from~(\ref{eq:app:synch:sep}) that we can take
the sequence ${(L'_j,R'_j)}_{j<N}$ given by
$L'_j \deq \exists X L_j$ and
$R'_j \deq \exists X R_j$.

\item[Case of $(\lnot \psi)$]
By induction hypothesis, we have a
natural number $N$ and two sequences of formulae
${(L_j, R_j)}_{j < N}$
such that
$\psi \longliff
\bigdisj_{j < N}
L_j\restr{[- \Leq n]} \land R_j \restr{[- \Gt n]}$.
Hence all we need to do is to add the negation,
push it through the disjunction and conjunctions using De Morgan laws and
make the disjuncts commute over the conjunction in the obtained formula.
More explicitly (leaving the parameters implicit), we have:
\begin{align*}
  \lnot \psi
&\hspace{2ex}\longliff\hspace{2ex} \bigconj\limits_{j < N}
  \lnot L_j \restr{[- \Leq n]} \lor \lnot R_j \restr{[- \Gt n]}
\\
&\hspace{2ex}\longliff\hspace{2ex} \bigdisj\limits_{f \in \two^{\{0 , \dots, N - 1 \}}}
  {\bigconj\limits_{j \in f^{-1}(0)} \lnot L_j} \restr{[- \Leq n]}
   \land {\bigconj\limits_{j \in f^{-1}(1)} \lnot R_j\restr{[- \Gt n]}}
    \tag*{\qedhere}
\end{align*}
\end{description}
\end{proof}

\begin{rem}
Note that there is a combinatorial explosion in the case of $\lnot \psi$
in Lem.~\ref{lem:app:synch:splitting} since $N_{\lnot \psi} = 2^{N_\psi}$.
It follows that the sizes of the formulae computed
in Lem.\ \ref{lem:app:synch:splitting} and the subsequent
Thm.\ \ref{thm:synch:synchbounded} are non-elementary in the size of $\psi$.
\end{rem}

\section{Automata for Atomic Formulae (\S\ref{sec:real:real})}%
\label{sec:app:real:atom}

We give below the automaton $\Sing$ (of~\S\ref{sec:real:real})
and automata for the atomic
formulae of Fig.\ \ref{fig:mso:form}.
These automata are presented as deterministic Büchi automata
(with accepting states circled).
As uniform automata, each of them has set of moves~$\one$.
Note that automata for atomic formulae involving individual variables
do not detect if the corresponding inputs actually represent natural numbers.
This is harmless, since all statements of~\S\ref{sec:real} actually assume
streams representing natural numbers to be singletons,
and since in Fig.\ \ref{fig:real:form},
quantifications over individuals are relativized to $\Sing$.
\begin{itemize}
\item $\bm{\Sing}$ \textbf{:}
\[
\begin{tikzpicture}[->,node distance=3.5cm,every state/.style={inner sep=2}]
  \node[state,initial,initial text=] (A) {$0$} ;
  \node[state,accepting] (B) [right of=A] {$1$};
  \node[state] (C) [right of=B] {$2$};

\path (A) edge [loop above] node[above, style={font=\footnotesize}] {$0$} (A) ;
\path (A) edge node[above, style={font=\footnotesize}] {$1$} (B) ;
\path (B) edge [loop above] node[above, style={font=\footnotesize}] {$0$} (B) ;
\path (B) edge node[above,style={font=\footnotesize}] {$1$} (C) ;
\path (C) edge [loop right] node[right, style={font=\footnotesize}] {$*$} (C) ;
\end{tikzpicture}
\]

\item $\bm{(x_1 \Eq x_2)}$ \textbf{:}
\[
\begin{tikzpicture}[->,node distance=3.5cm,every state/.style={inner sep=2}]
  \node[state,accepting,initial,initial text=] (A) {$0$} ;
  \node[state] (B) [right of=A] {$1$};

\path (A) edge [loop above] node[above, style={font=\footnotesize}]
  {$(i,i)$} (A) ;

\path
  (A) edge node[above, style={font=\footnotesize}] {$(i,1-i)$} (B) ;

\path (B) edge [loop right] node[right, style={font=\footnotesize}]
  {$(*,*)$} (B) ;
\end{tikzpicture}
\]

\item \bm{$(x_1 \In X_1)$} \textbf{:}
\[
\begin{tikzpicture}[->,node distance=3.5cm,every state/.style={inner sep=2}]
  \node[state,initial,initial text=] (A) {$0$} ;
  \node[state,accepting] (B) [right of=A] {$1$};

\path (A) edge [loop above] node[above, style={font=\footnotesize}]
  {$(0,*) (1,0)$} (A) ;

\path
  (A) edge node[above, style={font=\footnotesize}] {$(1,1)$} (B) ;

\path (B) edge [loop right] node[right, style={font=\footnotesize}]
  {$(*,*)$} (B) ;
\end{tikzpicture}
\]

\item \bm{$(x_1 \Leq x_2)$} \textbf{:}
\[
\begin{tikzpicture}[->,node distance=3.5cm,every state/.style={inner sep=2}]
  \node[state,initial,initial text=] (A) {$0$} ;
  \node[state] (B) [right of=A] {$1$};
  \node[state,accepting] (C) [right of=B] {$2$};

\path (A) edge [loop above] node[above, style={font=\footnotesize}]
  {$(0,*)$} (A) ;

\path (A) edge node[above, style={font=\footnotesize}] {$(1,0)$} (B) ;

\path (A) edge[bend right] node[above, style={font=\footnotesize}] {$(1,1)$} (C) ;

\path (B) edge [loop above] node[above, style={font=\footnotesize}]
  {$(*,0)$} (B) ;

\path (B) edge node[above,style={font=\footnotesize}] {$(*,1)$} (C) ;

\path (C) edge [loop right] node[right, style={font=\footnotesize}]
  {$(*,*)$} (C) ;
\end{tikzpicture}
\]

\item \bm{$\Succ(x_1,x_2)$} \textbf{:}
\[
\begin{tikzpicture}[->,node distance=2cm,every state/.style={inner sep=2}]
  \node[state,initial,initial text=] (A) {$0$} ;
  \node (a) [right of=A] {};
  \node[state] (B) [above right of=a] {$1$};
  \node[state] (C) [below right of=a] {$2$};
  \node (d) [below right of=B] {};
  \node[state,accepting] (D) [right of=d] {$3$};

\path (A) edge [loop above] node[above, style={font=\footnotesize}]
  {$(0,*)$} (A) ;

\path (A) edge node[above, style={font=\footnotesize}] {$(1,0)$} (B) ;

\path (A) edge[right] node[above, style={font=\footnotesize}] {$(1,1)$} (C) ;

\path (B) edge node[above,style={font=\footnotesize}] {$(*,1)$} (D) ;

\path (B) edge node[right, style={font=\footnotesize}] {$(*,0)$} (C) ;

\path (C) edge [loop right] node[right, style={font=\footnotesize}]
  {$(*,*)$} (C) ;

\path (D) edge [loop right] node[right, style={font=\footnotesize}]
  {$(*,*)$} (D) ;
\end{tikzpicture}
\]

\item \bm{$\Zero(x_1)$} \textbf{:}
\[
\begin{tikzpicture}[->,node distance=2cm,every state/.style={inner sep=2}]
  \node[state,initial,initial text=] (A) {$0$} ;
  \node (a) [right of=A] {};
  \node[state,accepting] (B) [above right of=a] {$1$};
  \node[state] (C) [below right of=a] {$2$};

\path (A) edge node[above, style={font=\footnotesize}] {$1$} (B) ;

\path (A) edge[right] node[above, style={font=\footnotesize}] {$0$} (C) ;

\path (B) edge [loop right] node[right, style={font=\footnotesize}] {$*$} (B) ;

\path (C) edge [loop right] node[right, style={font=\footnotesize}]
  {$*$} (C) ;
\end{tikzpicture}
\]

\item \textbf{$\True$ and $\False$:}
\begin{center}
\begin{tikzpicture}[->,node distance=3.5cm,every state/.style={inner sep=2},
  baseline=(current bounding box.center)]
\node[state,accepting,initial,initial text=] (A) {$0$} ;
\path (A) edge [loop right] node[right, style={font=\footnotesize}] {$*$} (A) ;
\end{tikzpicture}
\qquad\qquad and\qquad\qquad
\begin{tikzpicture}[->,node distance=3.5cm,every state/.style={inner sep=2},
  baseline=(current bounding box.center)]
\node[state,initial,initial text=] (A) {$0$} ;
\path (A) edge [loop right] node[right, style={font=\footnotesize}] {$*$} (A) ;
\end{tikzpicture}
\end{center}
\end{itemize}

\end{document}